\keywords{  Petri nets,
  pushdown vector addition systems,
  weak computation,
  fast-growing functions,
  pumping lemma}
\newenvironment{theorem}{\begin{thm}}{\end{thm}}
\newenvironment{lemma}{\begin{lem}}{\end{lem}}
\newenvironment{corollary}{\begin{cor}}{\end{cor}}
\newenvironment{proposition}{\begin{prop}}{\end{prop}}
\newenvironment{claim}{\begin{clm}}{\end{clm}}
\newenvironment{definition}{\begin{defi}}{\end{defi}}
\newenvironment{example}{\begin{exa}}{\end{exa}}
\newenvironment{remark}{\begin{rem}}{\end{rem}}
\renewcommand{\vec}[1]{\mathbf{#1}}\newcommand{\V}[2]{\bigl(\begin{array}{@{}c@{}}\scriptstyle #1\\[-.5em] \scriptstyle #2\end{array}\bigr)}\newcommand{\T}[3]{\Bigl(\begin{array}{@{}c@{}}\scriptstyle #1\\[-.5em] \scriptstyle #2\\[-.5em] \scriptstyle #3\end{array}\Bigr)}\newcommand{\Ts}[3]{\Bigl(\begin{array}{@{}c@{}}\scriptstyle #1\\[-.4em] \scriptstyle #2\\[-.5em] \scriptstyle #3\end{array}\Bigr)}\newcommand{\Nat}{{\mathbb{N}}}
\newcommand{\Zed}{{\mathbb{Z}}}
\newcommand{\egdef}{\stackrel{\mbox{\begin{tiny}def\end{tiny}}}{=}}
\newcommand{\equivdef}{\stackrel{\mbox{\begin{tiny}def\end{tiny}}}{\Leftrightarrow}}
\newcommand{\vzero}{{\vec{0}}}
\newcommand{\vunit}{{\vec{i}}}
\newcommand{\va}{{\vec{a}}}
\newcommand{\vb}{{\vec{b}}}
\newcommand{\vc}{{\vec{c}}}
\newcommand{\vd}{{\vec{d}}}
\newcommand{\ve}{{\vec{e}}}
\newcommand{\vp}{{\vec{p}}}
\newcommand{\vx}{{\vec{x}}}
\newcommand{\vy}{{\vec{y}}}
\newcommand{\Pos}{{\mathit{Pos}}}
\newcommand{\setN}{\Nat}
\newcommand{\setZ}{\Zed}
\newcommand{\tuple}[1]{\langle {#1} \rangle}
\renewcommand{\root}{\operatorname{root}}
\newcommand{\child}{\operatorname{adorn}}
\newcommand{\dire}[1]{\operatorname{dir}(#1)}
\newcommand{\step}{\xLongrightarrow{}}
\newcommand{\stepstar}{\xLongrightarrow{*}}
\newcommand{\NTfont}[1]{{\mathtt{#1}}}\newcommand{\NTF}{\NTfont{F}}
\newcommand{\NTS}{\NTfont{S}}
\newcommand{\REC}{\NTfont{Rec}}
\newcommand{\LIM}{\NTfont{Lim}}
\newcommand{\REST}{\NTfont{Pop}}
\newcommand{\FO}{\mathsf{FO}}
\begin{document}

\title[On Functions Weakly Computable by Pushdown Petri Nets]{On Functions Weakly Computable by Pushdown Petri Nets and Related Systems}

\author[J.~Leroux]{Jérôme Leroux}
\address{LaBRI, Univ.\ Bordeaux \& CNRS, France}
\thanks{  This work was partly supported by grant ANR-17-CE40-0028 of
the French National Research Agency ANR (project BRAVAS), and by the Indo-French CNRS UMI 2000 ReLaX.
}

\author[M.~Praveen]{M Praveen}
\address{Chennai Mathematical Institute, India}

\author[Ph.~Schnoebelen]{Philippe Schnoebelen}
\address{LSV, ENS Paris-Saclay \& CNRS, France}

\author[G.~Sutre]{Grégoire Sutre}
\address{LaBRI, Univ.\ Bordeaux \& CNRS, France}

\begin{abstract}
We consider numerical functions weakly computable by
grammar-controlled vector addition systems (GVASes, a variant of
 pushdown Petri nets).
GVASes can weakly compute all fast growing functions $F_\alpha$ for
$\alpha<\omega^\omega$, hence they are computationally more powerful than
standard vector addition systems.
On the other hand they cannot weakly compute the inverses $F_\alpha^{-1}$
or indeed any  sublinear function.
The proof relies on a pumping lemma for runs of GVASes that is of independent interest.
\end{abstract}

\maketitle
 
\section{Introduction}
\label{sec-intro}

Pushdown Petri nets are Petri nets extended with a pushdown
stack.
They have been used to model asynchronous programs~\cite{SenV06} and,
more generally,
recursive programs with integer variables~\cite{AtigG11}.
They sometimes appear under a different but
essentially equivalent guise:
stack/pushdown/context-free vector addition systems~\cite{lazic2013b,leroux2014,leroux2015d},
partially blind multi-counter machines~\cite{Greibach78a} with a pushdown stack,
etc.
It is not yet known whether reachability is decidable for pushdown
Petri nets and this is one of the major open problems in computer
science.  However, a series of recent results improved our
understanding of the computational power of these models:
coverability, reachability and boundedness are \textsc{Tower}-hard~\cite{lazic2013b,lazic2017},
and boundedness is solvable in hyper-Ackermannian time~\cite{leroux2014}.

\smallskip

With the present article, we contribute to this line of work.  We
recall \emph{Grammar-Controlled Vector Addition Systems}~\cite{leroux2015d},
or GVAS,
a variant model, close to Pushdown Petri nets, where the pushdown
stack is replaced by a context-free restriction on the firing of
rules. The runs are now naturally organized in a derivation tree, and
the stack is not actually present in the configurations: this leads to
a simplified mathematical treatment, where the usual monotonicity
properties of VASes can be put to use.

\bigskip

As a step towards understanding the expressive power of these GVASes,
we consider the number-theoretical functions that are \emph{weakly
computable} in this model. Restricting to weakly computing a numerical
function is a natural idea when dealing with models like VASes and
GVASes that lack zero-tests, or, more precisely, that cannot initiate
a given action on the condition that a counter's value is zero, only
on the condition that it is not zero.

\smallskip

This notion has been used since the early days of Petri nets and has
proved very useful in hardness or impossibility proofs: For Petri nets
and VASSes, the undecidability of equivalence problems, and the
Ackermann-hardness of the same problems for bounded systems, have been
proved using the fact that multivariate polynomials with positive
integer coefficients ---aka positive Diophantine polynomials--- and,
respectively, the fast-growing functions $(F_i)_{i\in\Nat}$ in the
Grzegorczyk hierarchy, are all weakly
computable~\cite{hack76b,Mayr-Meyer81,jancar2001}.
More recently, the nonelementary complexity lower bound for VASS
reachability is obtained thanks to a uniform (polynomial size) family
of systems
computing (exactly) $n$-EXP(2) from $n$~\cite{leroux-stoc19}.

\smallskip

The above results rely on showing how \emph{some} useful functions are
weakly computable by Petri nets and VASSes.
But not much is known about exactly which functions are weakly computable or not.
It is known that all such functions are monotonic. They are all
primitive-recursive. The class of weakly computable functions is closed under
composition.

In this article, we show that functions weakly computable by GVASes go
beyond  those weakly computable by VASSes, in particular we show how
to weakly compute the Fast Growing $(F_\alpha)$ for all $\alpha<\omega^\omega$.

\smallskip

A folklore conjecture states that the inverses of the
fast-growing functions are not weakly computable by Petri nets. It is stated as fact
in~\cite[p.252]{phs-IPL2002} but no reference is given. In this article, 
we settle the issue by proving that any unbounded function weakly computable by
Petri nets and more generally by GVASes
is in $\Omega(x)$, i.e., it eventually dominates
$c\cdot x$ for some constant $c>0$. Thus any function that is
sublinear, like $x\mapsto \lfloor \sqrt{x} \rfloor$, or $x\mapsto
\lfloor \log x \rfloor$, is  not weakly computable by GVASes.
The proof technique is interesting in its own right: it relies on a
pumping lemma on runs of GVASes that could have wider applications.
This pumping lemma follows from a well-quasi-ordering on the set of runs
that further directs it.

\subsection*{Beyond Petri nets and VASSes.}
Petri nets and VASSes are a classic example of well-structured
systems~\cite{abdulla2000c,finkel2001}. In recent years, weakly
computing numerical functions has proved to be a fundamental tool for
understanding the expressive power and the complexity of some families
of well-structured systems that are more powerful than Petri nets and
VASSes~\cite{phs-mfcs2010,HSS-lics2012,HSS-lmcs}.
For such systems,
the hardness proofs rely on weakly computing fast-growing
functions $(F_\alpha)_{\alpha\in\textit{Ord}}$ that extend
Grzegorczyk's hierarchy. These hardness proofs also crucially  rely on
weakly computing the inverses of the $F_\alpha$'s.

\smallskip

There are several extensions of Petri nets for which reachability (or
coverability or boundedness) remains decidable: nets with nested
zero-tests~\cite{Reinhardt08}, recursive VASSes~\cite{bouajjani2012}
and Branching VASSes~\cite{demri2013}, VASSes with pointers to
counters~\cite{demri2013c}, unordered data Petri nets~\cite{datanetsfull08},
etc., and of course pushdown VASes and GVASes.
For the latter, while coverability and reachability are still
open in general, partial decidability results have been
obtained by looking at sub-classes,
namely GVASes with finite-index grammars~\cite{AtigG11} and
GVASes of dimension one~\cite{leroux2015d}.
In many cases, it is not known how these extensions compare
in expressive power and in complexity. We believe that weakly
computable functions can be a useful tool when addressing these
questions.

\subsection*{Related models.}
The GVAS model can simulate counter machines extended with nested
zero-tests (from~\cite{Reinhardt08}), and the vector addition systems extended with a pushdown
stack (from~\cite{leroux2014}). The first simulation
was shown in~\cite{AtigG11} and holds even for GVASes with finite-index grammars.
The second
one comes from the classical transformation of a pushdown automaton
into a context-free grammar that recognizes the same language.
There exists still other models that extend vector addition systems with
stack-related mechanisms, e.g., Mayr's Process Rewrite
Systems~\cite{mayr2000} or Haddad and Poitrenaud's Recursive Petri
Nets~\cite{haddad2007}. Pending some further, more formal, comparison,
it seems that these models are less expressive than
Pushdown VASes since they only allow limited interactions between
stack and counters.

\subsection*{Outline of the paper.}
Section~\ref{sec-gvas} introduces GVASes and
fixes some notation. 
In Section~\ref{sec-pumping}, we introduce flow trees, a tree-shaped
version of runs of GVASes for which we develop our two main tools: a
well-quasi-ordering between flow trees and an Amalgamation Theorem. 
The following two sections explore applications of the Amalgamation
Theorem
in understanding the computing power of GVASes:
via GVAS-definable sets in Section~\ref{sec-gvas-sets},
via weakly computable function in Section~\ref{sec-weakcomp}.
Finally, we show in Section~\ref{sec-hypack} that GVASes can weakly
compute all Fast-Growing functions $F_\alpha$ for $\alpha<\omega^\omega$.

 \section{Grammar-Controlled Vector Addition Systems}
\label{sec-gvas}

This section recalls the model of grammar-controlled vector addition
systems, originally from~\cite{leroux2015d}.
In a nutshell,
these are intersections of classical VAS with context-free grammars.
Remark~\ref{rem:pvas} relates them with the equivalent model of pushdown
vector addition systems.

\subsection*{Vector Addition Systems}
For a \emph{dimension} $d\in\Nat$, we consider \emph{configurations}
that are vectors $\vc,\vd,\vx,\vy,\ldots$ in $\Nat^d$, and
\emph{actions} that are vectors $\vec{a}\in\setZ^d$.
We write $\vx\xrightarrow{\va}\vy$ for two configurations $\vx,\vy$ in
$\Nat^d$ if $\vy=\vx+\va$.
A \emph{vector addition system} (a \emph{VAS}) is a transition system
of the form $(\Nat^d,\{\xrightarrow{\va}\}_{\va\in\vec{A}})$ generated by a
finite set $\vec{A}\subseteq\setZ^d$ of actions.

\smallskip

In a VAS, the one-step transition relations
$\{\xrightarrow{\va}\}_{\va\in \vec{A}}$ are composed in a natural way: with
any word $w=\vec{a}_1\cdots \va_k\in \vec{A}^*$ of actions, we associate
the binary relation $\xrightarrow{w}$ defined over configurations by
$\vec{x}\xrightarrow{w}\vec{y}$ iff there exists a sequence
$\vec{c}_0,\ldots,\vec{c}_k$ of configurations such that
$\vec{c}_0=\vec{x}$, $\vec{c}_k=\vec{y}$ and such that
$\vec{c}_{j-1}\xrightarrow{\va_j}\vec{c}_{j}$ for every $1\leq j\leq k$.  Those
relations are monotonic:
\begin{equation}\label{eq-vas-monotonic}
  \vec{x}\xrightarrow{w}\vec{y} \text{ and }\vec{v}\in\Nat^d\text{
    implies }\vec{x}+\vec{v}\xrightarrow{w}\vec{y}+\vec{v} \:.
\end{equation}

\subsection*{Notation}
When writing configurations $\vc\in\Nat^d$, we sometimes split the
vector in parts, writing e.g., $\vc=(\vx,\vy)$ for some
$\vx\in\Nat^{d_1}$ and $\vy\in\Nat^{d_2}$ with $d=d_1+d_2$. We also
write $\vzero_d$ for the null vector in $\Nat^d$, often leaving the
dimension implicit.

\subsection*{Grammar-controlled Vector Addition Systems}
A $d$-dimensional \emph{grammar-controlled vector addition system} (a
\emph{GVAS}) can be seen as a context-free grammar using terminals from
$\setZ^d$,
or equivalently as a VAS where the valid sequences of actions are generated by a
context-free grammar.
Formally, a $d$-dimensional GVAS is some $G=(V,\vec{A},R,S)$ where $V$ is a finite set of
\emph{nonterminals}, where $\vec{A}\subseteq
\setZ^d$ is a finite set of terminals called \emph{actions}, where $R\subseteq
V\times (V\cup \vec{A})^*$ is a finite set of production \emph{rules}, and
$S\in V$ is the \emph{start symbol}.
Following the usual convention, a rule $(T,u)$ is also written $T \rightarrow u$.
We denote nonterminals from $V$ with capital letters like $S,T,\ldots$
while symbols from the larger set $V\cup \vec{A}$ are denoted with
$X,Y,\ldots$.  Words in $(V\cup \vec{A})^*$ are denoted with
$w,u,v,\ldots$. As usual, $\varepsilon$ denotes the empty word.

\smallskip

For all words $w,w'\in (V\cup \vec{A})^*$, we say that $w\step w'$ is
a \emph{derivation step} of $G$ if there exist two words $v,v'$ in
$(V\cup \vec{A})^*$ and a rule $(T,u)$ in $R$ such that $w=vTv'$
and $w'=vuv'$. Let $\stepstar$ denote the reflexive and
transitive closure of $\step$.
The language $L_G\subseteq \vec{A}^*$ generated by $G$ seen as a grammar is
defined as usual with $w\in L_G\equivdef S\stepstar w\in
\vec{A}^*$. More generally, for any $u\in (V\cup \vec{A})^*$, the language $L_G(u)$ is $\{w\in
\vec{A}^*~|~u\stepstar w\}$.

\smallskip

When $G$ is a GVAS, we are interested in what
sequences of actions may occur between configurations in $\Nat^d$. For this, we
extend the definition of the  $\xrightarrow{w}$ relation
and consider $\xrightarrow{u}$ for any $u\in (V\cup
\vec{A})^*$. Formally, we let
\begin{gather}
\vec{x}\xrightarrow{u}\vec{y}
\equivdef
\exists w\in L_G(u):\vec{x}\xrightarrow{w}\vec{y}\:.
\end{gather}
A labeled pair $\vx\xrightarrow{u}\vy$ is called a \emph{run} of the
GVAS, and should not be confused with the derivations $w\stepstar w'$ that
only involve the grammar part.

Like VASes, GVASes are monotonic:
\begin{equation}\label{eq-gvas-monotonic}
  \vec{x}\xrightarrow{u}\vec{y} \text{ and }\vec{v}\in\Nat^d\text{
    implies }\vec{x}+\vec{v}\xrightarrow{u}\vec{y}+\vec{v}\:.
\end{equation}
The underlying grammar $G$ is left implicit in the above notation.
We sometimes write $\vx\xrightarrow{G}\vy$ instead of
 $\vx\xrightarrow{S}\vy$, where $S$ is the start symbol of $G$, when
several grammars are considered simultaneously.

\begin{example}
\label{ex-power-2}
Let $d=1$, $V=\{S,T\}$, and consider the
1-dimensional GVAS
given by the following four rules in Backus-Naur form:
\begin{xalignat*}{2}
S & \rightarrow \vec{1} ~\big|~ \vec{-1} \, S \, T       \:,
&
T & \rightarrow \vec{0} ~\big|~ \vec{-1} \, T \, \vec{2} \:.
\end{xalignat*}

Since we shall claim in Section~\ref{sec-weakcomp} that this GVAS
weakly computes the $2^n$ function, let us state and prove the
main properties of its runs. Formally, for every $k,k',n,n'\in\Nat$,
one has:
\begin{align}
\label{eq-spec-power-2}
  k\xrightarrow{T}k' & \text{ iff } k\leq k'\leq 2k \:,
&
  n\xrightarrow{S}n' & \text{ iff } 1\leq n'\leq 2^n \:.
\end{align}
Thus, even in dimension 1, the reachability relation may not be
semilinear.

To prove \eqref{eq-spec-power-2}
assume first that $k\xrightarrow{T}k'$ for some natural numbers
$k,k'$. There exists $m\in\setN$ such that
$k\xrightarrow{\vec{-1}^m\vec{0}\vec{2}^m}k'$. In particular $m\leq k$
and $k'=k+m$. We deduce that $k\leq k'\leq 2k$. Conversely,
let $k,k'$ be two natural numbers such that $k\leq k'\leq
2k$. Observe that $T\stepstar \vec{-1}^n\vec{0}\vec{2}^n$
where $n$ is defined as $k'-k$. The following relations show that
$k\xrightarrow{T}k'$:
$$k ~\xrightarrow{\vec{-1}^n}~ k-n ~\xrightarrow{\vec{0}}~
k-n ~\xrightarrow{\vec{2}^n}~ k'\:.$$

Now, assume that $n\xrightarrow{S}n'$ for some natural numbers $n,n'$,
There exists $m\in\setN$ such that
$n\xrightarrow{\vec{-1}^m\vec{1}T^m}n'$. It follows that $m\leq n$,
and from the previous paragraph, we deduce that $n'\leq (n-m+1)2^m\leq
2^n$ by observing that $x+1\leq 2^x$ for every $x\in\Nat$ and by
replacing $x$ by $n-m$. Conversely, let $n,n'$ be two natural numbers such that $1\leq
n'\leq 2^n$. Observe that $S\stepstar
\vec{-1}^n\vec{1}T^n$. Let us introduce a natural number $m$ in $\{0,\ldots,n-1\}$ such that $2^m\leq
n'\leq 2^{m+1}$. The following relations show that $n\xrightarrow{S}n'$:
\begin{equation}
\tag*{$\mathqed$}
n\xrightarrow{\vec{-1}^n}0\xrightarrow{\vec{1}}2^0\xrightarrow{T}2^1\cdots
\xrightarrow{T}2^{m}\xrightarrow{T}n'\xrightarrow{T^{n-1-m}}n'\:.
\end{equation}
\end{example}

\begin{example}
\label{ex-gvas}
Let $G$ be the 2-dimensional GVAS with a single nonterminal symbol and the
following three rules:
\[
       S \rightarrow S\,S ~\big|~ \V{-1}{2} ~\big|~ \V{2}{-1}
\:.
\]
Let $w=\V{-1}{2}\V{2}{-1}\V{-1}{2}$ and observe that $S \stepstar w$. We
have $\V{2}{2} \xrightarrow{\V{-1}{2}} \V{1}{4}
\xrightarrow{\V{2}{-1}} \V{3}{3} \xrightarrow{\V{-1}{2}}
\V{2}{5}$ and hence $\V{2}{2} \xrightarrow{S} \V{2}{5}$.
\qed
\end{example}

\begin{remark}
  \label{rem:pvas}
  It is well known that, from a formal language viewpoint,
  context-free grammars are equivalent to pushdown automata.
  Similarly,
  GVAS can be equivalently presented as VAS extended with a pushdown stack.
  Formally,
  a $d$-dimensional \emph{Pushdown Vector Addition System} (a \emph{PVAS})
  is a transition system of the form
  $(\Nat^d \times \Gamma^*, \{\xrightarrow{\vp}\}_{\vp\in\vec{P}})$
  generated by a pair $(\Gamma, \vec{P})$ where $\Gamma$ is a finite
  \emph{stack alphabet} and
  $\vec{P} \subseteq \Gamma^* \times \Gamma^* \times \Zed^{d}$ is a finite
  set of \emph{actions}.
  So configurations are now pairs $(\vec{x}, u)$ where
  $\vec{x} \in \Nat^{d}$ is as for VAS
  and $u \in \Gamma^*$ is a word denoting the contents of the stack.
  Intuitively, an action $\vp = (\alpha, \beta,\vec{a})$ pops the
  string $\alpha$ from the top of the stack, then pushes the string
  $\beta$ onto the top of the stack, and adds $\vec{a}$ to the vector
  of natural numbers.
  Formally,
  each action $(\alpha, \beta, \vec{a}) \in \vec{P}$ induces a binary
  relation $\xrightarrow{(\alpha,\beta,\vec{a})}$ on configurations
  defined by
  $(\vec{x}, u)
  \xrightarrow{(\alpha,\beta,\vec{a})}
  (\vec{y}, v)$
  if $\vec{y} = \vec{x} + \vec{a}$ and there exists $w$ such that
  $u = \alpha w$ and $v = \beta w$.
  GVASes can be translated into equivalent PVASes and vice-versa.

  \smallskip

  For instance,
  the PVAS corresponding to Example~\ref{ex-gvas} is generated by the
  pair $(\Gamma, \vec{P})$ where $\Gamma = \{S\}$ and $\vec{P}$ is the
  set of actions
  $\{(S, SS, \V{0}{0}), (S, \varepsilon, \V{-1}{2}),
  (S, \varepsilon, \V{2}{-1})\}$.
  Corresponding to $\V{2}{2} \xrightarrow{w} \V{2}{5}$ with
  $w=\V{-1}{2}\V{2}{-1}\V{-1}{2}$ there, we have 
\[
(S,\V{2}{2})
  \xrightarrow{\! S,SS,\V{0}{0}\!} (SS,\V{2}{2}) 
  \xrightarrow{\! S, \varepsilon, \V{-1}{2}\!} (S, \V{1}{4}) 
  \xrightarrow{\! S,SS,\V{0}{0}\!} (SS, \V{1}{4}) 
  \xrightarrow{\! S, \varepsilon, \V{2}{-1}\!} (S,\V{3}{3})
  \xrightarrow{\! S, \varepsilon, \V{-1}{2}\!} (\varepsilon,\V{2}{5})
\]
  in the PVAS.
\end{remark}

 \section{Well-Quasi-Ordering Runs in GVASes}
\label{sec-pumping}

In this section we define flow trees of GVASes and show that they
satisfy an amalgamation property. This property is used in the next
section to provide a geometrical decomposition of GVAS sets, and in
the following section to show that unbounded weakly computable
functions are in $\Omega(n)$.

\medskip

Let $G=(V,\vec{A},R,S)$ be a $d$-dimensional GVAS.  \emph{Flow trees} of
$G$ are trees that combine a transition $\vx\xrightarrow{w}\vy$ in the
VAS part of $G$ with a derivation tree for the corresponding
$S\stepstar w$ in the grammar part of $G$.

Flow trees are finite rooted ordered trees labeled with transitions of
$G$: we write $t = \sigma[t_1,\ldots,t_\ell]$ to denote a flow tree
$t$ made of a root with $\ell$ subtrees $t_1,\ldots,t_\ell$. The root
is labeled by a transition $\sigma$ of $G$, of the form
$\vc\xrightarrow{X}\vd$ with $X\in V\cup \vec{A}$.
We write $\root(t) = \sigma$.
Formally, $F(G)$ is the
least set of trees that contains all $(\vc\xrightarrow{\va}\vd)[]$ with
$\va\in \vec{A}$ and $\vc+\va=\vd$, and all
$(\vc\xrightarrow{T}\vd)[t_1,\ldots,t_\ell]$ with $T\in V$ and
$t_1,\ldots,t_\ell\in F(G)$ such that there is a rule $T\step
X_1\cdots X_\ell$ in $R$ and configurations
$\vc_0,\vc_1,\ldots,\vc_\ell$ with $\vc_0=\vc$, $\vc_\ell=\vd$ and
such that, for $i=1,\ldots,\ell$, the root of $t_i$ is labeled with
$\vc_{i-1}\xrightarrow{X_i}\vc_i$.  A \emph{subtree} of
$t=\sigma[t_1,\ldots,t_\ell]$ is either $t$ itself or a subtree of
some $t_i$ for $i=1,\ldots,\ell$.  A (sub)tree
$(\vc\xrightarrow{X}\vd)[t_1,\ldots,t_\ell]$ is a \emph{leaf} when
$\ell=0$: this requires that $X=\va\in \vec{A}$ is an action (and then
$\vd=\vc+\va$) or that $X=T\in V$ is a non-terminal and
$T\step\varepsilon$ is a rule in $R$ (and then $\vd=\vc$).

As is standard, we use \emph{positions} to identify occurrences of
subtrees inside $t$. Formally, a position is a finite sequence of
natural numbers, and the positions of the subtrees of $t$, denoted
$\Pos(t)$ are given inductively by
\[
\Pos(\sigma[t_1,\ldots,t_l]) \egdef \{\varepsilon\}\cup\{i.q~|~1\leq i\leq l \land
q\in\Pos(t_i)\} \:.
\]
For $p\in\Pos(t)$, the subtree of $t$ at position $p$ is denoted
$t/p$.

\begin{example}
\label{ex-flow-tree}
Recall the 1-dimensional GVAS $G$ from Example~\ref{ex-power-2}.
The grammar admits, among others, a derivation $S\stepstar w$ for
$w=\vec{-1}\,\vec{-1}\,\vec{1}\,\vec{0}\,\vec{0}$.
Thus $3\xrightarrow{w} 2$ is a transition in $G$.

In Fig.~\ref{fig-ex-flowtree2} we display a derivation tree witnessing
$S\stepstar w$ and a flow tree witnessing $\vc\xrightarrow{w}\vd$ for
$\vc=3$ and $\vd=2$.
\qed
\end{example}
\begin{figure}[t!]
  \begin{center}
\begin{tikzpicture}[auto,node distance=2em]     \node (a)              {$S$};
    \node (a1) [below=of a,xshift=-3em] {$\vec{-1}$};
    \node (a2) [below=of a] {$S$};
    \node (a3) [below=of a,xshift=4em] {$T$};
    \node (a21) [below=of a2,xshift=-2em] {$\vec{-1}$};
    \node (a22) [below=of a2] {$S$};
    \node (a23) [below=of a2,xshift=2em] {$T$};
    \node (a31) [below=of a3] {$\vec{0}$};
    \node (l23) [below=1em of a23] {$\vec{0}$};
    \node (l22) [below=1em of a22] {$\vec{1}$};

    \draw[-,thick] (a) to (a1);
    \draw[-,thick] (a) to (a2);
    \draw[-,thick] (a) to (a3);
    \draw[-,thick] (a2) to (a21);
    \draw[-,thick] (a2) to (a22);
    \draw[-,thick] (a2) to (a23);
    \draw[-,thick] (a3) to (a31);
    \draw[-,thick] (a22) to (l22);
    \draw[-,thick] (a23) to (l23);

    \node (A)     [right=15em of a]          {$3\xrightarrow{S}2$};
    \node (A1) [below=of A,xshift=-6em] {$3\xrightarrow{\vec{-1}}2$};
    \node (A2) [below=of A, xshift=-1em] {$2\xrightarrow{S}2$};
    \node (A3) [below=of A,xshift=7em] {$2\xrightarrow{T}2$};
    \node (A21) [below=of A2,xshift=-4em] {$2\xrightarrow{\vec{-1}}1$};
    \node (A22) [below=of A2] {$1\xrightarrow{S}2$};
    \node (A23) [below=of A2,xshift=4em] {$2\xrightarrow{T}2$};
    \node (A31) [below=of A3] {$2\xrightarrow{\vec{0}}2$};
    \node (L23) [below=1em of A23] {$2\xrightarrow{\vec{0}}2$};
    \node (L22) [below=1em of A22] {$1\xrightarrow{\vec{1}}2$};

    \draw[-,thick] (A) to (A1);
    \draw[-,thick] (A) to (A2);
    \draw[-,thick] (A) to (A3);
    \draw[-,thick] (A2) to (A21);
    \draw[-,thick] (A2) to (A22);
    \draw[-,thick] (A2) to (A23);
    \draw[-,thick] (A3) to (A31);
    \draw[-,thick] (A22) to (L22);
    \draw[-,thick] (A23) to (L23);
    \end{tikzpicture}
\end{center}
\caption[caption]{Continuing Example~\ref{ex-power-2}: $S \rightarrow \vec{1} ~|~ \vec{-1} \, S \, T$ and $T \rightarrow \vec{0} ~|~ \vec{-1} \, T \, \vec{2}$.\\\hspace{\textwidth}
Left: Derivation tree witnessing $S\stepstar w$, with $w=\vec{-1} \vec{-1}\,\vec{1}\,\vec{0}\,\vec{0}$.\\\hspace{\textwidth} 
Right: Flow tree witnessing $3\xrightarrow{w}2$.}
\label{fig-ex-flowtree2}
\end{figure}

\medskip

We now extend to flow trees of GVASes an ordering
initially introduced by Jan\v{c}ar for runs of VASes~\cite[Def.~6.4]{jancar90}.

\begin{definition}[Ordering GVAS transitions and flow trees]
\label{def-leq_G}
For two transitions $\sigma=\vc\xrightarrow{X}\vd$ and
$\theta=\vc'\xrightarrow{X'}\vd'$ with $X,X'\in V\cup \vec{A}$, we let
\[
\sigma\leq\theta \:\equivdef\: \vc\leq \vc'\land \vd\leq\vd'\land X=X' \:.
\]

The ordering $\leq_G$ between flow trees $s,t\in F(G)$ is defined
by induction on the structure of trees:
$s=\sigma[s_1,\ldots,s_k]\leq_G t=\theta[t_1,\ldots,t_\ell]$ if, and
only if, $\sigma\leq \theta$ and there exists a subtree $t'$ of $t$
of the form $t'=\theta'[t'_1,\ldots,t'_{\ell'}]$ with $\sigma\leq
\theta'$, $\ell'=k$ and $s_j\leq_G t'_j$ for every $1\leq j\leq k$.
\end{definition}
This definition is well-founded and, since the subtree relation is
transitive, $\leq_G$ is clearly reflexive and transitive, i.e., is a
quasi-ordering. 
In the appendix, we prove the following key property:
\begin{lemma}[See Appendix~\ref{app-wqo}]
\label{lem-FG-wqo}
$(F(G),\leq_G)$ is a well-quasi-ordering.
\end{lemma}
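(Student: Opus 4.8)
The plan is to prove that $(F(G),\leq_G)$ is a well-quasi-ordering by induction on the structure of flow trees, using a nested (Higman/Kruskal-style) argument. The key observation is that $\leq_G$ is a "structural" ordering in the style of the tree embedding orderings one uses to prove Kruskal's tree theorem, but with the extra twist that the label of a node constrains which subtrees may be matched (via the requirement $X = X'$ and the compatibility of the configurations). First I would fix some notation: for a flow tree $t = \sigma[t_1,\dots,t_\ell]$, say that $t$ has \emph{root nonterminal} $X$ if $\root(t)$ is of the form $\vc\xrightarrow{X}\vd$; since $V\cup\vec A$ is finite, there are only finitely many possible root symbols, and it suffices to show that for each fixed $X$ the set $F_X(G)$ of flow trees with root symbol $X$ is well-quasi-ordered by $\leq_G$, and then conclude for $F(G)$ as a finite union.

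The main step is a minimal-bad-sequence argument à la Nash-Williams. Suppose for contradiction that $\leq_G$ is not a wqo on $F(G)$, and take a minimal bad sequence $t^{(0)}, t^{(1)}, t^{(2)},\dots$ (minimal in the sense that at each stage $i$ the tree $t^{(i)}$ has minimal size among all trees extending $t^{(0)},\dots,t^{(i-1)}$ to a bad sequence — here size is the number of nodes, which is a well-founded measure). Let $t^{(i)} = \sigma^{(i)}[t^{(i)}_1,\dots,t^{(i)}_{\ell_i}]$. The set $B$ of all \emph{proper} subtrees of the $t^{(i)}$ is then well-quasi-ordered by $\leq_G$: any infinite bad sequence drawn from $B$ could be used, by the standard argument, to splice in a smaller tree and contradict minimality. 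Now I would extract from the infinite sequence $(t^{(i)})_i$ an infinite subsequence on which (i) the root symbol $X^{(i)}$ is constant, say equal to $X$, and, when $X = T$ is a nonterminal, (ii) the rule $T\to X_1\cdots X_\ell$ applied at the root is constant (so in particular the arity $\ell$ is constant), and (iii) by Dickson's lemma on $\Nat^d\times\Nat^d$, the pairs $(\vc^{(i)},\vd^{(i)})$ of source/target configurations at the root are $\leq$-increasing along the subsequence, i.e. $\sigma^{(i)}\leq\sigma^{(j)}$ for $i<j$ in the subsequence.

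It remains to handle the sequences of subtree-tuples $(t^{(i)}_1,\dots,t^{(i)}_\ell)_i$ along this subsequence. Each component sequence $(t^{(i)}_j)_i$ lives in $B$, which is a wqo; hence by the standard product lemma (Dickson for a finite product of wqos, applied repeatedly or via Ramsey) we can pass to a further infinite subsequence on which $t^{(i)}_j \leq_G t^{(j')}_j$ componentwise for $i<j'$ and all $j$. For two indices $i<j'$ in this final subsequence we then have $\sigma^{(i)}\leq\sigma^{(j')}$ and $t^{(i)}_j \leq_G t^{(j')}_j$ for $1\le j\le \ell$; taking the required subtree $t'$ of $t^{(j')}$ in Definition~\ref{def-leq_G} to be $t^{(j')}$ itself (with $\theta' = \theta = \sigma^{(j')}$, so $\ell' = \ell = k$), Definition~\ref{def-leq_G} gives $t^{(i)} \leq_G t^{(j')}$, contradicting badness. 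The case $X\in\vec A$ (leaves, $\ell=0$) is the easy base case: a leaf is determined by its action and its source, so Dickson on the source suffices. The main obstacle I anticipate is the bookkeeping in the minimal-bad-sequence extraction — one must be careful that the subtrees collected into $B$ are exactly the \emph{proper} subtrees (so that replacing $t^{(i)}$ by one of its subtrees strictly decreases size), and that the ``subtree $t'$ of $t$'' clause in Definition~\ref{def-leq_G} does not force us to also well-order deeper subtrees in a way that breaks the induction; taking $t' = t$ in the comparison avoids this, so the argument goes through cleanly. Everything else is a routine assembly of Dickson's lemma and Nash-Williams' minimal bad sequence technique.
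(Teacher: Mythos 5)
There is a genuine gap in your claim that the set $B$ of proper subtrees of the minimal bad sequence is well-quasi-ordered by $\leq_G$ ``by the standard argument''. The Nash--Williams splicing step needs the quasi-order to be compatible with the subtree relation on the right: from a bad sequence $(s_j)_j$ in $B$ with $s_j$ a proper subtree of $t^{(n_j)}$, one forms $t^{(0)},\ldots,t^{(n_{j_0}-1)},s_{j_0},s_{j_0+1},\ldots$ and must show it is bad; the only nontrivial case is ruling out $t^{(a)}\leq_G s_j$ for $a<n_{j_0}$, which the standard argument does by promoting $t^{(a)}\leq_G s_j$ to $t^{(a)}\leq_G t^{(n_j)}$ and contradicting badness of the original sequence. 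That promotion is valid for the homeomorphic embedding but \emph{fails} for $\leq_G$: Definition~\ref{def-leq_G} requires $\root(s)\leq\root(t)$ for the \emph{actual} root of $t$, not merely for the root of the matched subtree $t'$. Concretely, a leaf $u=(\vc\xrightarrow{\va}\vd)[\,]$ occurring as a proper subtree of a tree $t$ with root symbol $T\in V$ satisfies $u\leq_G u$ but $u\not\leq_G t$, since the root symbols differ. So the spliced sequence need not be bad, minimality is not contradicted, and the wqo-ness of $(B,\leq_G)$ — which your componentwise extraction in the final step relies on — is left unproved. The rest of your argument (Dickson on root configurations, pigeonholing on the finitely many rules, taking $t'=t$ in the final comparison) is fine; the gap is localized here, but it is exactly where the non-standard feature of $\leq_G$ bites.

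The paper sidesteps this by not running a bad-sequence argument on $\leq_G$ at all: in Appendix~\ref{app-wqo} it attaches to each node of a flow tree $t$ the enriched label $\langle\root(t),\root(t_1),\ldots,\root(t_k)\rangle$ (an ``instance'' of a rule), shows by structural induction (Lemma~\ref{lem-FG-kruskal}) that $s\leq_G t$ iff $\root(s)\leq\root(t)$ and $\child(s)\sqsubseteq\child(t)$ for the standard homeomorphic embedding $\sqsubseteq$ on these enriched trees, and then invokes Kruskal's Tree Theorem for $\sqsubseteq$ together with the fact that instances are wqo (finitely many rules plus Dickson); the conjunction of two wqos on the same set is again a wqo. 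The moral is that the troublesome root conditions at every level must be absorbed into the node labels so that the remaining relation really is a subtree-closed embedding. You could repair your proof in the same spirit — e.g.\ factor $\leq_G$ as the intersection of the root comparison with the subtree-closed relation ``$s\leq_G t'$ for some subtree $t'$ of $t$'' and run Nash--Williams on the latter — but as written the appeal to the standard argument does not go through.
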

In other words, any infinite sequence $s_0,s_1,s_2,\ldots$ of flow
trees contains an infinite increasing subsequence
$s_{i_0}\leq_G s_{i_1}\leq_G s_{i_2}\leq_G \cdots$.
\begin{example}
\label{ex:flow-tree-ordering}
The flow trees shown in Fig.~\ref{fig:ordering} illustrate the
ordering $\le_G$, on a $1$-dimensional GVAS with non-terminals
$S,T,U,V$ and the following rules: $S \rightarrow \vec{3} \, T ~|~
\vec{3} \, U$, $T \rightarrow \vec{-2} ~|~ V \, T$,
$U \rightarrow T$ and $V \rightarrow \epsilon$. We can see that $t_1
\leq_G t_2$ from the following orderings on the subtrees of $t_1$ and
$t_2$: $t_1/1 \leq_G t_2/1$, $t_1/2 \leq_G t_2/2$ and $t_1/21 \leq_G
t_2/221$. It can
be verified that $t_1 \not\leq_G t_0$. However, $t_1 \sqsubseteq t_0$,
where $\sqsubseteq$ is the standard homeomorphic embedding~\footnote{  defined by $s = \sigma[s_1,\ldots,s_k] \sqsubseteq t$ if, and only if,
  there exists a subtree $t' = \theta'[t'_1,\ldots,t'_{\ell'}]$ of $t$
  such that $\sigma \leq \theta'$, $k \leq \ell'$ and
  $s_1 \sqsubseteq t'_{j_1}, \ldots, s_k \sqsubseteq t'_{j_k}$ for some
  subsequence $1 \leq j_1 < \cdots < j_k \leq \ell'$.
} of labeled
trees. This can be seen by the following orderings: $t_1/1 \sqsubseteq
t_0/1$, $t_1/2 \sqsubseteq t_0/21$ and $t_1/21 \sqsubseteq t_0/211$.
\begin{figure}[t!]
  \begin{center}
    \scalebox{0.95}{
	\begin{tikzpicture}[auto,node distance=2em]       	    \node (c) {$3 \xrightarrow{S} 4$};
            \node (c1) [below=of c,xshift=-3em] {$3 \xrightarrow{\vec{3}} 6$};
	    \node (c2) [below=of c,xshift=3em] {$6 \xrightarrow{U} 4$};
	    \node (c21) [below=of c2] {$6 \xrightarrow{T} 4$};
            \node (c211) [below=of c21] {$6 \xrightarrow{\vec{-2}} 4$};

           \node at ([yshift=2em]c) {$t_0$};

           \draw[-,thick] (c) -- (c1);
           \draw[-,thick] (c) -- (c2);
           \draw[-,thick] (c2) -- (c21);
           \draw[-,thick] (c21) -- (c211);

           \node (ca) at ([xshift=6em]c) {\LARGE $\not\geq_G$};

   	    \node (a) at ([xshift=6em]ca) {$2 \xrightarrow{S} 3$};
	    \node (a1) [below=of a,xshift=-3em] {$2 \xrightarrow{\vec{3}} 5$};
	    \node (a2) [below=of a,xshift=3em] {$5 \xrightarrow{T} 3$};
            \node (a21) [below=of a2] {$5 \xrightarrow{\vec{-2}} 3$};

           \node at ([yshift=2em]a) {$t_1$};

           \draw[-,thick] (a) -- (a1);
           \draw[-,thick] (a) -- (a2);
           \draw[-,thick] (a2) -- (a21);

        \node (ab) at ([xshift=6em]a) {\LARGE $\leq_G$};

        	    \node (b) at ([xshift=8em]ab) {$3 \xrightarrow{S} 4$};
            \node (b1) [below=of b,xshift=-4em] {$3 \xrightarrow{\vec{3}} 6$};
	    \node (b2) [below=of b,xshift=4em] {$6 \xrightarrow{T} 4$};
            \node (b21) [below=of b2, xshift=-3em] {$6 \xrightarrow{V} 6$};
            \node (b22) [below=of b2, xshift=3em] {$6 \xrightarrow{T} 4$};
            \node (b221) [below=of b22] {$6 \xrightarrow{\vec{-2}} 4$};

           \node at ([yshift=2em]b) {$t_2$};

           \draw[-,thick] (b) -- (b1);
           \draw[-,thick] (b) -- (b2);
           \draw[-,thick] (b2) -- (b21);
           \draw[-,thick] (b2) -- (b22);
           \draw[-,thick] (b22) -- (b221);

          \begin{scope}[on background layer]
             \node[fill=gray!30, rectangle, rounded corners, fit=(b)] {};
             \node[fill=gray!30, rectangle, rounded corners, fit=(b1)] {};
             \node[fill=gray!30, rectangle, rounded corners, fit=(b2) (b21) (b22)] {};
             \node[fill=gray!30, rectangle, rounded corners, fit=(b221)] {};
           \end{scope}

           \end{tikzpicture}
        }\end{center}
\caption[caption]{Illustration for
    Example~\ref{ex:flow-tree-ordering}: $S \rightarrow \vec{3} \, T ~|~
\vec{3} \, U$, $T \rightarrow \vec{-2} ~|~ V \, T$,
$U \rightarrow T$ and $V \rightarrow \epsilon$.}
\label{fig:ordering}
\end{figure}
 \end{example}

When $\sigma\leq\theta$ for some $\sigma=\vc\xrightarrow{X}\vd$ and
$\theta=\vc'\xrightarrow{X}\vd'$, we also write
$\sigma\leq^\Delta\theta$ with $\Delta=(\vc'-\vc,\vd'-\vd)$. 
Similarly, we write $s\leq_G^\Delta t$ for two flow trees $s$ and $t$
when $s\leq_G t$ and $\root(s)\leq^\Delta\root(t)$.

The pair $\Delta$ is called a \emph{lifting}. 
Note that necessarily
$\Delta$ belongs to $(\Nat^d)^2$ and that $\sigma\leq^\Delta\theta$
and $\theta\leq^{\Delta'}\rho$ entail
$\sigma\leq^{\Delta+\Delta'}\rho$. 
We write $\rho=\sigma+\Delta$ when $\sigma\leq^\Delta\rho$.
Two liftings $\Delta=(\va,\vb)$ and
$\Delta'=(\va',\vb')$ can be chained if $\vb=\va'$. In this case we let
$\Delta\cdot\Delta'\egdef(\va,\vb')$. Note this partial operation is
associative.

When $t/p=t'\leq_G u$ 
we can replace $t'$ by $u$ inside $t$ but this
requires a bit of surgery to ensure the result is well-formed.
First, for a flow tree $t$ and a displacement $\va\in\Nat^d$, 
we let $t+\va$ be the tree defined via
\[
\sigma[t_1,\ldots,t_\ell]+\va
\egdef
(\sigma+(\va,\va))[t_1+\va,\ldots,t_\ell+\va]
\:.
\]
Obviously, $t+\va$ is
a valid flow tree, with $t\leq_G^{(\va,\va)} (t+\va)$.  Now, when
$t/p=t'\leq_G^\Delta u$ for $\Delta=(\va,\vb)$, we define $t[u]_p$ by
induction on $p$ in the following way:
\begin{xalignat*}{2}
t[u]_{\varepsilon} &\egdef u,
&
t[u]_{i.q} &\egdef
(\sigma+\Delta)[t_1+\va,\ldots,t_{i-1}+\va,t_i[u]_q,
t_{i+1}+\vb,\ldots,t_k+\vb].
\end{xalignat*}

\begin{claim}
\label{cl-replace-one-subtree}
If $t/p\leq^\Delta_G u$ then $t[u]_p$ is a valid flow tree
satisfying $t\leq_G^{\Delta}t[u]_p$.
\end{claim}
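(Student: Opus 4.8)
The plan is to argue by induction on the position $p$, mirroring the very recursion that defines $t[u]_p$. Write $\Delta=(\va,\vb)$; recall that a lifting always lies in $(\Nat^d)^2$, so $\va,\vb\in\Nat^d$, and recall the elementary fact noted above that $t'+\vw$ is a valid flow tree with $t'\leq_G^{(\vw,\vw)}(t'+\vw)$ for every flow tree $t'$ and every $\vw\in\Nat^d$. In the base case $p=\varepsilon$ we have $t/p=t$, so the hypothesis reads $t\leq_G^\Delta u$; since $t[u]_\varepsilon=u$ by definition, both conclusions are then immediate.

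For the inductive case $p=i.q$ the position is non-empty, so $t$ is an inner node, say $t=\sigma[t_1,\ldots,t_k]$ with $\sigma=\vc\xrightarrow{T}\vd$; validity of $t$ provides a rule $T\step X_1\cdots X_k$ in $R$ and configurations $\vc_0,\ldots,\vc_k$ with $\vc_0=\vc$, $\vc_k=\vd$ and $\root(t_j)=\vc_{j-1}\xrightarrow{X_j}\vc_j$ for $1\leq j\leq k$. Since $t/p=t_i/q$, the hypothesis yields $t_i/q\leq_G^\Delta u$, so the induction hypothesis applies to $t_i$, $q$, $u$ and gives: $t_i[u]_q$ is a valid flow tree, and $t_i\leq_G^\Delta t_i[u]_q$. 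In particular $\root(t_i[u]_q)=\root(t_i)+\Delta=(\vc_{i-1}+\va)\xrightarrow{X_i}(\vc_i+\vb)$, which still carries the middle symbol $X_i$.

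It then remains to check the two assertions for $t[u]_p=(\sigma+\Delta)[\,t_1+\va,\ldots,t_{i-1}+\va,\ t_i[u]_q,\ t_{i+1}+\vb,\ldots,t_k+\vb\,]$. For validity I would keep the same rule $T\step X_1\cdots X_k$ (the middle symbol of each child is unchanged, by the displays above) and use the configurations $\vc'_j:=\vc_j+\va$ for $j<i$ and $\vc'_j:=\vc_j+\vb$ for $j\geq i$: then $\vc'_0=\vc+\va$ and $\vc'_k=\vd+\vb$ agree with $\root(t[u]_p)=\sigma+\Delta=(\vc+\va)\xrightarrow{T}(\vd+\vb)$, and the root of each child is exactly $\vc'_{j-1}\xrightarrow{X_j}\vc'_j$ --- for $j<i$ and $j>i$ this is immediate from the definitions of $t_j+\va$ and $t_j+\vb$, and for $j=i$ it is precisely the label of $\root(t_i[u]_q)$ computed above. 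For the ordering $t\leq_G^\Delta t[u]_p$, the root part $\root(t)\leq^\Delta\root(t[u]_p)$ holds by construction; for $t\leq_G t[u]_p$ I would instantiate Definition~\ref{def-leq_G} choosing as the witnessing subtree $t'$ of the right-hand side the whole tree $t[u]_p$, so that one has to verify only $\sigma\leq\sigma+\Delta$ (true since $\va,\vb\geq\vzero$), equality of arities, and the child-wise comparisons $t_j\leq_G t_j+\va$ for $j<i$, $t_i\leq_G t_i[u]_q$, and $t_j\leq_G t_j+\vb$ for $j>i$ --- the outer ones from the elementary fact, the middle one from the induction hypothesis.

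The only genuinely delicate point is bookkeeping: one must cut the children surrounding the edited subtree into a left block shifted by $\va$ and a right block shifted by $\vb$, and confirm that these two shift regimes are compatible at the two boundary configurations $\vc_{i-1}$ and $\vc_i$ bordering $t_i$ --- which is exactly why a lifting is recorded as a \emph{pair}. All the rest is a routine unfolding of the definitions of $F(G)$, of $t[u]_p$, and of $\leq_G^\Delta$.
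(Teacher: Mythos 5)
Your proof is correct and follows essentially the same route as the paper's: induction on $p$, applying the induction hypothesis to $t_i$ at position $q$, then using the elementary facts $t_j\leq_G^{(\va,\va)}t_j+\va$ and $t_j\leq_G^{(\vb,\vb)}t_j+\vb$ for the siblings to the left and right. The only difference is that you spell out the configuration-sequence bookkeeping for validity, which the paper leaves implicit.
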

\begin{proof}
By induction on $p$. If $p=\varepsilon$ the claim holds trivially.
Assume $p=i.q$ with $1\leq i \leq k$ and let $u'=t_i[u]q$. By
induction hypothesis, $t_i\leq_G^\Delta u'$. This implies that
$t[u]_p$ is a well-defined flow tree. Since furthermore $t_j\leq_G
t_j+\va$ when $1\leq j<i$, and symmetrically, $t_j\leq_G t_j+\vb$ when
$i<j\leq k$, we see that $t\leq_G t[u]_p$.  Finally, we observe that
$\root(t)\leq^\Delta\root(t[u]_p)$.
\end{proof}

\begin{lemma}
\label{lem-replace-children}
Let $t=\sigma[t_1,\ldots,t_k]$ and assume $t_i\leq_G^{\Delta_i} u_i$
for $i=1,\ldots,k$. If $\Delta_1\cdot \Delta_2 \cdots \Delta_k=\Delta$
is defined, then $u=(\sigma+\Delta)[u_1,\ldots,u_k]$ is a valid
flow tree satisfying $t\leq_G^\Delta u$.
\end{lemma}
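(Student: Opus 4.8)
The plan is to build $u$ by hand --- its shape $(\sigma+\Delta)[u_1,\ldots,u_k]$ is already forced by the statement --- and then to verify, by straightforward definition-chasing, that $u$ is a genuine flow tree and that $t\leq_G^\Delta u$. The only place where a hypothesis does real work is the well-formedness of $u$: there, the assumption that $\Delta_1\cdots\Delta_k$ is defined turns out to be exactly the condition that lets the rebuilt children be glued back into a single rule application.

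First I would unfold the two ingredients. Since $t=\sigma[t_1,\ldots,t_k]$ is a flow tree and, for $k\geq 1$, $\sigma$ has the form $\vc\xrightarrow{T}\vd$ with $T\in V$, there is a rule $T\step X_1\cdots X_k$ in $R$ and configurations $\vc=\vc_0,\vc_1,\ldots,\vc_k=\vd$ with $\root(t_i)=\vc_{i-1}\xrightarrow{X_i}\vc_i$. Writing $\Delta_i=(\va_i,\vb_i)$, each hypothesis $t_i\leq_G^{\Delta_i}u_i$ gives $\Delta_i\in(\Nat^d)^2$ and, since the ordering on transitions preserves labels, $\root(u_i)=\root(t_i)+\Delta_i=(\vc_{i-1}+\va_i)\xrightarrow{X_i}(\vc_i+\vb_i)$. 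Finally, $\Delta=\Delta_1\cdots\Delta_k$ equals $(\va_1,\vb_k)$, hence lies in $(\Nat^d)^2$, and the fact that this chain is defined means precisely that $\vb_{i-1}=\va_i$ for $2\leq i\leq k$.

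Then I would exhibit $u=(\sigma+\Delta)[u_1,\ldots,u_k]$ as an element of $F(G)$, using the same rule $T\step X_1\cdots X_k$ together with the configurations $\vc_0'\egdef\vc+\va_1$ and $\vc_i'\egdef\vc_i+\vb_i$ for $1\leq i\leq k$: indeed $\vc_0'$ is the source of $\sigma+\Delta$, $\vc_k'=\vd+\vb_k$ is its target, and the root of $u_i$ is $\vc_{i-1}'\xrightarrow{X_i}\vc_i'$ --- the target component being immediate and the source component $\vc_{i-1}+\va_i=\vc_{i-1}'$ holding for $i=1$ by definition of $\vc_0'$ and for $i\geq 2$ exactly because $\vb_{i-1}=\va_i$. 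For the ordering claim, $\root(t)=\sigma\leq^\Delta\sigma+\Delta=\root(u)$ is legitimate because $\Delta\in(\Nat^d)^2$, and $t\leq_G u$ holds by applying Definition~\ref{def-leq_G} with $u$ itself as the witnessing subtree: its root dominates $\sigma$, it has exactly $k$ children, and $t_j\leq_G u_j$ for each $j$ since $t_j\leq_G^{\Delta_j}u_j$. Putting these together yields $t\leq_G^\Delta u$.

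I do not anticipate a real obstacle: the argument is bookkeeping about how liftings stack and chain, and its single conceptual content --- that chaining of liftings corresponds exactly to the agreement of the internal configurations $\vc_1,\ldots,\vc_{k-1}$ of the rebuilt children --- is already implicit in the discussion of chaining that precedes the lemma. (One may as well assume $k\geq 1$, since for $k=0$ the chaining product, and hence the statement, is degenerate.)
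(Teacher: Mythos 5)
Your proof is correct and follows essentially the same route as the paper's: unfold the rule instance underlying $t$, observe that the chaining condition $\vb_{i-1}=\va_i$ is exactly what makes the roots of the $u_i$ glue into a valid instance of the same rule for $\sigma+\Delta$, and witness $t\leq_G u$ with $u$ itself as the required subtree. The paper merely streamlines the bookkeeping by writing $\Delta_i=(\va_{i-1},\va_i)$ from the outset, which builds the chaining condition into the notation.
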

\begin{proof}
Since $\Delta=\Delta_1\cdots \Delta_k$ is defined, we can write
$\Delta_i=(\va_{i-1},\va_i)$ and $\Delta=(\va_0,\va_k)$.  Assume
$\sigma=\vc_0\xrightarrow{X}\vc_k$, with
$\root(t_i)=\vc_{i-1}\xrightarrow{Y_i}\vc_i$ for $i=1,\ldots,k$.  Then
$\root(u_i)=(\vc_{i-1}+\va_{i-1})\xrightarrow{Y_i}(\vc_i+\va_i)$ and
$u$ is a valid transition. That $t\leq_G u$ is immediate.
\end{proof}

\begin{theorem}[Amalgamation]
\label{thm:amalgamation}
If $s\leq_G^{\Delta_1} t_1$ and $s\leq_G^{\Delta_2} t_2$ then there
exists $s'$ s.t.\ $t_1\leq_G^{\Delta_2}s'$ and
$t_2\leq_G^{\Delta_1}s'$ (further entailing
$s\leq_G^{\Delta_1+\Delta_2}s'$).
\end{theorem}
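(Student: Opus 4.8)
The plan is to prove the statement by induction on $|t_1| + |t_2|$, the total number of nodes of the two targets, after restating Definition~\ref{def-leq_G} in a recursive form. Write $s = \sigma[s_1,\ldots,s_k]$ and say that \emph{$s$ fits a tree $u = \theta[u_1,\ldots,u_m]$ at the root} when $m = k$, $\sigma \leq \theta$, and $s_j \leq_G u_j$ for every $j$; then $s \leq_G u$ holds precisely when $\sigma \leq \root(u)$ and $s$ fits some subtree $u/p$ of $u$ at the root. The induction splits according to whether $s$ fits $t_1$ and $t_2$ at their roots.

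\emph{Reduction case.} Suppose $s$ does not fit $t_1$ at the root (the symmetric sub-case, $s$ not fitting $t_2$, is handled the same way, the conclusion being symmetric under $(t_1,\Delta_1)\leftrightarrow(t_2,\Delta_2)$). Then the subtree witnessing $s \leq_G t_1$ sits at a non-root position $p$, so $s \leq_G^{\Delta'_1}(t_1/p)$ where $\sigma \leq^{\Delta'_1}\root(t_1/p)$, and $|t_1/p| < |t_1|$. By the induction hypothesis applied to $s \leq_G^{\Delta'_1}(t_1/p)$ and $s \leq_G^{\Delta_2} t_2$ there is a flow tree $r$ with $(t_1/p)\leq_G^{\Delta_2} r$ and $t_2 \leq_G^{\Delta'_1} r$. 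Set $s' := t_1[r]_p$. By Claim~\ref{cl-replace-one-subtree}, $s'$ is a valid flow tree with $t_1 \leq_G^{\Delta_2} s'$, and the subtree $r$ occurs in $s'$ at position $p$; since $t_2 \leq_G r$ and every subtree of $r$ is a subtree of $s'$, we get $t_2 \leq_G s'$. Finally $\root(s') = \root(t_1) + \Delta_2 = \sigma + \Delta_1 + \Delta_2$ whereas $\root(t_2) = \sigma + \Delta_2$, so the lifting of $t_2 \leq_G s'$ is exactly $\Delta_1$, that is, $t_2 \leq_G^{\Delta_1} s'$, as required.

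\emph{Main case.} Now $s$ fits both $t_1 = \theta_1[u_1,\ldots,u_k]$ and $t_2 = \theta_2[v_1,\ldots,v_k]$ at their roots. Let $\delta_{1,j}$ and $\delta_{2,j}$ be the liftings with $\root(s_j)\leq^{\delta_{1,j}}\root(u_j)$ and $\root(s_j)\leq^{\delta_{2,j}}\root(v_j)$, so $s_j \leq_G^{\delta_{1,j}} u_j$ and $s_j \leq_G^{\delta_{2,j}} v_j$; well-formedness of $t_1$ and $t_2$ forces the chained products $\delta_{1,1}\cdot\delta_{1,2}\cdots\delta_{1,k}$ and $\delta_{2,1}\cdot\delta_{2,2}\cdots\delta_{2,k}$ to be defined and equal to $\Delta_1$ and $\Delta_2$ respectively (they telescope along the intermediate configurations of $t_1$, resp.\ $t_2$). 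As $|u_j| < |t_1|$ and $|v_j| < |t_2|$, the induction hypothesis applied to $s_j \leq_G^{\delta_{1,j}} u_j$ and $s_j \leq_G^{\delta_{2,j}} v_j$ yields, for each $j$, a flow tree $s'_j$ with $u_j \leq_G^{\delta_{2,j}} s'_j$ and $v_j \leq_G^{\delta_{1,j}} s'_j$. Applying Lemma~\ref{lem-replace-children} to $t_1$ with the liftings $\delta_{2,1},\ldots,\delta_{2,k}$ shows that $(\theta_1 + \Delta_2)[s'_1,\ldots,s'_k]$ is a valid flow tree with $t_1 \leq_G^{\Delta_2}(\theta_1 + \Delta_2)[s'_1,\ldots,s'_k]$, and applying it to $t_2$ with the liftings $\delta_{1,1},\ldots,\delta_{1,k}$ shows that $(\theta_2 + \Delta_1)[s'_1,\ldots,s'_k]$ is a valid flow tree with $t_2 \leq_G^{\Delta_1}(\theta_2 + \Delta_1)[s'_1,\ldots,s'_k]$. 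Since $\theta_1 = \sigma + \Delta_1$ and $\theta_2 = \sigma + \Delta_2$, these two trees coincide, both equal to $(\sigma + \Delta_1 + \Delta_2)[s'_1,\ldots,s'_k]$, and this common tree is the desired $s'$. (When $k=0$ both $t_1$ and $t_2$ are leaves, $\Delta_1$ and $\Delta_2$ are necessarily diagonal liftings, and $s' := (\sigma + \Delta_1 + \Delta_2)[\,]$ works directly.)

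I do not expect a deep difficulty here: unlike Lemma~\ref{lem-FG-wqo}, no Higman/Kruskal ingredient is needed, and the whole argument is local surgery on trees. The real work is bookkeeping --- arranging the case split so that $|t_1| + |t_2|$ strictly decreases in every recursive call (in the reduction case because $t_1/p$ is a proper subtree, in the main case because each $u_j$ and $v_j$ is), keeping the liftings $\delta_{1,j},\delta_{2,j}$ aligned so that the two invocations of Lemma~\ref{lem-replace-children} in the main case really return the same tree $s'$, and checking that the root equalities in the reduction case deliver precisely $\Delta_1$ rather than some larger non-negative lifting.
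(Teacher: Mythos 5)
Your proof is correct and follows essentially the same route as the paper's: the amalgam is built from recursively amalgamated children via Lemma~\ref{lem-replace-children} and then substituted back into the two contexts via Claim~\ref{cl-replace-one-subtree}, producing the very same tree $s'$. The only difference is organizational --- the paper uses structural induction on $s$ and handles the descent to the witnessing subtrees $t_1/p$, $t_2/q$ together with the recursion on children in a single step, whereas you induct on $|t_1|+|t_2|$ and peel off the descent as a separate ``reduction'' case.
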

\begin{proof}
By induction on $s$.  Assume $s=\sigma[s_1,\ldots,s_k]$. Since
$s\leq_G^{\Delta_1} t_1$, there is a subtree
$t_1/p=t^1=\rho_1[t_1^1,\ldots,t_k^1]$ of $t_1$ such that
$\sigma\leq\rho_1$ and $s_j\leq_G t_j^1$ for all $j=1,\ldots,k$.
Assume that $\sigma\leq^{\Delta'_1}\rho_1$ and that
$s_j\leq_G^{\Gamma_j}t_j^1$ for $j=1,\ldots,k$. Since $s$ and $t^1$
are valid flow trees, we deduce that $\Delta'_1=\Gamma_1 \cdots
\Gamma_k$.  Symmetrically, from $s\leq_G^{\Delta_2}t_2$, we know that
there is a subtree $t^2=t_2/q$ of $t_2$, of the form
$t^2=\rho_2[t^2_1,\ldots,t^2_k]$ with $\sigma\leq^{\Delta'_2}\rho_2$,
$s_j\leq_G^{\Gamma'_j}t^2_j$ for $j=1,\ldots,k$, and
$\Delta'_2=\Gamma'_1\cdots\Gamma'_k$.

By the induction hypothesis, there exists flow trees
$s'_1,\ldots,s'_k$ such that
$t^1_j\leq_G^{\Gamma'_j}s'_j$ and
$t^2_j\leq_G^{\Gamma_j}s'_j$ for all $j=1,\ldots,k$.
We now define
\begin{xalignat*}{3}
u&\egdef(\rho_2+\Delta'_1)[s'_1,\ldots,s'_k],
&
u'&\egdef t_2[u]_q,
&
s'&\egdef t_1[u']_p,
\end{xalignat*}
and claim that these are valid flow trees, $s'$ being the flow
tree witnessing the Lemma.

To begin with, and since $\Delta'_1=\Gamma_1\cdots\Gamma_k$, $u$ is
well-formed by Lemma~\ref{lem-replace-children} and satisfies
$t^2\leq_G^{\Delta'_1}u$. Since $\rho_2+\Delta'_1=\rho_1+\Delta'_2$,
and since $\Delta'_2=\Gamma'_1\cdots\Gamma'_k$, one also has
$t^1\leq_G^{\Delta'_2}u$.

Then, and since $t^2\leq_G^{\Delta'_1} u$, we have
$t_2=t_2[t^2]_q\leq_G^{\Delta'_1}t_2[u]_q=u'$ as in
Claim~\ref{cl-replace-one-subtree}.  Thus the root of $u'$ is
$\sigma+\Delta_2+\Delta'_1=\rho_1+\Delta_2$.  We deduce
$t^1\leq_G^{\Delta_2}u'$, relying on $t^1\leq_G u$.  As in
Claim~\ref{cl-replace-one-subtree}, we obtain
$t_1\leq_G^{\Delta_2}t_1[u']_p=s'$, proving the first half of the
Lemma.

On the other hand, from $t_2\leq_G u'$ we get $t_2\leq_G^{\Delta_1}
s'$ by just checking that the root of $t_2$, i.e., $\sigma+\Delta_2$,
is smaller than the root of $s'$, i.e., $\sigma+\Delta_1+\Delta_2$.
This provides the other half and completes the proof.
\end{proof}

\section{GVAS-Definable Predicates}
\label{sec-gvas-sets}

We explore in this section a natural notion of computable sets and relations
for the GVAS model, defined as projections of reachability sets.
The context-free grammar ingredient of GVASes is essential in the
proof that the class of
computable
sets is closed under intersection, while the Amalgamation Theorem proves that computable sets are finite union
of shifted periodic sets.

\begin{definition}
  A $n$-dimensional \emph{GVAS-definable predicate} is a subset $\vec{X}$ of $\Nat^n$
  such that there exists a $d$-dimensional GVAS $G$ with
  $d=n+\ell$ for some $\ell\in\Nat$ such that:
\begin{equation} \tag{$\dagger$}\label{eq-gvas-set}
  \vec{X}=\{ \vx\in\Nat^n \mid \exists \ve\in\Nat^\ell : \vzero_d\xrightarrow{G}(\vx,\ve) \} \:.
\end{equation}
  When \eqref{eq-gvas-set} holds, we say that $G$ defines $\vec{X}$ using $\ell$ auxiliary counters.
\end{definition}

The class of GVAS-definable predicates is clearly closed under union,
cartesian product~\footnote{defined via $\vec{X}\times\vec{Y} \egdef \{ (\vx,\vy) \mid \vx\in\vec{X},\ \vy\in\vec{Y}\}$.}, and by
projecting away some components.
We will provide additional closure properties in the remainder of this section.
GVAS-definable predicates form a rich class that strictly contains all Presburger sets, i.e., 
subsets of $\Nat^n$ that are definable in $\FO(\Nat;+)$, the first-order
theory of natural numbers with addition.
\begin{remark}
  Presburger sets are GVAS-definable predicates. The proof is obtained
  by introducing the class of semilinear sets as follows. 
  A \emph{linear set} of $\Nat^n$ is a set of the form
  $\{\vec{b}+\lambda_1\vec{p}_1+\cdots+\lambda_k\vec{p}_k \mid \lambda_j\in\Nat\}$
  where $\vec{b}$ and $\vec{p}_1,\ldots,\vec{p}_k$ are vectors in
  $\Nat^n$. A \emph{semilinear set} of $\Nat^n$ is a finite union of
  linear sets of $\Nat^n$. Let us recall that a subset of $\Nat^n$
  is Presburger if, and only if, it is semilinear~\cite{gs66}. Since
  the class of GVAS-definable predicates is closed under union, it is sufficient to show that
  every linear set is GVAS-definable. We associate with a linear set
  $\vec{X}=\{\vec{b}+\lambda_1\vec{p}_1+\cdots+\lambda_k\vec{p}_k \mid
  \lambda_j\in\Nat\}$ the
  $n$-dimensional GVAS $G$ that generates the regular language $\vec{b}\vec{p}_1^*\ldots\vec{p}_k^*$.
  Notice that $G$ defines the linear set $\vec{X}$ (using no auxiliary counter).
\end{remark}

\begin{example}
\label{ex-power-2-as-gvas-set}
Let $G$ be the 2-dimensional GVAS given by the following four rules:
\begin{xalignat*}{2}
S & \rightarrow \V{0}{1} ~\big|~ \V{1}{0} \, S \, T       \:,
&
T & \rightarrow \V{0}{0} ~\big|~ \V{0}{-1} \, T \, \V{0}{2} \:.
\end{xalignat*}
This GVAS is a variant of the 1-dimensional GVAS given in
Example~\ref{ex-power-2}.
Analogously to that example,
it can be shown that $G$ defines the set
$\vec{X} = \{ (x, y) \in \Nat^2 \mid 1 \leq y \leq 2^x \}$,
which is not semilinear.
\end{example}

A geometrical decomposition of the GVAS-definable predicates can be shown thanks to the
\emph{periodic sets}. 
A subset $\vec{P}$ of $\Nat^d$ is said to be
\emph{periodic}~\cite{DBLP:conf/popl/Leroux11} if it contains the zero
vector, and if
$\vec{x}+\vec{y}\in\vec{P}$ for every $\vec{x},\vec{y}\in\vec{P}$. 
Note that a periodic $\vec{P}$ is not
necessarily finitely generated, hence not necessarily semilinear (see
Example~\ref{ex-power-2-as-gvas-set-cont}).
The following Proposition extends the  known decomposition of Presburger sets
into linear sets.
\begin{proposition}
\label{prop-gvas-set-periodic}
  Every GVAS-definable predicate $\vec{X}\subseteq \Nat^n$ can be decomposed into a
  finite union of sets of the form $\vec{b}+\vec{P}$ where
  $\vec{b}\in\Nat^n$ and $\vec{P}$ is a periodic subset of $\Nat^n$.
\end{proposition}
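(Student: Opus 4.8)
The plan is to combine the well-quasi-ordering of flow trees (Lemma~\ref{lem-FG-wqo}) with the Amalgamation Theorem (Theorem~\ref{thm:amalgamation}), in the same way one proves that an upward-closed subset of a wqo is a finite union of principal filters, but using amalgamation to turn each ``filter of liftings'' into a periodic set.

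First I would fix a $d$-dimensional GVAS $G=(V,\vec{A},R,S)$ that defines $\vec{X}$ with $\ell$ auxiliary counters, so $d=n+\ell$, and I would split configurations as $\vc=(\vx,\ve)$ with $\vx\in\Nat^n$ and $\ve\in\Nat^\ell$. By the very definition of $F(G)$, for all $\vc,\vd\in\Nat^d$ one has $\vc\xrightarrow{G}\vd$ if, and only if, some $t\in F(G)$ has $\root(t)=\vc\xrightarrow{S}\vd$; hence $\vx\in\vec{X}$ iff there exist $\ve\in\Nat^\ell$ and $t\in F(G)$ with $\root(t)=\vzero_d\xrightarrow{S}(\vx,\ve)$. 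Let $F_0\subseteq F(G)$ collect the flow trees whose root has source $\vzero_d$ (necessarily of the form $\vzero_d\xrightarrow{S}\vc$). Since $(F(G),\leq_G)$ is a well-quasi-ordering, so is its restriction to $F_0$, and therefore there is a finite set $\{t_1,\dots,t_m\}\subseteq F_0$ such that every $t\in F_0$ satisfies $t_i\leq_G t$ for some $i$; write $\root(t_i)=\vzero_d\xrightarrow{S}(\vx_i,\ve_i)$.

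Next I would record the key observation that liftings realised inside $F_0$ have a zero source component: if $t_i\leq_G^\Delta t$ with $t\in F_0$ then both roots have source $\vzero_d$, so $\Delta=(\vzero_d,\vb)$ for some $\vb\in\Nat^d$; conversely $t_i\leq_G^{(\vzero_d,\vb)}t$ already forces $t\in F_0$, with $\root(t)=\vzero_d\xrightarrow{S}((\vx_i,\ve_i)+\vb)$. I would then define, for each $i$,
\[
  P_i \egdef \{ \vec{p}\in\Nat^n \mid \exists\,\vec{q}\in\Nat^\ell,\ \exists\, t\in F(G):\ t_i\leq_G^{(\vzero_d,(\vec{p},\vec{q}))}t \}
\]
and check that $P_i$ is periodic. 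It contains $\vzero_n$ (witnessed by $\vec{q}=\vzero_\ell$ and $t=t_i$). For closure under addition, given $\vec{p}_1,\vec{p}_2\in P_i$ with witnesses $\vec{q}_1,t^{(1)}$ and $\vec{q}_2,t^{(2)}$, I would apply Theorem~\ref{thm:amalgamation} to $t_i\leq_G^{(\vzero_d,(\vec{p}_1,\vec{q}_1))}t^{(1)}$ and $t_i\leq_G^{(\vzero_d,(\vec{p}_2,\vec{q}_2))}t^{(2)}$, obtaining some $s'\in F(G)$ with $t_i\leq_G^{(\vzero_d,(\vec{p}_1+\vec{p}_2,\vec{q}_1+\vec{q}_2))}s'$; this witnesses $\vec{p}_1+\vec{p}_2\in P_i$.

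It then remains to prove $\vec{X}=\bigcup_{i=1}^{m}(\vx_i+P_i)$. For $\supseteq$: if $\vx=\vx_i+\vec{p}$ with $\vec{p}\in P_i$, a witnessing $\vec{q}$ and $t$ give $\root(t)=\vzero_d\xrightarrow{S}(\vx,\ve_i+\vec{q})$, whence $\vzero_d\xrightarrow{G}(\vx,\ve_i+\vec{q})$ and $\vx\in\vec{X}$. For $\subseteq$: if $\vx\in\vec{X}$, take $t\in F_0$ with $\root(t)=\vzero_d\xrightarrow{S}(\vx,\ve)$ and $i$ with $t_i\leq_G t$; the associated lifting is $(\vzero_d,(\vx-\vx_i,\ve-\ve_i))$, so $\vx-\vx_i\in P_i$ and $\vx\in\vx_i+P_i$. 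The genuinely new ingredients are Lemma~\ref{lem-FG-wqo} and Theorem~\ref{thm:amalgamation}; the one point that needs care here is the zero-source-component observation, since it is what lets a single amalgamation step \emph{add} two liftings rather than merely amalgamate them, and hence makes each $P_i$ periodic rather than just upward-directed. A routine but worth-stating preliminary is the equivalence between runs $\vc\xrightarrow{G}\vd$ and flow trees with root $\vc\xrightarrow{S}\vd$.
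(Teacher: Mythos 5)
Your proposal is correct and follows essentially the same route as the paper: a finite $\leq_G$-basis of the $S$-rooted, $\vzero_d$-sourced flow trees obtained from Lemma~\ref{lem-FG-wqo}, with the Amalgamation Theorem turning each residue set of liftings into a periodic set (your $P_i$ is exactly the paper's $\vec{P}_s$). You merely spell out the details the paper leaves implicit, namely that liftings between such trees have zero source component, so that a single amalgamation step realizes the \emph{sum} of two liftings.
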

\begin{proof}
  There exists a $d$-dimensional GVAS $G$ that defines the set $\vec{X}$ with $\ell$
  auxiliary counters. Let us consider the set $T$ of flow trees $t$
  such that $\root({t})=(\vec{0}_{d},S,(\vec{x},\vec{e}))$ for some
  $\vec{x}\in\Nat^n$, $\vec{e}\in\Nat^\ell$ and where $S$ is the
  start symbol of $G$. For such a flow tree $t$ in $T$, we denote by
  $\mu(t)$ the vector $\vec{x}$.
  With each $s\in T$ we associate
  the set ${\uparrow} s=\{t\in T \mid s \leq_G t\}$. 
  Since $(T,\leq_G)$ is a wqo, there exists a finite
  subset $T_0$ of $T$ such that
  \[ T = \bigcup_{s\in T_0} {\uparrow} s \:. \]
  Given $s\in T$, we introduce the set $\vec{P}_s=\{\mu(t)-\mu(s)
  \mid t\in {\uparrow} s\}$. Theorem~\ref{thm:amalgamation} shows that
  $\vec{P}_s$ is a periodic set. Now, just observe that the following
  equality holds:
  \[ \vec{X}=\bigcup_{s\in T_0}\mu(s)+\vec{P}_{s} \:. \]
  The proposition is proved.
\end{proof}

\begin{example}
\label{ex-power-2-as-gvas-set-cont}
Continuing Example~\ref{ex-power-2-as-gvas-set},
the set
$\vec{X} = \{ (x, y) \in \Nat^2 \mid 1 \leq y \leq 2^x \}$
may be decomposed into
$\vec{X} = (0, 1) + \vec{P}$
where $\vec{P}$ is the periodic set
$\vec{P} = \{ (x, y) \in \Nat^2 \mid 0 \leq y < 2^x \}$.
\end{example}

The rest of this section discusses various closure properties of
GVAS-definable predicates.
We start with boolean operations.
As mentioned previously,
GVAS-definable predicates are closed under union.
In order to prove closure under intersection,
we first provide a technical lemma that shows how auxiliary counters
of a GVAS can be assumed to be zero at the end of the computation.
\begin{lemma}\label{lem:zero}
  For every $d$-dimensional GVAS $G$ and for every subset
  $I$ of $\{1,\ldots,d\}$,
    there is a $(d+1)$-dimensional GVAS $G_I$ such that for every $\vx\in\Nat^d$
  and for every $c\in \Nat$, we have:
\begin{equation}
\label{eq-lem:zero}
\vzero_{d+1}\xrightarrow{G_I}(\vx,c)
\text{ iff } \vzero_d\xrightarrow{G}\vx \land
  c=0 \land \bigwedge_{i\in I}\vx[i]=0
\:.
\end{equation}
\end{lemma}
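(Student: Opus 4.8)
The plan is to build $G_I$ from $G$ by adding one new counter, indexed $d+1$, that runs ``in parallel'' with the computation and whose job is twofold: (i) at the very start of any successful run it absorbs the total amount that the counters in $I$ will ever receive, and (ii) it can only be emptied in the end if those counters have indeed all returned to $0$. Concretely, I would keep all nonterminals of $G$, add a fresh start symbol $S_I$, and add rules so that a run of $G_I$ has the shape: first fire a finite number of ``charging'' actions that add $1$ to counter $d+1$ (and do nothing else), then run the original grammar $G$ from its old start symbol $S$, then fire a finite number of ``discharging'' actions that subtract $1$ from counter $d+1$. Formally one adds $S_I \rightarrow C\,S\,D$, $C \rightarrow \varepsilon \mid (\vzero_d,1)\,C$, $D \rightarrow \varepsilon \mid (\vzero_d,-1)\,D$, where for $i \notin I$ the new component is $0$ in every old action of $G$ and for the $i \in I$ we need the refinement below.

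The subtle point is that merely charging and discharging counter $d+1$ does not force $\sum_{i\in I}\vx[i]=0$: VAS actions cannot test for zero. The standard trick (as used for weak computation in Petri nets) is to make counter $d+1$ shadow the counters in $I$ with opposite sign. So I would actually transform every original action $\va$ of $G$ into $\va' \in \Zed^{d+1}$ with $\va'[j]=\va[j]$ for $j\le d$ and $\va'[d+1] = -\sum_{i\in I}\va[i]$. Then along any run of the grammar part, the value of counter $d+1$ equals its initial value minus $\sum_{i\in I}\vx[i]$ at the current configuration $\vx$. If we start $G_I$ at $\vzero_{d+1}$, charge counter $d+1$ up to some value $m$ via $C$, then run $G$: since all counters stay in $\Nat$, we need $m \ge \sum_{i\in I}\vx[i]$ throughout, and at the end counter $d+1$ holds $m - \sum_{i\in I}\vx[i] + (\text{final shadow contribution})$; choosing $m$ exactly equal to the total, the end value of counter $d+1$ is $\sum_{i\in I}\vx[i]$ at the \emph{final} configuration. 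Then $D$ discharges it down to $0$, which is possible in $\Nat$ iff this final value is $\ge 0$ (always true) and reaching exactly $0$ iff there were enough $-1$ steps --- but to make $c$ genuinely track a reachable value and force the conjunction, I would instead put the ``$+1$'' charging on the very counters $\vx[i]$, $i\in I$: charge counter $d+1$ and simultaneously $\va'[d+1]=+\sum_{i\in I}\va[i]$ so that counter $d+1$ accumulates exactly $\sum_{i\in I}\vx[i]$ over the run, starting and (after $D$) ending at $0$; monotonicity of VAS (equation~\eqref{eq-vas-monotonic}) guarantees that once $\vzero_d\xrightarrow{G}\vx$ holds we can realize counter $d+1$'s trajectory, and conversely a run of $G_I$ projects to a run of $G$ with the bookkeeping equalities. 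The right-to-left direction of \eqref{eq-lem:zero} uses monotonicity plus the fact that $C$ and $D$ can push $c$ up and back down freely; the left-to-right direction reads off the invariant ``$\text{counter }d{+}1 = \sum_{i\in I}(\text{net of component }i\text{ so far})$'' from the construction, together with $\vx[i]\ge 0$ forcing each partial sum to be nonnegative and the final $=0$ forcing $\vx[i]=0$ for $i\in I$.

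I expect the main obstacle to be getting the sign conventions and the precise invariant exactly right so that \emph{both} implications of \eqref{eq-lem:zero} hold with ``$c=0$'' \emph{and} $\bigwedge_{i\in I}\vx[i]=0$ simultaneously, rather than just one of them: one naive encoding forces $\vx[i]=0$ but leaves $c$ unconstrained, another fixes $c$ but lets the $\vx[i]$ be arbitrary. The clean resolution is to have counter $d+1$ carry exactly $\sum_{i\in I}\vx[i]$ as a running invariant: it is $0$ initially (true, since we start at $\vzero_{d+1}$ and charge nothing extra), each original step adds $\sum_{i\in I}\va[i]$ to it so it stays equal to $\sum_{i\in I}\vx[i]$ by induction, and the grammar forces it to return to $0$ at the end via the $D$-block together with requiring the overall output component $c$ to be the value of counter $d+1$ after $D$ --- which is $0$. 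Hence at the final configuration $\sum_{i\in I}\vx[i]=0$, and since each $\vx[i]\in\Nat$, every $\vx[i]=0$. Once this invariant is pinned down, verifying well-formedness of the flow trees / runs of $G_I$ and the two directions of the ``iff'' is routine bookkeeping using monotonicity.
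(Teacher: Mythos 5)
There is a genuine gap, and it sits exactly at the point you yourself flagged as ``the subtle point''. In your grammar the charging block and the discharging block are produced by two \emph{independent} nonterminals, $C \rightarrow \varepsilon \mid (\vzero_d,1)\,C$ and $D \rightarrow \varepsilon \mid (\vzero_d,-1)\,D$, so a word of $G_I$ has the shape $(\vzero_d,1)^k\,w'\,(\vzero_d,-1)^m$ with $k$ and $m$ unrelated. Whichever sign you give the shadow component ($\va'[d+1]=-\sum_{i\in I}\va[i]$ or $+\sum_{i\in I}\va[i]$), the run gets to choose $m$ after seeing where the $G$-part ended: with the $+$ convention, counter $d{+}1$ holds $\sum_{i\in I}\vx[i]$ after the $G$-part, and $D$ may simply fire exactly that many decrements, reaching $(\vx,0)$ for \emph{every} $\vx$ with $\vzero_d\xrightarrow{G}\vx$, whether or not $\vx[i]=0$ for $i\in I$; with the $-$ convention, take $m=k-\sum_{i\in I}\vx[i]$ and the same happens. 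So the left-to-right implication of \eqref{eq-lem:zero} fails. Your closing paragraph asserts that ``the grammar forces it to return to $0$ at the end via the $D$-block'', but nothing forces anything: returning to $0$ is always \emph{possible}, which is the wrong direction of entailment, and is precisely the ``fixes $c$ but lets the $\vx[i]$ be arbitrary'' failure mode you named without escaping it.

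The missing idea, and the heart of the paper's construction, is to use context-freeness to \emph{match} the two blocks: the paper builds $G_I$ with $L_{G_I}=\bigcup_{k\in\Nat}\vec{a}_+^k\,\mu_I(L_G)\,\vec{a}_-^k$ where $\vec{a}_\pm=(\vzero_d,\pm 1)$ and $\mu_I(\va)=(\va,-\sum_{i\in I}\va[i])$, so that the number of final decrements is forced to equal the number of initial increments (a rule such as $C\rightarrow \vec{a}_+\,C\,\vec{a}_-\mid \mu_I(S)$ achieves this; your regular pattern $C^{\ast}\cdots D^{\ast}$ cannot). Then the quantity $\sum_{i\in I}\vx[i]+\vx[d+1]$ is invariant along $\mu_I(w)$, so a run from $\vzero_{d+1}$ ending in $(\vx,c)$ satisfies $k=\sum_{i\in I}\vx[i]+c+k$, hence $c+\sum_{i\in I}\vx[i]=0$, which pins down $c=0$ \emph{and} all $\vx[i]=0$ simultaneously --- exactly the conjunction neither of your two encodings captures. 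The converse direction then only needs $k$ large enough to keep the budget counter nonnegative, by monotonicity. The rest of your bookkeeping (keeping the first $d$ components of the actions, projecting a run of $G_I$ to a run of $G$) is fine and matches the paper.
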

\begin{proof}
  The idea of the proof is to put the counters in $I$ ``on a budget''
  (see, e.g.,~\cite{phs-mfcs2010} for details on the budgeting
  construction) and to harness
  the expressive power given by context-free grammars to
  non-deterministically initialize the total budget, simulate $G$ with
  the given budget, and finally
   check that the budget is fully restored at the end of the computation,
  which guarantees that the counters in $I$ are zero.

  \medskip

  Formally, let us introduce the function $\Delta_I$ that maps vectors
  $\vx$ of $\setZ^d$ to the number $\Delta_I(\vx)=\sum_{i\in I}\vec{x}[i]$.
  We also introduce the mapping
  $\mu_I:\setZ^d\rightarrow\setZ^{d+1}$ defined by
  $\mu_I(\vx)=(\vx,-\Delta_I(\vx))$. This mapping is
  extended over words of actions as a word morphism, and over
  languages by $\mu_I(L)=\{\mu_I(w) \mid w\in L\}$.
  Let us introduce the actions
  $\vec{a}_+=(\vzero_d,1)$, and
  $\vec{a}_-=(\vzero_d,-1)$.
  In linear time, from $G$ we can define a $(d+1)$-dimensional
  GVAS $G_I$ that generates the following language:
  \[ L_{G_I}=\bigcup_{k\in\Nat}\vec{a}_+^k\mu_I(L_G) \vec{a}_-^k \:. \]
  Let us prove that this GVAS satisfies the lemma.
  We consider $\vx\in\Nat^d$ and $c\in\Nat$. 
  
  \medskip
  
  Assume first that 
  $\vzero_{d+1}\xrightarrow{G_I}(\vx,c)$. In that case, there exists
  $k\in\Nat$ and $w\in L_G$ such
  that $\vzero_{d+1}\xrightarrow{\vec{a}_+^k\mu_I(w) \vec{a}_-^k}(\vx,c)$. Observe
  that we have
  \[
  \vzero_{d+1}\xrightarrow{\vec{a}_+^k}(\vzero_d,k)\xrightarrow{\mu_I(w)}(\vx,c+k)\xrightarrow{\vec{a}_-^k}(\vx,c) \:. \]
  Since $\mu_I(w)$ preserves the sum of the counters in $I$ and of the last counter,
  it follows that
  $\Delta_I(\vzero_d)+k=\Delta_I(\vx)+c+k$. Thus
  $c+\Delta_I(\vx)=0$. It follows that $c=0$ and $\vx[i]=0$ for every
  $i\in I$. Moreover, from $\vzero_d\xrightarrow{w}\vx$ we derive $\vzero_d\xrightarrow{G}\vx$.

  \medskip

  Conversely, let us assume that $\vzero_d\xrightarrow{G}\vx$, $c=0$ and $\vx[i]=0$ for every
  $i\in I$. There exists $w\in L_G$ such that
  $\vzero_d\xrightarrow{w}\vx$. There exists $k\in\Nat$ large enough
  such that
  $(\vzero_d,k)\xrightarrow{\mu_I(w)}(\vx,k-\Delta_I(\vx))=(\vx,k)$. It follows
  that
  $\vzero_{d+1}\xrightarrow{\vec{a}_+^k\mu_I(w)\vec{a}_-^k}(\vx,0)=(\vx,c)$. Thus
  $\vzero_{d+1}\xrightarrow{G_I}(\vx,c)$.
\end{proof}

We are now ready to prove that GVAS-definable predicates are closed
under intersection.\footnote{By contrast, we believe that
``VAS-definable'' predicates are not closed under intersection (unless
one requires  auxiliary counters to be zero at the end of the
computation). This conjecture remains to be proved.}
\begin{lemma}
  \label{lemma:closure-intersection}
  The class of GVAS-definable predicates is closed under intersection.
\end{lemma}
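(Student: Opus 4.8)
The plan is to simulate the two defining GVASes one after the other inside a single larger GVAS, using the context-free control to glue them together and forcing their two output vectors to coincide by an equality test on auxiliary counters; the point of Lemma~\ref{lem:zero} is precisely that such a test, impossible directly in a monotone model, becomes available once we may demand that certain counters vanish at the end of the computation. Let $\vec{X},\vec{Y}\subseteq\Nat^n$ be GVAS-definable, say $\vec{X}$ is defined by an $(n{+}\ell_1)$-dimensional GVAS $G_1$ with $\ell_1$ auxiliary counters and $\vec{Y}$ by an $(n{+}\ell_2)$-dimensional GVAS $G_2$ with $\ell_2$ auxiliary counters. One cannot simply take a product of the two grammars, since context-free languages are not closed under intersection; this is the only genuine obstacle, and the construction below works around it using counters.

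We first build an intermediate GVAS $H$ of dimension $d=3n+\ell_1+\ell_2$, whose counters are grouped into five blocks $A,B$ (each of size $n$), $E_1$ (size $\ell_1$), $C$ (size $n$), $E_2$ (size $\ell_2$). Block $A$ will carry the value $\vx$ to be reported, $E_1,E_2$ carry the auxiliary counters of disjoint renamings of $G_1,G_2$, and $B,C$ are scratch copies of the outputs of $G_1,G_2$ that will be consumed and checked against each other. $H$ has a fresh start symbol $S$ with the single rule $S\to S_1'\,S_2'\,K$ where: $S_1'$ is $G_1$'s grammar with each action $(\va',\va'')\in\setZ^n\times\setZ^{\ell_1}$ recoded as the $\setZ^d$-action that adds $\va'$ to both $A$ and $B$, adds $\va''$ to $E_1$, and is $\vzero$ elsewhere; $S_2'$ is $G_2$'s grammar with each action $(\vc',\vc'')\in\setZ^n\times\setZ^{\ell_2}$ recoded to add $\vc'$ to $C$, add $\vc''$ to $E_2$, and leave $A,B,E_1$ untouched; and $K\to K_1\cdots K_n$ with $K_i\to\vec{k}_i\,K_i\mid\varepsilon$, where $\vec{k}_i$ decrements the $i$-th coordinate of $B$ and the $i$-th coordinate of $C$ and is $\vzero$ elsewhere. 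A run of $H$ from $\vzero$ is then a run $\vzero\xrightarrow{G_1}(\vx,\ve_1)$ recorded in $(A,B,E_1)$ as $(\vx,\vx,\ve_1)$, followed by a run $\vzero\xrightarrow{G_2}(\vy,\ve_2)$ recorded in $(C,E_2)$, followed by a lockstep decrease of $B$ and $C$ by a common $\vec{m}\in\Nat^n$ with $\vec{m}\leq\vx$ and $\vec{m}\leq\vy$. The final phase is monotone decreasing, so no transient negativity occurs; this is exactly why we keep a fresh copy $C$ of $\vy$ rather than trying to ``undo'' $G_2$ on $B$, which a bad run of $G_2$ could make impossible.

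Finally, apply Lemma~\ref{lem:zero} to $H$ with $I$ the set of indices of the blocks $B$ and $C$, yielding a GVAS $G_I$ of dimension $d+1$. For $\vx\in\Nat^n$, a run of $G_I$ from $\vzero$ whose first $n$ coordinates end at $\vx$ forces the final values of $B$ and $C$ to be $\vzero$; in the analysis above this means $\vec{m}=\vx$ (from $B=\vzero$) and $\vec{m}=\vy$ (from $C=\vzero$), hence $\vx=\vy$ and both $\vzero\xrightarrow{G_1}(\vx,\ve_1)$ and $\vzero\xrightarrow{G_2}(\vx,\ve_2)$ hold for some $\ve_1,\ve_2$, i.e.\ $\vx\in\vec{X}\cap\vec{Y}$. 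Conversely, if $\vx\in\vec{X}\cap\vec{Y}$, choose runs $\vzero\xrightarrow{G_1}(\vx,\ve_1)$ and $\vzero\xrightarrow{G_2}(\vx,\ve_2)$ and take $\vec{m}=\vx$ in the comparator; this produces a run of $G_I$ reaching $(\vx,\vzero,\ve_1,\vzero,\ve_2,0)$. Thus $G_I$ defines $\vec{X}\cap\vec{Y}$ using $2n+\ell_1+\ell_2+1$ auxiliary counters, which completes the proof. The real work is carried by the context-free rule $S\to S_1'S_2'K$ together with the monotone comparator $K$ and Lemma~\ref{lem:zero}, which jointly emulate the equality test that closure under intersection secretly requires; everything else is routine bookkeeping on the block structure.
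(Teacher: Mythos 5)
Your proof is correct and follows essentially the same route as the paper's: run the two GVASes in sequence with disjoint counter blocks (the paper phrases this as closure under cartesian product), compare the two output copies by a lockstep-decrementing comparator, and invoke Lemma~\ref{lem:zero} to force the compared blocks to zero, which enforces equality. The only (cosmetic) difference is that you duplicate $G_1$'s output into the reported block $A$ up front, whereas the paper transfers the common value into a fresh block during the comparison phase by actions that increment the new block while decrementing both old copies.
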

\begin{proof}
  Let $\vec{X},\vec{Y}\subseteq \setN^n$ be GVAS-definable. Since the
  class of GVAS-definable predicates is closed under cartesian product, it follows that
  $\vec{X}\times\vec{Y}$ is also GVAS-definable.
  Hence, there exists a $d$-dimensional GVAS $H$
  with $\ell$ auxiliary counters that defines that set. Let us consider
  the mapping $\mu:\setZ^d\rightarrow \setZ^n\times\setZ^d$ defined by
  $\mu(\vec{x},\vec{y},\vec{e})=(\vec{0}_n,\vec{x},\vec{y},\vec{e})$
  for every $\vec{x},\vec{y}\in\setZ^n$ and
  $\vec{e}\in\setZ^\ell$. The mapping $\mu$ is extended as a word morphism.
  We introduce the
  action $\vec{a}_i$ in $\setZ^{n+d}$ defined as follows:
  \[ \vec{a}_i=(\vunit_{i,n},-\vunit_{i,n},-\vunit_{i,n},\vzero_\ell) \:. \]
    Obviously, we can build
  a $(d+n)$-dimensional GVAS $G$ such
  that $L_{G}=\mu(L_H)\vec{a}_1^*\cdots\vec{a}_d^*$.
  Now, let $I=\{n+1,\ldots,3n\}$ and let us apply Lemma~\ref{lem:zero}
  on $G$ and $I$. We obtain a $(d+n+1)$-GVAS that defines
  $\vec{X}\cap\vec{Y}$.
\end{proof}

\begin{remark}
  The class of GVAS-definable predicates is not closed under taking
  complements, see Proposition~\ref{prop:stabcomplement}.
\end{remark}

We now investigate closure under
sum~\footnote{defined via $\vec{X} + \vec{Y} \egdef \{ \vx + \vy \mid \vx\in\vec{X},\ \vy\in\vec{Y}\}$.}
and under the associated Kleene star, which we call periodic hull.
Formally,
the \emph{periodic hull} of a subset $\vec{X} \subseteq \setN^n$ is the set
of finite sums of vectors in $\vec{X}$.
It turns out that the class of GVAS-definable predicates is closed under sum and periodic hull.
Closure under sum can be proved along the same lines as closure under intersection
(see Lemma~\ref{lemma:closure-intersection}).
The detailed proof is left as an exercise.
Closure under periodic hull is more involved and requires well-behaved GVASes.

\smallskip

In the definition of a GVAS-definable predicate $\vec{X}$ given in~(\ref{eq-gvas-set}), the
vector $\vec{e}$ can be seen as auxiliary counters that
can have arbitrary values
at the end of the computation. We say that a $d$-dimensional GVAS $G$ defining a
predicate $\vec{X}\subseteq \setN^n$ using $\ell$ auxiliary
counters is \emph{auxiliary-resetting} if for every
$(\vec{x},\vec{e})\in\setN^n\times\setN^\ell$ such that
$\vec{0}_d\xrightarrow{G}(\vx,\ve)$ we have $\ve=\vec{0}_\ell$. We
also say that $G$ is \emph{output-increasing} if every action of $G$ is an
element of $\setN^n\times\setZ^\ell$ meaning that the output counters $\vx$
cannot be decremented during a computation.

Let us first prove that GVAS-definable predicates can be defined by auxiliary-resetting output-increasing
GVAS. To do so, we introduce, for every $k\in\setN$, the unit vector $\vec{i}_{i,k}$ of
$\setN^k$ defined by $\vec{i}_{i,k}[j]=0$ if $j\not=i$ and
$\vec{i}_{i,k}[i]=1$.
\begin{lemma}\label{lem:strongdef}
  For every $d$-dimensional GVAS $G$ defining a set $\vec{X}\subseteq\setN^n$,
    there is a $(d+n+1)$-dimensional auxiliary-resetting
  output-increasing GVAS defining $\vec{X}$.
\end{lemma}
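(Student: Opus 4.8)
Given a $d$-dimensional GVAS $G$ defining $\vec{X}\subseteq\setN^n$ with $\ell = d-n$ auxiliary counters, I would build a new GVAS whose coordinates are organized as $(\vec{o},\vec{x},\vec{e},\mathtt{b})$: a fresh block $\vec{o}\in\setN^n$ of \emph{output} counters that only ever increases, the original $d$ counters $(\vec{x},\vec{e})$ used to faithfully run $G$, and one extra \emph{budget} counter $\mathtt{b}$ used to zero out the auxiliary block at the end. The total dimension is $n + d + 1 = d+n+1$, as required.

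First I would make $G$ auxiliary-resetting for its $\ell$ auxiliary counters by applying Lemma~\ref{lem:zero} with $I = \{n+1,\ldots,d\}$; this costs one extra counter and yields a GVAS whose runs from $\vzero$ end with the auxiliary block at $\vzero_\ell$. (The new counter $c$ of Lemma~\ref{lem:zero} is also forced to $0$, so it can be absorbed into the auxiliary block.) The point of this preliminary step is that in the final construction the only counters that may hold garbage are the $\vec{o}$'s, and those will be output-increasing by design.

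Second, I would modify the grammar so that every action $\va = (\va_o,\va_x,\va_e)\in\setZ^n\times\setZ^\ell$ of (the auxiliary-resetting) $G$ — here I write the original $d$ coordinates split as $n+\ell$ — is replaced. The intended invariant is $\vec{o} + \vec{x}$ equals the value the original $G$ would have in its first $n$ coordinates. Whenever $G$ wants to decrement some output coordinate $i$ (i.e.\ $\va_x[i]<0$), the new system should instead have moved that much mass from $\vec{x}[i]$ into $\vec{o}[i]$ beforehand, so that the decrement is realized as a decrement of $\vec{o}[i]$, which is forbidden — so instead I keep $\vec{x}$ able to go negative-in-spirit only by pre-loading it. Concretely: introduce transfer actions $\vec{t}_i = (\vec{i}_{i,n},-\vec{i}_{i,n},\vzero_\ell,0)$ (move one unit from $\vec{x}[i]$ to $\vec{o}[i]$), keep each original action acting only on the $(\vec{x},\vec{e})$ block, and wrap the whole computation, using the context-free control as in Lemma~\ref{lem:zero}, as $\mathtt{b}{\uparrow}^K \cdot (\text{start by }\vec{x}\mathbin{:=}\vec{x}+\text{budget from }\mathtt{b}) \cdot \mu(L_{G'}) \cdot (\text{transfers }\vec{t}_i^*) \cdot (\text{restore }\mathtt{b}) \cdot \mathtt{b}{\downarrow}^K$, engineered so that the net effect forces $\vec{x} = \vzero_n$, $\vec{e} = \vzero_\ell$, $\mathtt{b} = 0$ at the end, leaving only $\vec{o} = \vx$. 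This is the standard ``budgeting on a context-free stack'' trick already used in Lemma~\ref{lem:zero}; the output-increasing property is immediate since the only actions touching $\vec{o}$ are the $\vec{t}_i$, which increment it, and auxiliary-resetting follows by the same counting argument (sum of a block is preserved, so if it starts at $0$ and the budget is fully restored it ends at $0$).

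**The main obstacle.** The delicate point is synchronizing the transfers $\vec{t}_i$ with $G$'s computation: $G$ may decrement an output coordinate and later re-increment it, so the amount of mass that must sit in $\vec{x}[i]$ at any moment is $\max$ of the future deficits, not something locally predictable. The clean fix — and the reason a context-free (rather than merely regular) wrapper is needed — is \emph{not} to interleave transfers with $G$ at all, but to (i) initially push a large nondeterministic amount from $\mathtt{b}$ into $\vec{x}$, (ii) run $\mu(L_{G'})$ purely on $(\vec{x},\vec{e})$ so it never touches $\vec{o}$ and its output-coordinate decrements are just decrements of the preloaded $\vec{x}$, (iii) afterwards transfer the \emph{final} value $\vx$ from $\vec{x}$ into $\vec{o}$ via $\vec{t}_i^*$, and (iv) pop the budget back, with the grammar's matched-bracket structure forcing the pushed and popped amounts to coincide, whence $\vec{x}=\vzero_n$ at the end. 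So I expect the bulk of the write-up to be the routine verification that this single layer of stack-matching indeed forces $\vec{x}=\vzero_n$, $\vec{e}=\vzero_\ell$, $\mathtt{b}=0$ while leaving $\vec{o}=\vx$ free, exactly mirroring the proof of Lemma~\ref{lem:zero}.
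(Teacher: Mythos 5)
Your overall architecture --- a fresh block of $n$ output counters touched only by end-of-run transfer actions, with the original $d$ counters demoted to auxiliaries that are forced to zero by the budgeting trick of Lemma~\ref{lem:zero} --- is exactly the paper's. But two of your steps would break correctness. First, you ``make $G$ auxiliary-resetting'' by applying Lemma~\ref{lem:zero} to $G$ with $I=\{n+1,\ldots,d\}$. Lemma~\ref{lem:zero} is a \emph{filter}: it keeps only those runs of $G$ that happen to end with the counters of $I$ at zero. Since $\vec{X}=\{\vx\mid\exists\ve:\vzero_d\xrightarrow{G}(\vx,\ve)\}$ ranges over arbitrary final $\ve$, this step in general defines a strict subset of $\vec{X}$ (for instance, if $L_G=\{(1,1)\}$ with $n=\ell=1$, you would turn $\vec{X}=\{1\}$ into the empty set). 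The paper first appends free drain actions $(\vzero_n,\vzero_n,-\vec{i}_{j,\ell})^*$ (together with the transfers $(\vec{i}_{i,n},-\vec{i}_{i,n},\vzero_\ell)^*$) so that the old counters \emph{can} always be emptied, and only then applies Lemma~\ref{lem:zero} --- once, with $I$ covering all $d$ old coordinates, which also keeps the dimension at $d+n+1$ rather than the $d+n+2$ that your two separate budgeting passes would cost.

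Second, your ``main obstacle'' is not an obstacle, and your fix for it introduces a real bug. Since $G$ is a GVAS started at $\vzero_d$, its runs keep the old output coordinates nonnegative by definition; there is nothing to preload. Worse, preloading $\vec{x}$ from the budget changes which words of $L_G$ are firable: by monotonicity \eqref{eq-gvas-monotonic}, a word $w\in L_G$ with $\vzero_d\not\xrightarrow{w}$ may become firable from the preloaded configuration, so your system can transfer into $\vec{o}$ values that are not in $\vec{X}$, and the matched-bracket accounting at the end only constrains sums of counters, not the individual runs that produced them. The paper's construction avoids this entirely: it runs $\mu(L_G)$ from $\vzero$ on the shifted coordinates, so exactly the runs of $G$ are reproduced, and only afterwards transfers $\vec{x}$ into $\vec{o}$ and drains $\vec{e}$.
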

\begin{proof}
  We introduce the
  mapping $\mu:\setZ^d\rightarrow\setZ^{n+d}$ defined by
  $\mu(\va)=(\vec{0}_n,\va)$. This mapping is extended
    as a word morphism.
    Observe that there exists a
  $(d+n)$-dimensional GVAS $G'$ satisfying the following equality:
\[
  L_{G'}=\mu(L_G)(\vec{i}_{1,n},-\vec{i}_{1,n},\vec{0}_\ell)^*\cdots
  (\vec{i}_{n,n},-\vec{i}_{n,n},\vec{0}_\ell)^*(\vec{0}_n,\vec{0}_n,-\vec{i}_{1,\ell})^*\cdots
  (\vec{0}_n,\vec{0}_n,-\vec{i}_{\ell,\ell})^* \:.
\]
  This GVAS $G'$ satisfies
  $\vec{0}_{d+n}\xrightarrow{G'}(\vx,\vec{0}_n,\vec{0}_\ell)$ with $\vx\in
  \setN^n$ if, and only if, there exists
  $\ve\in\setN^\ell$ such that 
  $\vec{0}_d\xrightarrow{G}(\vx,\ve)$. Moreover, actions of that GVAS
  are in $\setN^n\times\setZ^d$.
  By applying the construction given in the proof of Lemma \ref{lem:zero} on
  $G'$ with $I=\{n,\ldots,d+n\}$ observe that we get a $(d+n+1)$-dimensional auxiliary-resetting
  output-increasing GVAS defining $\vec{X}$. 
\end{proof}

\begin{corollary}
  The class of GVAS-definable predicates is closed under periodic hull.
\end{corollary}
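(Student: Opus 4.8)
The plan is to invoke Lemma~\ref{lem:strongdef} to replace $\vec{X}$ by a well-behaved GVAS and then simply put an iteration rule on top of its start symbol. Concretely, given a GVAS-definable predicate $\vec{X}\subseteq\setN^n$, first take, by Lemma~\ref{lem:strongdef}, an auxiliary-resetting output-increasing GVAS $G=(V,\vec{A},R,S)$ defining $\vec{X}$, say of dimension $d$ with $\ell=d-n$ auxiliary counters (the output counters being the first $n$ components). Then introduce a fresh nonterminal $S^\star\notin V$ and form $G^\star=(V\cup\{S^\star\},\vec{A},R\cup\{S^\star\to S^\star S,\ S^\star\to\varepsilon\},S^\star)$. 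The plan is to show that $G^\star$ defines the periodic hull of $\vec{X}$ (still with $\ell$ auxiliary counters). A purely grammatical observation is that $L_{G^\star}=L_G^\star=\bigcup_{k\ge 0}L_G^{\,k}$, so a run $\vzero_d\xrightarrow{G^\star}\vec{c}$ is the same thing as a sequence of runs $\vzero_d=\vec{c}_0\xrightarrow{w_1}\vec{c}_1\xrightarrow{w_2}\cdots\xrightarrow{w_k}\vec{c}_k=\vec{c}$ with each $w_i\in L_G$.

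For the inclusion ``$\subseteq$'', write $\vec{c}_i=(\vx_i,\ve_i)$ with $\vx_i\in\setN^n$, $\ve_i\in\setN^\ell$, and prove by induction on $i$ that $\ve_i=\vzero_\ell$ and $\vx_i-\vx_{i-1}\in\vec{X}$. The base case $\ve_0=\vzero_\ell$ is clear. For the step, assume $\ve_{i-1}=\vzero_\ell$, so the $i$-th sub-run is $(\vx_{i-1},\vzero_\ell)\xrightarrow{w_i}(\vx_i,\ve_i)$ with $w_i\in L_G$. Output-increasingness of $G$ gives $\vx_{i-1}\leq\vx_i$ componentwise, hence $\vx_i-\vx_{i-1}\in\setN^n$; moreover, along this sub-run every intermediate partial sum of the output components of the actions is a sum of vectors in $\setN^n$, hence itself in $\setN^n$, so we may subtract the constant vector $(\vx_{i-1},\vzero_\ell)$ from every configuration of the sub-run and stay in $\setN^d$. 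This yields a genuine run $\vzero_d\xrightarrow{w_i}(\vx_i-\vx_{i-1},\ve_i)$ of $G$, i.e.\ $\vzero_d\xrightarrow{G}(\vx_i-\vx_{i-1},\ve_i)$. Auxiliary-resettingness of $G$ then forces $\ve_i=\vzero_\ell$, and the definition of $\vec{X}$ via $G$ gives $\vx_i-\vx_{i-1}\in\vec{X}$. Telescoping, $\vx_k=\sum_{i=1}^k(\vx_i-\vx_{i-1})$ lies in the periodic hull of $\vec{X}$, and $\ve_k=\vzero_\ell$ (so $G^\star$ is itself auxiliary-resetting, which takes care of the ``$\exists\ve$'' in the definition of definability).

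For ``$\supseteq$'', if $\vy=\vx_1+\cdots+\vx_k$ with each $\vx_i\in\vec{X}$ (the empty sum $\vy=\vzero_n$ being covered by $S^\star\to\varepsilon$), then $\vzero_d\xrightarrow{G}(\vx_i,\vzero_\ell)$ by auxiliary-resettingness, and monotonicity~\eqref{eq-gvas-monotonic} lets us chain these: $(\textstyle\sum_{j<i}\vx_j,\vzero_\ell)\xrightarrow{G}(\sum_{j\le i}\vx_j,\vzero_\ell)$ for each $i$, whence $\vzero_d\xrightarrow{G^\star}(\vy,\vzero_\ell)$. Combining the two inclusions shows the periodic hull of $\vec{X}$ is precisely $\{\vy\in\setN^n\mid\exists\ve\in\setN^\ell:\vzero_d\xrightarrow{G^\star}(\vy,\ve)\}$, i.e.\ GVAS-definable.

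The hard part is the subtraction step in the ``$\subseteq$'' direction: in a general VAS one cannot shift a run down by a nonzero constant, because intermediate configurations may go negative. Output-increasingness is exactly what makes the output coordinates survive the shift, and auxiliary-resettingness is exactly what pins the auxiliary coordinates to $\vzero_\ell$ at each seam $\vec{c}_i$ so that the next shift is again by a vector of the form $(\vx_i,\vzero_\ell)$; this is why the well-behavedness supplied by Lemma~\ref{lem:strongdef} is needed, and everything else is routine.
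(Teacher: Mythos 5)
Your proof is correct and follows essentially the same route as the paper's: invoke Lemma~\ref{lem:strongdef} to get an auxiliary-resetting, output-increasing GVAS, build a GVAS for the Kleene star of its language, and then use output-increasingness to shift each sub-run down to the origin and auxiliary-resettingness to pin the auxiliary counters to zero at each seam. The only difference is presentational: you make the star construction explicit and spell out why the shifted run stays in $\Nat^d$, a step the paper's proof leaves implicit.
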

\begin{proof}
  Assume that a predicate $\vec{X}\subseteq \setN^n$ is
  GVAS-definable. Lemma~\ref{lem:strongdef} shows that $\vec{X}$ is
  defined by a $d$-dimensional auxiliary-resetting
  output-increasing GVAS $G$ using $\ell$ auxiliary counters.
  As context-free languages are closed under Kleene star,
  there exists a GVAS $G'$ satisfying
  $L_{G'} = L_G^*$.
  We show that $G'$ defines the periodic hull of $\vec{X}$. Let
  $\vec{y}=\vec{x}_1+\cdots+\vec{x}_k$ with $k\in\setN$ and
  $\vec{x}_1,\ldots,\vec{x}_k\in\vec{X}$ and let us prove that
  $\vec{0}_d\xrightarrow{G'}(\vec{y},\vec{0}_\ell)$. We introduce
  $\vec{y}_j=\vec{x}_1+\cdots+\vec{x}_j$ for every
  $j\in\{0,\ldots,k\}$. Let $i\in\{1,\ldots,k\}$. Since $\vx_i\in
  \vec{X}$ and $G$ is auxiliary-resetting,
  we get $\vec{0}_d\xrightarrow{G}(\vec{x}_i,\vec{0}_\ell)$. By
  monotony, we can add on both sides the vector
  $(\vec{y}_{i-1},\vec{0}_\ell)$ and derive
  $(\vec{y}_{i-1},\vec{0}_{\ell})\xrightarrow{G}(\vec{y}_{i},\vec{0}_\ell)$.
  We get $(\vec{y}_0,\vec{0}_{\ell}) \xrightarrow{G} (\vec{y}_{1},\vec{0}_\ell) \cdots \xrightarrow{G} (\vec{y}_k,\vec{0}_\ell)$.
  Since $\vec{y}_0=\vec{0}_n$ and $\vec{y}_k=\vec{y}$, we have proved that
  $\vec{0}_d\xrightarrow{G'}(\vec{y},\vec{0}_\ell)$. Conversely, let
  $\vec{y}\in\setN^n$ and $\vec{e}\in\setN^\ell$ such that
  $\vec{0}_d\xrightarrow{G'}(\vec{y},\vec{e})$ and let us prove that
  $\vec{y}$ is in the periodic hull of $\vec{X}$.
  Since $L_{G'} = L_G^*$, we have
  $(\vec{y}_0,\vec{e}_0) \xrightarrow{G} (\vec{y}_{1},\vec{e}_1) \cdots \xrightarrow{G} (\vec{y}_k,\vec{e}_k)$
  for some sequence $(\vec{y}_0,\vec{e}_0), \ldots, (\vec{y}_k,\vec{e}_k)$ such that
  $(\vec{y}_0,\vec{e}_0)=\vec{0}_d$ and $(\vec{y}_k,\vec{e}_k)=(\vec{y},\vec{e})$.
  Since
  $G$ is output-increasing we deduce that $\vec{x}_i$, defined as
  $\vec{x}_i = \vec{y}_i-\vec{y}_{i-1}$, is in $\setN^n$ and satisfies 
  $(\vec{0}_n,\vec{e}_{i-1})\xrightarrow{G}(\vec{x}_i,\vec{e}_i)$.
  As $G$ is auxiliary-resetting, by induction we deduce that
  $\vec{e}_i=\vec{0}$ for every $i$. It follows that
  $\vec{x}_i\in\vec{X}$. As
  $\vec{y}=\vec{x}_1+\cdots+\vec{x}_k$, we conclude that
  $\vec{y}$ is in the periodic hull of $\vec{X}$.
\end{proof}

To conclude this section,
we discuss closure under relational composition
and under the associated Kleene star (namely, the reflexive-transitive closure).
For the purpose of GVAS-definability,
we view binary relations on $\Nat^n$ as subsets of $\Nat^{2n}$.
The class of GVAS-definable binary relations on $\Nat^n$ is closed under
relational composition.
This claim follows from closure of GVAS-definable predicates under
cartesian product, intersection and projection.
However,
GVAS-definable binary relations are not closed under reflexive-transitive closure in general,
as the following example shows.

\begin{example}
  Consider the binary relation $\vec{R}$ on $\Nat$ defined by $\vec{R} = \{(x, 2x) \mid x \in \Nat\}$.
  The binary relation $\vec{R}$ is clearly GVAS-definable.
    However, its reflexive-transitive closure
  $\vec{R}^* = \{(x, 2^k x) \mid x, k \in \Nat\}$
  is not GVAS-definable.
  Indeed,
  if $\vec{R}^*$ were GVAS-definable then,
  by Proposition~\ref{prop-gvas-set-periodic},
  there would exist some $\vec{b} \in \Nat^2$, 
  some periodic $\vec{P} \subseteq \Nat^2$ 
  and two distinct powers $2\leq 2^k<2^\ell$  such that
  $$
  (1, 2^k), (1, 2^\ell) \in \vec{b} + \vec{P} \subseteq \vec{R}^*  \:.
  $$
  We now use the assumption that $\vec{P}$ is periodic and derive a contradiction.
  Let us write $\vec{b} = (b_1, b_2)$.
  Note that $b_1 \leq 1$ and $b_2 \leq 2^k$ since $(1, 2^k) \in \vec{b} + \vec{P}$.
  There are three cases.
  \begin{itemize}
  \item
    If $\vec{b} = \vec{0}$ then $(1, 2^k)$ and $(1, 2^\ell)$ are both in $\vec{P}$,
    hence,
    $(2, 2^k + 2^\ell)$ is also in $\vec{P}$ by periodicity,
    and so $(2, 2^k + 2^\ell) \in \vec{R}^*$.
    This is impossible since $2^k + 2^\ell$ is not a power of two (as $k \neq \ell$).
  \item
    If $b_1 > 0$ then $b_1 = 1$, hence, $(1, 2^\ell) = (1, b_2) + (0, p_2)$
    for some $(0, p_2) \in \vec{P}$.
    Note that $p_2 > 0$ since $b_2 \leq 2^k < 2^\ell$.
    Since $\vec{b} + \vec{P} \subseteq \vec{R}^*$,
    we get by periodicity that $(1, b_2 + n p_2) \in \vec{R}^*$
    for every $n \in \Nat$.
    This means that $b_2 + n p_2$ is a power of two for every $n \in \Nat$,
    which is impossible as $p_2 > 0$.
  \item
    If $b_1 = 0$ and $b_2 > 0$ then $(1, 2^\ell) = (0, b_2) + (1, p_2)$
    for some $(1, p_2) \in \vec{P}$.
    Since $\vec{b} + \vec{P} \subseteq \vec{R}^*$,
    we get by periodicity that $(n, b_2 + n p_2) \in \vec{R}^*$
    for every $n \in \Nat$.
    Taking $n = b_2 + 1$,
    we derive that $(b_2 + 1, b_2 + (b_2 + 1) p_2) \in \vec{R}^*$.
    This is impossible since $b_2 > 0$ and $y$ is a multiple of $x$
    for every $(x, y) \in \vec{R}^*$.
  \end{itemize}
\end{example}

 \section{Weakly Computable Functions}
\label{sec-weakcomp}

There is a classical notion of number-theoretical functions
weakly computable by Petri nets~\cite{hack76b}. In this section, we extend
the idea to GVASes.

As we argued in the introduction, the notion of weakly computable
functions has recently gained new relevance with the development of
well-structured systems that go beyond Petri nets and VASSes in
expressive power, while sharing some of their characteristics.

\bigskip

The expected way for a GVAS to compute a numerical function $f:\Nat\to\Nat$ is to start with some
input number $n$ stored in a designated input counter and, 
from that configuration, eventually reach a configurations with $f(n)$ in a designated output counter. In
order for that GVAS to be correct (as a computer for $f$), it should be impossible that it
reaches a value differing from $f(n)$ in
the output counter. In that case, we say that the GVAS \emph{strongly
  computes} $f$. This notion of correctness is fine with other models
like Minsky
machines but it is too strong for GVASes and does not lead to an
interesting family of computable functions. In fact, GVAS
are essentially nondeterministic devices, and the above notion
of strongly computing some function does not accommodate nondeterminism
nicely.

With this in mind, and in the setting of VASes, Rabin defined a notion of ``weakly computing $f$'' that
combines the following two principles:
\begin{description}
\item[Completeness]
For any $n\in\Nat$,
there is a computation with input $n$ and output $f(n)$;
\item[Safety]
Any computation from input $n$ to some output $r$ satisfies $r\leq f(n)$.
\end{description}
This leads to our definition of weak GVAS computers,
where the input and output counters are the first two components.
\begin{definition}[Weak GVAS computers]
\label{def-wpn}
Let $f:\Nat\to\Nat$ be a total function. A \emph{weak GVAS computer}
(with $\ell$ auxiliary counters) for $f$ is a $d$-dimensional GVAS $G$  with
 $d=2+\ell$ that satisfies the
following two properties:
\begin{align}
  \tag{CO}\label{eq-co}
    \forall n: \exists n',\vec{e}:\: &
                                (n,0,\vzero_\ell)
                                \xrightarrow{G}
                                (n',f(n),\vec{e})
                                \:,
  \\
  \tag{SA}\label{eq-sa}
    \forall n, n',r,\vec{e}:\: &
                            (n,0,\vzero_\ell)\xrightarrow{G}(n',r,\vec{e})
                            \text{ implies } r\leq f(n)
                            \:.
\end{align}
We say that $f$ is \emph{weakly computable}, or WC, if there is a weak
GVAS computer for it.
\end{definition}
For convenience, Definition~\ref{def-wpn} assumes that the
input is given in the first counter of $G$, and that the
result is found in the second counter. Note that
$G$ may use its $\ell$ last counters for auxiliary
calculations. We focus on total functions over the natural numbers
rather than total functions over the vectors of natural numbers to simplify the
presentation. However, results given in this section can be easily
extended to this more general setting.

\begin{example}[A weak computer for exponentiation]
  \label{ex:power-WC}
  Example~\ref{ex-power-2} shows that the function
  $f:\Nat\rightarrow\Nat$ defined by $f(n)=2^{n}$ is WC.
  \qed
\end{example}

Only monotonic functions can be weakly computed in the above sense.
This is an immediate consequence of the monotonicity of
GVASes (see \eqref{eq-gvas-monotonic}).
Recall that a total function $f:\Nat\to\Nat$ is \emph{non-decreasing} if
$n\leq m$ implies $f(n)\leq f(m)$.
\begin{proposition}[Monotonicity of WC functions]
\label{prop-WCPN-mono}
If $f$ is WC then $f$ is non-decreasing.
\end{proposition}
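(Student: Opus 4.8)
The plan is to derive monotonicity directly from the completeness and safety conditions of Definition~\ref{def-wpn}, using the monotonicity of GVAS runs recorded in~\eqref{eq-gvas-monotonic}. Suppose $G$ is a weak GVAS computer for $f$ with $\ell$ auxiliary counters, and fix $n\leq m$; I want to show $f(n)\leq f(m)$. First I would invoke condition~\eqref{eq-co} at the input $n$: there exist $n'$ and $\vec e$ with $(n,0,\vzero_\ell)\xrightarrow{G}(n',f(n),\vec e)$.

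Next I would bridge from $n$ to $m$ using monotonicity. Writing $m=n+k$ with $k\in\Nat$, apply~\eqref{eq-gvas-monotonic} with the displacement vector $\vec v=(k,0,\vzero_\ell)$ to the run obtained above. This yields $(n+k,0,\vzero_\ell)\xrightarrow{G}(n'+k,f(n),\vec e)$, i.e.\ $(m,0,\vzero_\ell)\xrightarrow{G}(n'+k,f(n),\vec e)$. So there is a computation of $G$ from input $m$ that places the value $f(n)$ in the output counter.

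Finally I would apply the safety condition~\eqref{eq-sa} to this computation from input $m$: with $n'':=n'+k$, $r:=f(n)$ and the same $\vec e$, the run $(m,0,\vzero_\ell)\xrightarrow{G}(n'',r,\vec e)$ forces $r\leq f(m)$, that is, $f(n)\leq f(m)$. Since $n\leq m$ was arbitrary, $f$ is non-decreasing. There is no real obstacle here: the only point requiring a little care is that the input counter is the \emph{first} component, so the correct displacement vector to add is $(k,0,\vzero_\ell)$ (leaving the output and auxiliary counters untouched), which is what makes the output value literally unchanged at $f(n)$ when passing from input $n$ to input $m$; everything else is a direct chaining of~\eqref{eq-co}, \eqref{eq-gvas-monotonic} and~\eqref{eq-sa}.
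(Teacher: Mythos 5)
Your proof is correct and follows exactly the paper's own argument: apply~\eqref{eq-co} at input $n$, shift the resulting run by $(m-n,0,\vzero_\ell)$ using~\eqref{eq-gvas-monotonic}, and conclude with~\eqref{eq-sa} at input $m$. Nothing to add.
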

\begin{proof}
Assume that $n\leq m$ and pick any weak GVAS computer $G$ for $f$.
By \eqref{eq-co}, we have
$(n,0,\vzero_\ell) \xrightarrow{G}
(n',f(n),\vec{e})$ for some $n'\in\Nat$ and $\vec{e}\in\Nat^\ell$.
By monotonicity, it follows that $(n+(m-n),0,\vzero_\ell) \xrightarrow{G}
(n'+(m-n),f(n),\vec{e})$. We get $f(n)\leq f(m)$ by \eqref{eq-sa}.
\end{proof}

We may now relate WC computability with
GVAS-definability.
\begin{lemma}\label{lem:WCGVAS}
  A total function $f:\Nat\to\Nat$ is WC if, and only if,
  $f$ is non-decreasing and the following set is GVAS-definable.
  \[ \{(x,y)\in\setN\times\setN \mid y\leq f(x)\} \:. \]
\end{lemma}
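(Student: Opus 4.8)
The plan is to prove both directions separately, using the characterization of GVAS-definable predicates and the closure properties established earlier in the section.

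\medskip

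\noindent\textbf{($\Rightarrow$) From WC to GVAS-definability.}
Assume $f$ is WC and let $G$ be a $d$-dimensional weak GVAS computer for $f$ with $\ell=d-2$ auxiliary counters.
Non-decreasingness of $f$ is immediate from Proposition~\ref{prop-WCPN-mono}.
For GVAS-definability, the first step is to observe that the set
$\vec{Y} = \{(x, y) \in \Nat^2 \mid y \leq f(x)\}$
can be described in terms of the reachability relation of $G$:
by \eqref{eq-co} and \eqref{eq-sa}, for every $x$, the set of values $r$ reachable in the output counter starting from $(x,0,\vzero_\ell)$ is exactly the downward-closed set $\{0,1,\ldots,f(x)\}$.
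Thus $y \leq f(x)$ holds iff there exist $x',\vec{e}$ and some $r \geq y$ with $(x,0,\vzero_\ell)\xrightarrow{G}(x',r,\vec{e})$, equivalently (using monotonicity of GVASes to subtract from the output counter, or simply the downward closure) iff $(x,0,\vzero_\ell)\xrightarrow{G}(x',y,\vec{e})$ for some $x',\vec{e}$.
The second step is to build a GVAS $G'$ that first loads an arbitrary value into the input counter while recording its value in a fresh counter (via a loop of actions incrementing both the real input counter and a copy counter), then simulates $G$, so that the copy counter holds $x$, the output counter holds some $y \le f(x)$, and the remaining counters are auxiliary; projecting onto (copy, output) yields $\vec{Y}$. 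Formally this is a straightforward language manipulation: $L_{G'} = \vec{a}^{*}\,\mu(L_G)$ where $\vec{a}$ increments the input-copy pair and $\mu$ embeds actions of $G$ into the higher dimension leaving the copy counter untouched.

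\medskip

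\noindent\textbf{($\Leftarrow$) From GVAS-definability to WC.}
Assume $f$ is non-decreasing and $\vec{Y} = \{(x,y) \mid y \le f(x)\}$ is GVAS-definable, say by a $d$-dimensional GVAS $H$ with $\ell$ auxiliary counters so that $\vec{Y} = \{(x,y) \mid \exists \vec{e}:\vzero_d \xrightarrow{H}(x,y,\vec{e})\}$.
The goal is to turn $H$ into a weak computer, i.e., one whose input counter is a genuine input rather than something nondeterministically generated.
The first step is to arrange, using closure under intersection (Lemma~\ref{lemma:closure-intersection}) or a direct budgeting-style construction, that the generation of $x$ happens up front: build a GVAS $G$ whose language is of the form $\vec{b}^{*}$ followed by a simulation of $H$, where $\vec{b}$ simultaneously decrements a new "input" counter and increments the counter playing the role of $x$ in $H$; this forces the $x$-value fed to $H$ to be at most the initial input $n$.
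The second step checks the two conditions: \eqref{eq-sa} holds because any reachable output $r$ satisfies $r \le f(x) \le f(n)$ (the first inequality since $(x,r)\in\vec{Y}$, the second by non-decreasingness and $x \le n$), and \eqref{eq-co} holds because we can choose to transfer all of $n$ into $x$, giving $x = n$, and then use $(n, f(n)) \in \vec{Y}$ to reach output $f(n)$.

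\medskip

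\noindent\textbf{Main obstacle.}
Both directions are essentially bookkeeping with GVAS constructions on context-free languages; the one point requiring care is matching the quantifier structure of weak computation with the existential-projection structure of GVAS-definability. In the ($\Leftarrow$) direction one must ensure the input counter $n$ is neither consumed by $H$'s simulation nor able to "leak" extra value into $x$ beyond $n$ --- the decrement-and-transfer loop $\vec{b}^*$ executed before simulating $H$ handles this, but one should check that $H$ never touches that designated input counter (which holds because it is a fresh coordinate). Symmetrically, in the ($\Rightarrow$) direction one must verify that the downward-closure observation genuinely captures $y \le f(x)$ and not merely $y = f(x)$; this is exactly where the safety condition \eqref{eq-sa} together with monotonicity of GVASes is used. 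I expect no deep difficulty beyond organizing these language morphisms cleanly.
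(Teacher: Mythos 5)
Both directions of your proof have a genuine gap. In the ($\Rightarrow$) direction, the claim that the set of outputs reachable from $(x,0,\vzero_\ell)$ is the full interval $\{0,1,\ldots,f(x)\}$ is false: \eqref{eq-co} and \eqref{eq-sa} guarantee that $f(x)$ is reachable and that nothing above it is, but say nothing about smaller values. Monotonicity \eqref{eq-gvas-monotonic} only lets you \emph{add} a vector to both endpoints of a run, never subtract, so it cannot lower the output counter. A weak computer for the constant function $f\equiv 1$ whose language is the single word $(0,1,\vzero_\ell)$ reaches only output $1$, never $0$; your $G'$ with $L_{G'}=\vec{a}^{*}\mu(L_G)$ would then define $\{(x,1)\mid x\in\setN\}$ rather than $\{(x,y)\mid y\leq 1\}$. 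The paper repairs exactly this point by appending a decrement loop on the output counter, taking $L_{G'}=(1,0,1,\vzero_\ell)^*\mu(L_G)(0,-1,0,\vzero_\ell)^*$, and establishing the reverse inclusion via the observation that the first counter is touched only by the prefix.

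The ($\Leftarrow$) direction has a more serious flaw: you treat the first counter of $H$ as an input that is ``fed'' to $H$, but in the definition of GVAS-definability $H$ starts from $\vzero_d$ and \emph{generates} $x$ itself. Preloading $x_0\leq n$ and then running $H$ unchanged produces runs of $H$ from a non-zero configuration, which certify nothing about membership in $\vec{Y}$; worse, $H$ remains free to increment its $x$-counter arbitrarily during its own run, so nothing ties the output to $n$. Concretely, for $f(x)=x$ take $H$ with $L_H=(1,1)^*(1,0)^*$, which defines $\{(x,y)\mid y\leq x\}$: after your transfer loop the simulation can still fire $(1,1)$ arbitrarily often and drive the output past $f(n)$, violating \eqref{eq-sa}. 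The paper instead charges the input counter for every increment of $H$'s $x$-counter \emph{throughout} the simulation, via the morphism $\mu(a,b,\vec{e})=(-a,b,a,\vec{e})$ wrapped in matched padding $\bigcup_{k}(1,0,0,\vzero_\ell)^k\,\mu(L_H)\,(-1,0,0,\vzero_\ell)^k$; the invariant that the sum of the first and third counters is preserved then forces the generated $x'$ to satisfy $x'\leq n$, after which $y\leq f(x')\leq f(n)$ follows from a genuine run of $H$ from $\vzero_d$ and monotonicity of $f$. Your worry in the ``main obstacle'' paragraph that $H$ ``never touches the designated input counter'' is precisely backwards: the fact that $H$ never interacts with the input counter is what breaks the safety argument.
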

\begin{proof}
  Assume first that $f$ is WC. There exists a weak GVAS computer
  (with $\ell$ auxiliary counters) for $f$ given as a $d$-dimensional GVAS
  $G$  with $d=2+\ell$ that satisfies (\ref{eq-co}) and
  (\ref{eq-sa}). Let us consider the mapping
  $\mu:\setZ^d\rightarrow\setZ^{d+1}$ defined by
  $\mu(a,b,\vec{e})=(0,b, a,\vec{e})$ for every $a,b\in\setZ$, and
  $\vec{e}\in\setZ^\ell$.
  The mapping $\mu$ is extended as a word
  morphism. Let us show that a GVAS $G'$ such that
  $L_{G'}=(1,0,1,\vec{0}_\ell)^*\mu(L_G)(0,-1,0,\vec{0}_\ell)^*$ is
  defining the set $\{(x,y)\in\setN\times\setN \mid y\leq f(x)\}$. Let
  $(x,y)$ in that set. From (\ref{eq-co}), there exists a word
  $\sigma\in L_G$, $x'\in\setN$ and $\vec{e}\in\setN^\ell$ such that
  $(x,0,\vec{0}_\ell)\xrightarrow{\sigma}(x',f(x),\vec{e})$. The word
  $\sigma'=(1,0,1,\vec{0}_\ell)^x\mu(\sigma)(0,-1,0,\vec{0})^{f(x)-y}$
  shows that
  $(0,0,0,\vec{0}_\ell)\xrightarrow{\sigma'}(x,y,x',\vec{e})$. As
  $\sigma'\in L_{G'}$, we get
  $(0,0,0,\vec{0}_\ell)\xrightarrow{G'}(x,y,x',\vec{e})$. Conversely,
  assume that $(0,0,0,\vec{0}_\ell)\xrightarrow{\sigma'}(x,y,x',\vec{e})$
  for some $x,y,x'\in\setN$, $\vec{e}\in\setN^\ell$ and $\sigma'\in
  L_{G'}$, and let us prove
  that $y\leq f(x)$. By definition of $G'$, there exists $n,m\in\setN$
  and a word $\sigma\in L_G$ such that
  $\sigma'=(1,0,1,\vec{0}_\ell)^n\mu(\sigma)(0,-1,0,\vec{0}_\ell)^m$. It
  follows that
  $(n,0,n,\vec{0}_\ell)\xrightarrow{\mu(\sigma)}(x,y+m,x',\vec{e})$. Since
  actions occurring in $\mu(\sigma)$ cannot modify the first counter,
  we get $n=x$. Moreover,
  $(x,0,\vec{0}_\ell)\xrightarrow{\sigma}(x',y+m,\vec{e})$. From
  (\ref{eq-sa}), we derive $y+m\leq f(x)$. Hence $y\leq f(x)$. We have
  proved that $\{(x,y)\in\setN\times\setN \mid y\leq f(x)\}$ is
  GVAS-definable.

  \medskip

  Conversely, let us assume that $f$ is non-decreasing and that $\{(x,y)\in\setN\times\setN \mid
  y\leq f(x)\}$ is GVAS-definable. There exists a $d$-dimensional GVAS $G$ with
  $d=2+\ell$ such that:
  \[ \{(x,y) \mid y\leq f(x)\}=\{(x,y)\mid \exists
  \vec{e}\in\setN^\ell : (0,0,\vec{0}_\ell)\xrightarrow{G}(x,y,\vec{e})\} \:. \]
  Let us consider the mapping
  $\mu:\setZ^d\rightarrow\setZ^{d+1}$ defined by
  $\mu(a,b,\vec{e})=(-a,b, a,\vec{e})$ for every $a,b\in\setZ$, and $\vec{e}\in\setZ^\ell$.
  The mapping $\mu$ is extended as a word
  morphism. Let us show that a GVAS $G'$ such that
  $L_{G'}=\bigcup_{k\in\setN}(1,0,0,\vec{0}_\ell)^k\mu(L_G)(-1,0,0,\vec{0}_\ell)^k$
  is a weak GVAS computer for $f$. Let us first consider
  $x\in\setN$. By definition of $G$, there exists a word $\sigma\in
  L_G$ and $\vec{e}\in\setN^\ell$ such that
  $(0,0,\vec{0}_\ell)\xrightarrow{\sigma}(x,f(x),\vec{e})$. Notice
  that for $k$ large enough, we have $(k+x,0,0,\vec{0}_\ell)
  \xrightarrow{\mu(\sigma)}(k,f(x),x,\vec{e})$. The word
  $\sigma'= (1,0,0,\vec{0}_\ell)^k\mu(\sigma)(-1,0,0,\vec{0}_\ell)^k$
  is such that
  $(x,0,0,\vec{0}_\ell)\xrightarrow{\sigma'}(0,f(x),x,\vec{e})$. Hence
  (\ref{eq-co}) is satisfied by $G'$. Finally, let us assume that
  $(x,0,0,\vec{0}_\ell)\xrightarrow{\sigma'}(z,y,x',\vec{e})$ for a
  word $\sigma'\in L_{G'}$ and $x',y,z\in\setN$ and
  $\vec{e}\in\setN^\ell$. There exists $k\in\setN$ and $\sigma\in L_G$
  such that
  $\sigma'=(1,0,0,\vec{0}_\ell)^k\mu(\sigma)(-1,0,0,\vec{0}_\ell)^k$. It
  follows that
  $(x+k,0,0,\vec{0}_\ell)\xrightarrow{\mu(\sigma)}(z+k,y,x',\vec{e})$. By
  definition of $\mu$, since the effect of the sum of the first and third
  counters is zero, we get $x+k+0=z+k+x'$. Hence $x'\leq x$ and in
  particular $f(x')\leq f(x)$. Moreover, we
  have $(0,0,\vec{0}_\ell)\xrightarrow{\sigma}(x',y,\vec{e})$. By
  definition of $G$, we get $y\leq f(x')$. We have proved that $y\leq
  f(x)$. Hence (\ref{eq-sa}) is satisfied by $G'$. We have proved that
  $G'$ is a weak GVAS computer for $f$
\end{proof}

By combining Lemma~\ref{lem:WCGVAS} and the decomposition of
GVAS-definable sets given by Proposition~\ref{prop-gvas-set-periodic},
we obtain two interesting, albeit negative, results on WC functions
and GVAS-definable sets.
\begin{proposition}\label{prop:sublin}
Let $f$ be an unbounded WC function. Then there exists a rational
number $c>0$ and some $z\in\setZ$ such that $f(n)\geq c n+z$ for every $n\in\Nat$. 
\end{proposition}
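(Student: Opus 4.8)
The plan is to derive the result from Lemma~\ref{lem:WCGVAS} together with the geometric decomposition of Proposition~\ref{prop-gvas-set-periodic}. Since $f$ is WC, the set $\vec{X}=\{(x,y)\in\setN\times\setN \mid y\leq f(x)\}$ is GVAS-definable, so by Proposition~\ref{prop-gvas-set-periodic} we may write $\vec{X}=\bigcup_{i=1}^k \vec{b}_i+\vec{P}_i$ for finitely many base vectors $\vec{b}_i\in\setN^2$ and periodic sets $\vec{P}_i\subseteq\setN^2$. The first key observation is that, because $f$ is unbounded, at least one piece $\vec{b}_i+\vec{P}_i$ must contain pairs $(x,y)$ with $x$ arbitrarily large; fix such an $i$ and write $\vec{b}=\vec{b}_i=(b_1,b_2)$, $\vec{P}=\vec{P}_i$.

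Next I would exploit periodicity of $\vec{P}$. Since the first coordinate of pairs in $\vec{b}+\vec{P}$ is unbounded, $\vec{P}$ contains some vector $\vec{p}=(p_1,p_2)$ with $p_1>0$; fix one. By periodicity, $\vec{b}+n\vec{p}=(b_1+np_1,\ b_2+np_2)\in\vec{X}$ for every $n\in\Nat$, which means $b_2+np_2\leq f(b_1+np_1)$ for all $n$. Setting $m=b_1+np_1$, this says $f(m)\geq b_2+\frac{p_2}{p_1}(m-b_1)$ whenever $m\equiv b_1\pmod{p_1}$, i.e.\ $f(m)\geq \frac{p_2}{p_1}m + z_0$ along an arithmetic progression, with $z_0=b_2-\frac{p_2 b_1}{p_1}\in\setQ$. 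Here I need $p_2>0$: if $p_2=0$ then $\vec{b}+n\vec{p}$ would be in $\vec{X}$ with bounded second coordinate but unbounded first, which is fine and harmless, so instead I argue that \emph{some} piece must also supply growth in the $y$-direction — more carefully, since $f$ is unbounded, for every bound $N$ there is $(x,y)\in\vec{X}$ with $y>N$, and as there are finitely many pieces one piece $\vec{b}_j+\vec{P}_j$ contains such pairs for arbitrarily large $N$, forcing $\vec{P}_j$ to have a vector with positive second coordinate; a short case analysis combining the two (as in the Example after Lemma~\ref{lemma:closure-intersection}) then isolates a single piece delivering a genuine linear lower bound $f(m)\geq c'm+z_0$ with $c'>0$ rational along an arithmetic progression $m\equiv b_1\pmod{p_1}$.

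Finally I would remove the arithmetic-progression restriction using that $f$ is non-decreasing. For an arbitrary $n\in\Nat$, let $m$ be the least integer with $m\geq n$ and $m\equiv b_1\pmod{p_1}$; then $m\leq n+p_1$, and by monotonicity $f(n)\geq f(m-p_1)\geq c'(m-p_1)+z_0 \geq c'(n-p_1)+z_0 = c'n + (z_0-c'p_1)$ — wait, more directly: $f(n)\leq f(m)$ is the wrong direction, so instead pick $m$ the largest integer with $m\leq n$ in the progression, so $n-p_1< m\leq n$, and then $f(n)\geq f(m)\geq c'm+z_0 > c'(n-p_1)+z_0 = c'n + z_1$ with $z_1=z_0-c'p_1\in\setQ$. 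Clearing denominators to replace $z_1$ by an integer $z\in\setZ$ (decreasing it), and keeping $c=c'>0$ rational, yields $f(n)\geq cn+z$ for all $n$, as required. The main obstacle is the bookkeeping in the middle step: ensuring that one and the same periodic piece witnesses both the unboundedness of $x$ and a positive rate of growth in $y$; this is where the case analysis on whether $b_1,p_1,p_2$ vanish (mirroring the three-case argument in the preceding Example) does the real work, and care is needed because a priori different pieces could be responsible for large $x$ versus large $y$.
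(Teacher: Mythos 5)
Your proposal follows essentially the same route as the paper: Lemma~\ref{lem:WCGVAS} to get GVAS-definability of $\{(x,y) \mid y\leq f(x)\}$, Proposition~\ref{prop-gvas-set-periodic} for the decomposition into shifted periodic pieces, a period vector giving a linear lower bound along an arithmetic progression, and monotonicity of $f$ (Proposition~\ref{prop-WCPN-mono}) to extend to all $n$. The one place you stop short is the step you yourself flag as ``the main obstacle'': ensuring a single piece supplies both $p_1>0$ and $p_2>0$. This needs no case analysis and no combination of two pieces. Since $f$ is unbounded, some piece $(b_1,b_2)+\vec{P}$ must contain points with arbitrarily large second coordinate, which forces some $(p_1,p_2)\in\vec{P}$ with $p_2>0$; for \emph{that same vector}, $p_1>0$ is automatic, because $p_1=0$ would give $f(b_1)\geq b_2+kp_2$ for all $k$ by periodicity, contradicting that $f(b_1)$ is a fixed natural number. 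This is exactly how the paper argues. Two minor points to tidy in your last step: your choice of ``the largest $m\leq n$ in the progression'' fails for the finitely many $n<b_1$, so you must additionally shrink $z$ to cover those (as the paper does explicitly), and note that the statement only asks for rational $c$ and integer $z$, so clearing denominators is only needed for $z$.
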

\begin{proof}
  Lemma~\ref{lem:WCGVAS} shows that the set $\vec{X}$ defined as $\{(n,m) \mid m\leq f(n)\}$ is
  GVAS-definable. Proposition~\ref{prop-gvas-set-periodic} shows that $\vec{X}$
  can be decomposed into a
  finite union of sets of the form $(a,b)+\vec{P}$ where
  $(a,b)\in\Nat^2$ and $\vec{P}$ is a periodic subset of
  $\Nat^2$. Since $f$ is unbounded, there exists $(p,q)\in \vec{P}$
  such that $q>0$. It follows $(a,b)+k(p,q)\in\vec{X}$ for every
  $k\in\Nat$. In particular $f(a+kp)\geq b+kq$ for every $k\in\Nat$.
  As $f(a)\in\Nat$ and $q>0$, we deduce that $p>0$. Let us consider
  $n\in\setN$ such that $n\geq a$ and observe that there exists
  $k\in\setN$ such that:
  \[ k\leq \frac{n-a}{p}<k+1 \:. \]
  It follows that $a+kp\leq n$ and in particular $f(a+kp)\leq f(n)$.
  Hence $f(n)\geq b+kq\geq b+(\frac{n-a}{p}-1)q$. Introducing
  $c=\frac{q}{p}$, we deduce that $f(n)- c n\geq b-c(a+p)$ for every
  $n\geq a$. We have proved the lemma with any $z\in\setZ$ satisfying
  $z\leq f(n)-cn$ for every $0\leq n<a$ and $z\leq  b-c(a+p)$.
\end{proof}

\begin{proposition}
\label{prop:stabcomplement}
The complement of a GVAS-definable set is not always GVAS-definable.
\end{proposition}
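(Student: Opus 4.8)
The plan is to exhibit a single GVAS-definable set whose complement is not GVAS-definable. A convenient witness is the region under the exponential,
$\vec{X} = \{(x,y)\in\Nat^2 \mid y \le 2^x\}$.
First I would record that $\vec{X}$ is GVAS-definable: the function $f(n)=2^n$ is weakly computable (Example~\ref{ex:power-WC}), hence by Lemma~\ref{lem:WCGVAS} the set $\{(x,y)\mid y\le f(x)\}$ is GVAS-definable. (Alternatively, one can point at Example~\ref{ex-power-2-as-gvas-set} and note that $\{(x,y)\mid y\le 2^x\}=\{(x,y)\mid 0\le y\le 2^x\}$.) It then remains to prove that the complement $\vec{X}^{c}=\{(x,y)\in\Nat^2 \mid y\ge 2^x+1\}$ is not GVAS-definable. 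The one fact about $\vec{X}^{c}$ I would use repeatedly is that it meets every column: $(N,2^N+1)\in\vec{X}^c$ for all $N\in\Nat$.

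\textbf{Key steps.} Arguing by contradiction, I would assume $\vec{X}^c$ is GVAS-definable and apply Proposition~\ref{prop-gvas-set-periodic} to write $\vec{X}^c=\bigcup_{i=1}^m(\vec{b}_i+\vec{P}_i)$ with $\vec{b}_i\in\Nat^2$ and each $\vec{P}_i\subseteq\Nat^2$ periodic. The crucial step is the claim that every vector occurring in any $\vec{P}_i$ has first coordinate $0$. Suppose not: say $(p,q)\in\vec{P}_i$ with $p\ge 1$, and write $\vec{b}_i=(a,b)$. By periodicity, $k(p,q)\in\vec{P}_i$ for every $k\in\Nat$, so $(a+kp,\,b+kq)\in\vec{X}^c$, i.e.\ $b+kq\ge 2^{a+kp}+1$ for all $k\in\Nat$. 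This is impossible, since the left-hand side is affine in $k$ while the right-hand side is at least $2^{a}\cdot 2^{k}+1$ (using $p\ge 1$), which is exponential in $k$. Hence each $\vec{b}_i+\vec{P}_i$ is contained in the single column $\{(x,y)\mid x=(\vec{b}_i)[1]\}$, so $\vec{X}^c$ is contained in the finite union of columns $x\in\{(\vec{b}_1)[1],\dots,(\vec{b}_m)[1]\}$. Choosing any $N$ strictly larger than all these values, the point $(N,2^N+1)$ lies in $\vec{X}^c$ but in none of these columns, a contradiction. This completes the proof.

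\textbf{Main obstacle.} The only delicate point is the ``single column'' step, i.e.\ ruling out periodic directions that move along the $x$-axis; this is exactly where the super-linear growth of $2^x$ is essential, since a periodic ray can grow at most linearly and therefore cannot remain above an exponential curve. Morally this is the same phenomenon that drives Proposition~\ref{prop:sublin}, used in the opposite direction: there, periodicity forces unbounded WC functions to be $\Omega(n)$; here, periodicity prevents a GVAS-definable set from tracking the region \emph{strictly above} an exponential. One caveat worth noting in the write-up is the choice of $\vec{X}$: had we instead taken $\{(x,y)\mid 1\le y\le 2^x\}$, its complement would contain the whole line $\{(x,0)\mid x\in\Nat\}$, which is a nontrivial periodic set moving in the $x$-direction, and the ``single column'' claim would fail; working with $\{(x,y)\mid y\le 2^x\}$ sidesteps this.
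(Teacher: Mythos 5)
Your proof is correct, but it takes a genuinely different route from the paper's. The paper argues by reduction to Proposition~\ref{prop:sublin}: assuming the complement $\{(n,m)\mid 2^n<m\}$ were GVAS-definable, it massages the defining GVAS (swapping the first two counters and shifting) into one defining $\{(n,m)\mid m\leq\lfloor\log_2(n+1)\rfloor\}$, so that Lemma~\ref{lem:WCGVAS} would make the unbounded sublinear function $n\mapsto\lfloor\log_2(n+1)\rfloor$ weakly computable --- contradicting Proposition~\ref{prop:sublin}. You instead apply the periodic decomposition of Proposition~\ref{prop-gvas-set-periodic} directly to the complement and derive a purely geometric contradiction: any period with positive first coordinate would generate a ray growing linearly while forced to stay above an exponential, so the complement would be confined to finitely many columns, yet it meets every column. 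Both arguments ultimately rest on the same tool (the Amalgamation Theorem via Proposition~\ref{prop-gvas-set-periodic}), but yours is self-contained modulo that decomposition and avoids both the counter-swapping construction and the detour through weak computability of $\log$; the paper's is shorter given that Proposition~\ref{prop:sublin} is already established, and it exhibits a reusable transfer principle (definability of an epigraph-like set yields weak computability). Your closing caveat about why $\{(x,y)\mid y\leq 2^x\}$ must be preferred over $\{(x,y)\mid 1\leq y\leq 2^x\}$ is a genuine and worthwhile observation. One small inaccuracy: your parenthetical alternative cites Example~\ref{ex-power-2-as-gvas-set}, but that example defines $\{(x,y)\mid 1\leq y\leq 2^x\}$, which omits the points with $y=0$; to use it you would additionally need to take the union with the (clearly GVAS-definable) line $\{(x,0)\mid x\in\Nat\}$. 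Your primary route via Lemma~\ref{lem:WCGVAS} is unaffected.
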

\begin{proof}
  Recall from Example~\ref{ex:power-WC} that the function $f:\Nat\to\Nat$
  defined by $f(n)=2^{n}$ is WC. We derive from Lemma~\ref{lem:WCGVAS}
  that $\vec{X}\egdef\{(n,m) \mid m\leq 2^n\}$ is
  GVAS-definable. Assume, by way of contradiction, that the complement
  $\vec{Y}\egdef\{(n,m) \mid 2^n< m\}$ is GVAS-definable.
  From a GVAS defining $\vec{Y}$,
  we easily derive a GVAS defining $\vec{Z}\egdef\{(n,m) \mid 2^m \leq n+1\}$,
  by swapping the first two counters and then decrementing the first counter
  by two at the end.
  It follows from Lemma~\ref{lem:WCGVAS} that the mapping $g:\Nat\to\Nat$
  defined by $g(n)=\lfloor \log_2 (n+1)\rfloor$ is WC,
   contradicting Proposition~\ref{prop:sublin} since $g$ is unbounded and sublinear.
  Hence $\vec{Y}$, i.e.,
  $\setN^2\setminus\vec{X}$, cannot be GVAS-definable.
              \end{proof}

 \section{Hyper-Ackermannian GVAS}
\label{sec-hypack}

In this section we construct GVASes that weakly compute functions from
the Fast Growing Hierarchy. 
Our main result is the following.
\begin{theorem}
\label{thm-Fa-WC}
The Fast Growing functions $(F_\alpha)_{\alpha<\omega^\omega}$ are
weakly computable (by GVASes).
\end{theorem}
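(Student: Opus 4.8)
The plan is to show, by transfinite induction on $\alpha<\omega^\omega$, that each $F_\alpha$ is weakly computable, exploiting the closure properties of weakly computable functions and the considerable expressive flexibility provided by the context-free grammar component of a GVAS. Recall that the Fast Growing functions are defined (with the standard choice $F_0(n)=n+1$) by $F_{\alpha+1}(n) = F_\alpha^{(n+1)}(n)$ (iterating $F_\alpha$ roughly $n$ times) and, at limits, $F_\lambda(n) = F_{\lambda_n}(n)$ along a fixed fundamental sequence; for $\alpha<\omega^\omega$ one may equivalently work with the finite-level functions $F_k$ (Ackermann-like) and their ``nested'' generalizations indexed by finite sequences of naturals, which avoids limit-ordinal bookkeeping. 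So the real content is: \emph{(i)} a base case weakly computing $F_1(n)=2n+1$ (or $n\mapsto 2^n$, already handled by Example~\ref{ex-power-2}), and \emph{(ii)} an iteration/diagonalization step that, given a weak GVAS computer for a non-decreasing $f$, produces one for $n\mapsto f^{(n)}(n)$, i.e.\ that the class of WC functions is closed under this ``diagonal iteration'' operation. The successor step of the induction is exactly an instance of (ii), and finitely many applications reach every $F_k$; the functions $F_\alpha$ for $\omega\le\alpha<\omega^\omega$ are obtained by the analogous construction applied to parametrized families, nesting finitely deep.

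The heart of the construction is step (ii), and here is where the grammar does the work that a plain VAS cannot do. Given a weak GVAS computer $H$ for $f$, with input counter, output counter, and auxiliary counters, the idea is to build a GVAS that, on input $n$, lays down a derivation of the form (schematically) $S \step \text{init}(n)\; \NTfont{Iter}$, where $\NTfont{Iter}$ recursively expands as $\NTfont{Iter} \step (\text{one copy of the relabelled rules of } H)\; \NTfont{Iter} \mid \varepsilon$, feeding the output of one $H$-phase into the input of the next via auxiliary ``transfer'' counters (swapping input/output as in the proof of Lemma~\ref{lem:WCGVAS}), and using a separate decrementing counter loaded with $n$ to bound the number of iterations from above. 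The \textbf{Completeness} direction is routine: choose exactly $n$ iterations, run each $H$-phase along its completing computation, and the output counter holds $f^{(n)}(n)$. The \textbf{Safety} direction is the delicate part: one must argue that \emph{no} computation of the assembled GVAS can exceed $f^{(n)}(n)$ in the output counter. This follows because (a) the iteration counter limits the number of $H$-phases to at most $n$, (b) each $H$-phase is safe, so it can only multiply the current value up to $f$ of it, (c) $f$ is non-decreasing, and (d) the transfer between phases is exact (or at worst lossy in the safe direction) — so by a short induction the value after $j$ phases is at most $f^{(j)}(n)\le f^{(n)}(n)$. To make this rigorous it is cleanest to first normalize $H$ using the tools already developed: Lemma~\ref{lem:zero} (or the stronger Lemma~\ref{lem:strongdef}) lets us assume the auxiliary counters are zero between phases and the output counter is output-increasing, which removes the worry that leftover garbage in auxiliary counters corrupts the next phase's safety bound.

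For the transfinite part, I would carry out the induction in the following order. First, fix the base: $F_0(n)=n+1$ is trivially WC (a one-rule GVAS), and $F_1$ is WC either directly or as $\dbl$ iterated, giving a concrete small example to anchor the construction. Next, prove the Diagonal Iteration Lemma in the form: if $f$ is non-decreasing and WC then so is $n\mapsto f^{(n+1)}(n)$ (the exact number of iterations is cosmetic, since WC is closed under composition with $n\mapsto n+1$). Third, conclude $F_k$ is WC for all finite $k$ by $k$-fold application. Fourth, handle $\omega\le\alpha<\omega^\omega$: writing $\alpha$ in Cantor normal form as a finite sum of $\omega^i$ terms, the function $F_\alpha$ is obtained by a finite nesting of the same iteration pattern where the level index is itself decremented along a counter; concretely I would generalize the Diagonal Iteration Lemma to a \emph{parametrized} version computing, from a uniformly-WC family $(g_k)_k$, the function $n\mapsto g_n(n)$, and apply it the appropriate finite number of times. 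The main obstacle throughout is the Safety proof of the iteration gadget: one has to be careful that the grammar cannot ``cheat'' by interleaving phases, running a phase on a stale or inflated value, or skipping the iteration-counter decrements — the budgeting/zeroing normalization and a careful invariant tracking (number of completed phases, current value $\le f^{(j)}(n)$, iteration counter $+ \ j \le n$) are what close this gap. I expect the mechanical verification of these invariants against the precise rule set to be the most laborious, though conceptually straightforward, part of the argument.
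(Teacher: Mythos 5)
Your overall architecture (implement the recursive definition of $F_\alpha$ via grammar recursion, use a counter as an iteration budget, prove safety by an invariant) is the same as the paper's, and your successor step is essentially the paper's $\REC$/$\REST$ gadget. But there are two genuine gaps in the limit-ordinal part, and they are exactly where the technical difficulty of the theorem lives.

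First, the induction hypothesis ``each $F_\alpha$ is WC'' is too weak to carry out your limit step. To diagonalize $n\mapsto g_n(n)$ you need a \emph{single} GVAS taking the index $k$ as a counter value and weakly computing $g_k$ --- and for $\alpha$ between $\omega^2$ and $\omega^\omega$ you need this uniformity simultaneously in all $d$ coefficients of the Cantor normal form. Once you state the strengthened hypothesis precisely, you are forced into one monolithic GVAS with the ordinal $\alpha<\omega^d$ encoded in unary across $d$ counters $\kappa_0,\dots,\kappa_{d-1}$, which is the paper's $G_d$ with its $\NTF$, $\REC$ and $\LIM_i$ nonterminals; the ``finite nesting of parametrized lemmas'' does not actually compose as separate black boxes. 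Second, and more seriously, your safety invariant ``value after $j$ phases is at most $f^{(j)}(n)$'' only covers the successor case. In the limit case the GVAS can under-load the parameter: weak copying means the fundamental-sequence index gets some $m\le n$ rather than $n$, and leftover increments of $\kappa_{i-1}$ mean the ordinal reached is some $\alpha\oplus(\omega^{i-1}\cdot m)$ rather than $\lambda(n)$. Safety then requires $F_{\alpha\oplus\lambda(m)}(n)\le F_{\alpha\oplus\lambda}(n)$ for $m\le n$, and $F_\alpha(n)\le F_{\alpha\oplus\alpha'}(n)$. These do \emph{not} follow from monotonicity of $f$ or from general monotonicity in the ordinal index --- the latter is false ($\alpha\le\alpha'$ does not imply $F_\alpha(n)\le F_{\alpha'}(n)$, as the paper notes). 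The paper has to prove these as separate Lemmas~\ref{lem-Fmono-alpha} and~\ref{lem-Fmono-alphaSeq}, by a dedicated induction on ordinals, and they are the load-bearing steps of the safety proof (Lemma~\ref{lem:safetyGd}, cases \eqref{ruleFlim} and \eqref{ruleLim1}). Your proposal silently assumes them when it asserts the ``parametrized'' step is safe, so as written the safety half of the limit case does not go through.
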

Note that these are exactly the multiply-recursive functions
$F_\alpha$. They include functions that are not primitive-recursive
(the $F_\alpha$ for $\omega\leq\alpha<\omega^\omega$) and that are
thus not weakly computable by VASSes
(see~\cite[section~2]{jantzen80}).  We do not know whether
$F_{\omega^\omega}$ is weakly computable by a GVAS, or whether there
exist WC functions that are not multiply-recursive.
\\

The rest of this section proves Theorem~\ref{thm-Fa-WC}.  The detailed
proof illustrates how the GVAS model makes it manageable to define
complex constructions precisely, and to formally prove their
correctness. By contrast, observe how in less abstract models e.g.,
the Timed-Arc Petri Nets of~\cite{HSS-lics2012}, only schematic
constructions are given for
weakly computing functions, and only an outline for a
correctness proof can be provided.

We follow
notation and definitions from~\cite{schmitz-toct2016} and consider
functions $F_\alpha:\Nat\to\Nat$ indexed by an ordinal
$\alpha<\epsilon_0$ (though we shall only build GVASes for functions
with $\alpha<\omega^\omega$).  Any such ordinal can be written in
Cantor normal form (CNF) $\alpha =
\omega^{\alpha_1}+\cdots+\omega^{\alpha_m}$ with $\alpha>\alpha_1\geq
\cdots\geq\alpha_m$. When $m=0$, $\alpha$ is $0$. When $\alpha_m=0$,
$\alpha$ is a successor of the form $\beta+\omega^0$, i.e., $\beta+1$,
and when $\alpha_m>0$, $\alpha$ is a limit ordinal.  When $\alpha\neq
0$, we often decompose $\alpha$ under the form $\alpha =
\gamma+\omega^{\alpha_m}$ so that the smallest summand in $\alpha$'s
CNF is exposed.  CNFs are often written more concisely using
coefficients, as in $\alpha=\omega^{\alpha_1}\cdot c_1 + \cdots +
\omega^{\alpha_m}\cdot c_m$, with now $\alpha>\alpha_1>
\cdots>\alpha_m$ and $\omega>c_1,\ldots,c_m>0$.

With each limit ordinal $\lambda<\epsilon_0$, one associates a fundamental
sequence $(\lambda(n))_{n<\omega}$  such that
$\lambda=\sup_n\lambda(n)$.  These are defined inductively as follows.
\begin{align}
\label{eq-L1}\tag{L1}
  (\gamma + \omega^{\beta + 1})(n) &= \gamma +
  \omega^{\beta}\cdot(n+1)
\:,
\\
\label{eq-LL}\tag{LL}
 (\gamma + \omega^{\lambda})(n) &= \gamma
  + \omega^{\lambda(n)}
\:.
\end{align}

For instance, Eq.~\eqref{eq-L1} gives $\omega(n)$, i.e., $\omega^1(n)
= \omega^0\cdot(n + 1)=n+1$ and $(\omega^{3}\cdot 6 + \omega^{2}\cdot
3)(n) = \omega^{3}\cdot 6 + \omega^{2}\cdot 2 + \omega\cdot(n+1)$.
Similarly, Eq.~\eqref{eq-LL} gives $\omega^{\omega}(n) = \omega^{\omega(n)}
= \omega^{n+1}$.
Note that the fundamental sequences satisfy $\lambda(0) < \cdots <
\lambda(n) < \lambda(n+1) < \cdots < \lambda$ for any limit ordinal
$\lambda$ and index $n$.

We may now define our fast growing functions $F_\alpha:\Nat\to\Nat$
for $\alpha<\epsilon_0$ by
induction on the $\alpha$ index.
\begin{align}
\label{eq-F0}\tag{F0}
  F_0(x) &= x+1
\:,\\[-1.5em]
\label{eq-F1}\tag{F1}
  F_{\alpha + 1}(x) &= F_{\alpha}^{\omega(x)}(x) =
  \overbrace{F_{\alpha}(\cdots (F_{\alpha}}^{x+1\text{
  times}}(x) )\cdots )
\:,\\
\label{eq-FL}\tag{FL}
  F_{\lambda}(x) &= F_{\lambda(x)}(x)
\:.
\end{align}
As shown ---e.g., in~\cite{schmitz-toct2016}--- these functions are \emph{strictly expansive} and
\emph{monotonic}, i.e., for all ordinals $\alpha<\epsilon_0$ and all $n,n'\in\Nat$:
\begin{gather}
\label{eq-F-exp}\tag{FX}
n < F_\alpha(n)
\:,
\\
\label{eq-F-mono}\tag{FM}
n\leq n' \implies F_\alpha (n)\leq F_\alpha (n')
\:.
\end{gather}

Given two ordinals in Cantor normal form $\alpha=\omega^{\beta_1}+ \cdots
+ \omega^{\beta_m}$ and $\alpha'=\omega^{\beta'_1} + \cdots + \omega^{\beta'_n}$,
we denote by $\alpha \oplus \alpha'$ their \emph{natural sum}
$\sum_{k=1}^{m+n} \omega^{\gamma_k}$, where $\gamma_1 \geq \ldots \geq \gamma_{m+n}$
is a reordering of $\beta_1, \ldots, \beta_m, \beta'_1, \ldots, \beta'_n$.
The $F_\alpha$ functions are not monotonic in the ordinal
index, i.e., $\alpha\leq\alpha'$ does not always entail
$F_\alpha(n)\leq F_{\alpha'}(n)$, see~\cite[section A.2]{schmitz-toct2016}. However, our construction
relies on similar monotonicity properties, albeit for special cases
of $\alpha$ and $\alpha'$, that we now state.
\begin{lemma}
\label{lem-Fmono-alpha}
For any ordinals $\alpha,\alpha'<\epsilon_0$ and any $n\in\Nat$,
$F_{\alpha}(n) \leq F_{\alpha\oplus\alpha'}(n)$.
\end{lemma}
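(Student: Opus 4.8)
The plan is to proceed by induction on $\alpha'$, using the $F_\alpha$ recursion~\eqref{eq-F0}--\eqref{eq-FL} together with the strict expansiveness~\eqref{eq-F-exp} and monotonicity~\eqref{eq-F-mono} of the $F$ functions in their numerical argument. The base case $\alpha' = 0$ is trivial since $\alpha \oplus 0 = \alpha$. For the inductive step, write $\alpha' = \gamma' + \omega^{\delta}$ exposing the smallest summand of $\alpha'$ in CNF, and distinguish whether $\delta = 0$ (so $\alpha'$ is a successor $\beta'+1$) or $\delta > 0$ (so $\alpha'$ is a limit). The key structural fact I would want is that $\alpha \oplus \alpha'$, when $\alpha' = \beta' + 1$, is the successor $(\alpha \oplus \beta') + 1$, and that when $\alpha'$ is a limit ordinal $\lambda'$, $\alpha \oplus \lambda'$ is also a limit ordinal whose fundamental sequence is $(\alpha \oplus \lambda')(n) = \alpha \oplus \lambda'(n)$; both follow by inspecting the CNF definitions of $\oplus$ and of the fundamental sequences~\eqref{eq-L1}--\eqref{eq-LL}.

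Granting these, the successor case goes: $F_{\alpha \oplus \alpha'}(n) = F_{(\alpha \oplus \beta')+1}(n) = F_{\alpha\oplus\beta'}^{\,n+1}(n)$. Since $F_{\alpha\oplus\beta'}(m) \geq m$ for all $m$ by~\eqref{eq-F-exp} and $F_{\alpha\oplus\beta'}$ is monotone by~\eqref{eq-F-mono}, iterating $n+1$ times only increases the value: $F_{\alpha\oplus\beta'}^{\,n+1}(n) \geq F_{\alpha\oplus\beta'}(n)$, and then the induction hypothesis (applied with $\beta'$ in place of $\alpha'$, noting $\beta' < \alpha'$) gives $F_{\alpha\oplus\beta'}(n) \geq F_\alpha(n)$. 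The limit case is even shorter: $F_{\alpha\oplus\lambda'}(n) = F_{(\alpha\oplus\lambda')(n)}(n) = F_{\alpha\oplus\lambda'(n)}(n) \geq F_\alpha(n)$, where the last inequality is the induction hypothesis applied with $\lambda'(n) < \lambda'$ in place of $\alpha'$. One should also handle the degenerate subcase where $\alpha \oplus \beta'$ or $\alpha$ is $0$ separately, but $F_0(n) = n+1$ still satisfies the needed bounds, so nothing breaks.

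I expect the main obstacle to be purely bookkeeping: verifying the two CNF identities for $\oplus$ interacting with successors and with fundamental sequences. In particular, for the limit case one must check that the smallest exponent of $\alpha \oplus \lambda'$ is the smallest exponent of $\lambda'$ (true because $\lambda'$ being a limit means its smallest exponent is positive, hence it survives the merge as the overall smallest), so that formula~\eqref{eq-LL} or~\eqref{eq-L1} applied to $\alpha \oplus \lambda'$ acts exactly on the $\lambda'$-part and produces $\alpha \oplus \lambda'(n)$. This is routine but is the one place where the argument genuinely uses the definition of natural sum rather than just the $F$-recursion, so it deserves to be stated carefully.
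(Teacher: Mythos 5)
Your successor case is fine and matches the paper, but the limit case contains a genuine gap: the identity $(\alpha\oplus\lambda')(n)=\alpha\oplus\lambda'(n)$ that your argument hinges on is false in general. The fact that the smallest exponent of $\lambda'$ is positive only guarantees that this exponent \emph{survives} the merge; it does not make it the smallest exponent of $\alpha\oplus\lambda'$. If $\alpha$ is a successor (smallest exponent $0$), then $\alpha\oplus\lambda'$ is itself a successor, e.g.\ $1\oplus\omega=\omega+1$, and no fundamental sequence applies at all. If $\alpha$ is a limit whose smallest exponent is positive but smaller than that of $\lambda'$, then $\alpha\oplus\lambda'$ is a limit but \eqref{eq-L1}/\eqref{eq-LL} act on the summand coming from $\alpha$, so $(\alpha\oplus\lambda')(n)=\alpha(n)\oplus\lambda'$; for instance $(\omega\oplus\omega^2)(n)=\omega^2+n+1$, whereas $\omega\oplus\omega^2(n)=\omega\cdot(n+2)$.

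The deeper issue is that induction on $\alpha'$ alone cannot close these cases. When $\alpha=\beta+1$ and $\alpha'$ is a limit, one has $\alpha\oplus\alpha'=(\beta\oplus\alpha')+1$, so by \eqref{eq-F1} you need $F_{\beta}^{\,n+1}(n)\leq F_{\beta\oplus\alpha'}^{\,n+1}(n)$, i.e.\ the lemma for $\beta$ with the \emph{same} $\alpha'$ — your induction hypothesis does not cover this since $\alpha'$ has not decreased. Similarly, the subcase $(\alpha\oplus\alpha')(n)=\alpha(n)\oplus\alpha'$ needs the lemma for $\alpha(n)<\alpha$ with the same $\alpha'$. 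The paper's proof therefore inducts lexicographically on $(\alpha',\alpha)$ and splits into five cases ($\alpha'=0$; $\alpha=0$; $\alpha'$ successor; $\alpha$ successor; both limits, the last subdivided according to which of the two has the smaller last CNF exponent). Your plan needs the same strengthening of the induction and the extra case analysis on the shape of $\alpha$; once that is added, the individual computations you propose go through.
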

\begin{lemma}
\label{lem-Fmono-alphaSeq}
For any ordinal $\alpha<\epsilon_0$ and limit ordinal $\lambda<\omega^\omega$,
for any $m, n\in\Nat$, if $m \leq n$ then
$F_{\alpha\oplus\lambda(m)}(n)
\leq
F_{\alpha\oplus\lambda}(n)$.
\end{lemma}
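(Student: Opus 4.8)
The plan is to reduce the statement to its diagonal instance $m=n$ and then prove that instance by well-founded induction on the ordinal $\mu=\alpha\oplus\lambda$. Since $\lambda<\omega^\omega$ is a limit, its Cantor normal form is $\lambda=\gamma+\omega^{\beta}$ with $1\le\beta<\omega$; hence $\beta$ is a finite successor, only rule~\eqref{eq-L1} applies, and $\lambda(k)=\gamma+\omega^{\beta-1}\cdot(k+1)$. As $\beta-1$ lies strictly below every exponent of $\gamma$, a direct CNF computation gives $\lambda(n)=\lambda(m)\oplus\omega^{\beta-1}\cdot(n-m)$ for $m\le n$, so $\alpha\oplus\lambda(n)=\bigl(\alpha\oplus\lambda(m)\bigr)\oplus\omega^{\beta-1}\cdot(n-m)$ and Lemma~\ref{lem-Fmono-alpha} yields $F_{\alpha\oplus\lambda(m)}(n)\le F_{\alpha\oplus\lambda(n)}(n)$. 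It therefore suffices to prove, for every limit $\lambda<\omega^\omega$, every $\alpha<\epsilon_0$ and every $n$, the claim $(\star)$: $F_{\alpha\oplus\lambda(n)}(n)\le F_{\alpha\oplus\lambda}(n)$. I will also use one elementary observation: if $g,h\colon\Nat\to\Nat$ are non-decreasing and strictly expansive with $g(x)\le h(x)$ for all $x\ge n_0$, and one sets $g_0=g$, $g_{j+1}(x)=g_j^{x+1}(x)$ (and $h_j$ likewise), then $g_r(n_0)\le h_r(n_0)$ for every $r$ — a nested induction on $r$ and on the iteration count, all intermediate arguments staying $\ge n_0$. Since $F_{\nu+r}(x)=(F_\nu)_r(x)$ by~\eqref{eq-F1}, this says $F_{\cdot+r}(n)$ is monotone in the base index along arguments $\ge n$; the relevant $F_\alpha$'s are non-decreasing and strictly expansive by~\eqref{eq-F-mono} and~\eqref{eq-F-exp}.

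To prove $(\star)$ by induction on $\mu=\alpha\oplus\lambda$, I distinguish cases according to the least exponent $\delta$ of $\alpha$, handling $\alpha=0$ alongside the first. \emph{Case $\alpha=0$ or $\delta\ge\beta$:} the least exponent of $\mu$ is $\beta$, deleting one $\omega^\beta$ from $\mu$ leaves $\alpha\oplus\gamma$, so by~\eqref{eq-L1} $\mu(n)=(\alpha\oplus\gamma)+\omega^{\beta-1}\cdot(n+1)=\alpha\oplus\lambda(n)$; hence $F_{\alpha\oplus\lambda}(n)=F_{\mu(n)}(n)=F_{\alpha\oplus\lambda(n)}(n)$ by~\eqref{eq-FL}, and $(\star)$ holds with equality.

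\emph{Case $1\le\delta<\beta$} (so $\beta\ge 2$ and $\alpha$ is a limit): write $\alpha=\alpha'+\omega^\delta$ with all exponents of $\alpha'$ at least $\delta$, and put $\alpha''=\alpha'\oplus\omega^{\delta-1}\cdot(n+1)$. Both $\mu$ and $\alpha\oplus\lambda(n)$ are limits with least exponent $\delta$; unfolding~\eqref{eq-L1} once in each and regrouping natural sums gives $\mu(n)=(\alpha'\oplus\lambda)+\omega^{\delta-1}\cdot(n+1)=\alpha''\oplus\lambda$ and $(\alpha\oplus\lambda(n))(n)=(\alpha'\oplus\lambda(n))+\omega^{\delta-1}\cdot(n+1)=\alpha''\oplus\lambda(n)$, so by~\eqref{eq-FL} the instance $(\star)$ for $(\alpha,\lambda,n)$ coincides with $(\star)$ for $(\alpha'',\lambda,n)$; and $\omega^{\delta-1}\cdot(n+1)<\omega^\delta$ forces $\alpha''\oplus\lambda<\mu$, so the induction hypothesis applies. \emph{Case $\delta=0$} (i.e.\ $\alpha=\alpha_0+r$ with $r\ge 1$ and $\alpha_0$ a limit or $0$, whence $\mu=(\alpha_0\oplus\lambda)+r$): the induction hypothesis for $(\alpha_0,\lambda)$, which realizes $\alpha_0\oplus\lambda<\mu$, applied at every argument $x\ge n$, together with Lemma~\ref{lem-Fmono-alpha} via $\lambda(x)=\lambda(n)\oplus\omega^{\beta-1}\cdot(x-n)$, gives $F_{\alpha_0\oplus\lambda(n)}(x)\le F_{\alpha_0\oplus\lambda(x)}(x)\le F_{\alpha_0\oplus\lambda}(x)$ for all $x\ge n$; the auxiliary observation then yields $F_{\alpha\oplus\lambda(n)}(n)=F_{(\alpha_0\oplus\lambda(n))+r}(n)\le F_{(\alpha_0\oplus\lambda)+r}(n)=F_\mu(n)$, which is $(\star)$.

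The monotonicity and expansiveness bookkeeping is routine; the real care goes into the CNF computations of the middle case — recognizing that peeling off the least term of $\alpha$ turns the goal for $(\alpha,\lambda)$ into the same goal for the strictly smaller ordinal $\alpha''\oplus\lambda$ — and into the observation, in the last case, that only the comparison $F_{\alpha_0\oplus\lambda(n)}(x)\le F_{\alpha_0\oplus\lambda}(x)$ for $x\ge n$ is ever invoked, which is precisely what keeps the use of the induction hypothesis legitimate despite the well-known failure of ordinal-index monotonicity of the functions $F_\alpha$.
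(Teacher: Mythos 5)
Your proof is correct and is essentially the paper's own argument in a reorganized form: the same three ingredients appear (the identity $\lambda(n)=\lambda(m)\oplus\omega^{\beta-1}\cdot(n-m)$ combined with Lemma~\ref{lem-Fmono-alpha}, the iterated-comparison observation for the successor/$\delta=0$ step, and the matching of fundamental sequences via \eqref{eq-L1} and \eqref{eq-FL}), and your three cases on the least exponent $\delta$ of $\alpha$ correspond exactly to the paper's zero/successor/limit split with its two limit subcases. The only differences are cosmetic --- you perform the $m\to n$ reduction once upfront and induct on $\alpha\oplus\lambda$ instead of on $\alpha$, correctly re-injecting the off-diagonal flexibility where the iteration requires it.
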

For these two results, detailed proofs are given in the appendix.  We
note that Lemma~\ref{lem-Fmono-alpha} is a rewording of Lemma~2.2a
from~\cite{CS-lics08}, however that paper uses a different definition
for the fundamental sequences $(\lambda(n))_{n\in\Nat}$, resulting in
slightly different $F_\alpha$ functions, hence the need of an
independent proof.  Similarly, Lemma~\ref{lem-Fmono-alphaSeq} is a
generalization of Lemma~VI.5 from~\cite{leroux2014}, using different
notation and allowing a simpler proof.  \\

We now define weak GVAS computers for the $F_{\alpha}$ functions such
that $\alpha<\omega^\omega$. Our construction is in two steps: we
first pick an arbitrary exponent $d\in\Nat$ and define $G_d$, a GVAS
with a structure suitable for correctness proofs. We then obtain a
weak GVAS computer for $F_\alpha$ by slightly modifying $G_d$,
provided $\alpha<\omega^d$. The whole construction is an adaptation
into the GVAS framework of the pushdown VAS from~\cite{leroux2014}.

The dimension of $G_d$ is $d+2$ and we use $d+2$
counters named $r, \overline{r}, \kappa_0, \ldots, \kappa_{d-1}$,
in this order.
The set of actions $\vec{A}\subseteq \setZ^{d+2}$ consists     of all
vectors
 $\vec{d}_{x}$ and $\vunit_{x}$ where $x$ is one of the $d+2$ counters:
formally $\vec{d}_x$ is the vector that
decrements $x$, while $\vunit_x\egdef-\vec{d}_x$ increments it.
For instance,
$\vec{d}_{\kappa_0} = (0, 0, -1, \vzero_{d-1})$ and
$\vunit_{r} = (1, 0, \vzero_{d})$.
The set of non-terminals of $G_d$ is
$V=\{\NTF, \REC, \REST, \LIM_1, \ldots, \LIM_{d-1}\}$.  The start
symbol is $\NTF$.  The other non-terminals are used for intermediate
steps (see the rules below).

The first two counters, $r$ and $\overline{r}$, are used to manipulate
the arguments of the functions being computed.  The other $d$ counters
are used as a data structure representing an ordinal
$\alpha<\omega^d$. Formally, with any $d$-tuple
$\tuple{c_0,\ldots,c_{d-1}}$ of natural numbers, we associate the
ordinal $\alpha = \omega^{d-1} \cdot c_{d-1} + \cdots + \omega^{0}
\cdot c_0$.  We will follow the convention of writing the contents of
the counters of our GVAS in the form $\tuple{n, m, \alpha}$, where $n$
and $m$ are the value of $r$ and $\overline{r}$, respectively, and
where $\alpha$ is the ordinal associated with the values in
$\kappa_0,\ldots,\kappa_{d-1}$.

The rules of $G_d$ are given below. The rules involving the
$\LIM_{i}$ non-terminals are present for every $i \in \{1,
\ldots, d-1\}$.
\begin{xalignat}{2}
\NTF   & \rightarrow \vunit_{r} \:, \label{ruleF0}\tag{R1}
\\
\NTF   & \rightarrow \vec{d}_{\kappa_0} \ \REC \ \NTF \  \vunit_{\kappa_0} \:, \label{ruleF+1}\tag{R2}
\\
\NTF   & \rightarrow  \vec{d}_{\kappa_i} \vunit_{\kappa_{i-1}} \ \LIM_i \ \vec{d}_{\kappa_{i-1}}\vunit_{\kappa_{i}} \:, \label{ruleFlim}\tag{R3$i$}
\\
\REC   &\rightarrow \REST \:, \label{ruleRec0}\tag{R4}
\\
\REC   & \rightarrow  \vec{d}_r\vunit_{\overline{r}} \ \REC \ \NTF \:, 
\label{ruleRec1}\tag{R5}
\\
\REST  & \rightarrow \varepsilon \:, \label{ruleRest0}\tag{R6}
\\
\REST  & \rightarrow \vunit_r \vec{d}_{\overline{r}} \ \REST \:, 
\label{ruleRest1}\tag{R7}
\\
\LIM_i & \rightarrow \REST \ \NTF  \:, \label{ruleLim0}\tag{R8$i$}
\\
\LIM_i & \rightarrow  \vec{d}_r\vunit_{\overline{r}} \vunit_{\kappa_{i-1}} \ \LIM_{i} \ \vec{d}_{\kappa_{i-1}} \:. 
\label{ruleLim1}\tag{R9$i$}
\end{xalignat}

Our first goal is to prove that $G_d$ has computations of the form
$\tuple{n,0,\alpha} \xrightarrow{\NTF} \tuple{F_{\alpha}(n),0,\alpha}$,
for any $n\in\Nat$ and $\alpha<\omega^d$.
We start with a lemma exposing some specific sentential forms that can
be derived from $\NTF$ and $\REST$.
As will be clear from the proof of Lemma~\ref{lem:completenessF},
these derivations
(namely \eqref{derivF0}, \eqref{derivF+1} and \eqref{derivFlim})
correspond to our inductive definition of the fast growing functions $F_\alpha$
(namely \eqref{eq-F0}, \eqref{eq-F1} and \eqref{eq-FL}).

\begin{lemma}
  \label{lem:completeness-derivations}
  For every $n \in \Nat$ and $0<i<d$, $G_d$ admits the following derivations:
  \begin{align}
    \label{derivF0}\tag{D0}
    \NTF & \stepstar \vunit_{r}
    \:,
    \\
    \label{derivF+1}\tag{D1}
    \NTF & \stepstar \vec{d}_{\kappa_0} \ (\vec{d}_r \vunit_{\overline{r}})^n \ \REST \ (\NTF)^{n+1} \  \vunit_{\kappa_0}
    \:,
    \\
    \label{derivFlim}\tag{DL}
    \NTF & \stepstar \vec{d}_{\kappa_i} \vunit_{\kappa_{i-1}} \ (\vec{d}_r \vunit_{\overline{r}} \vunit_{\kappa_{i-1}})^n \ \REST \ \NTF \ (\vec{d}_{\kappa_{i-1}})^{n+1} \ \vunit_{\kappa_{i}}
    \:,
    \\
    \label{derivPop}\tag{DP}
    \REST & \stepstar (\vunit_r \vec{d}_{\overline{r}})^n
    \:.
  \end{align}
\end{lemma}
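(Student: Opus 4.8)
The plan is to observe that all four claims concern only the context-free derivation relation $\stepstar$ of $G_d$, so the vector components play no role and each claim follows from a short induction on $n$. I would first dispatch \eqref{derivF0}, which is a single application of rule~\eqref{ruleF0}. For \eqref{derivPop} I would induct on $n$: the case $n=0$ is rule~\eqref{ruleRest0} ($\REST\step\varepsilon$), and for $n+1$ I rewrite the initial $\REST$ by rule~\eqref{ruleRest1} into $\vunit_r\vec{d}_{\overline{r}}\,\REST$ and apply the induction hypothesis to the remaining $\REST$, obtaining $\REST\stepstar(\vunit_r\vec{d}_{\overline{r}})^{n+1}$.

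For \eqref{derivF+1} and \eqref{derivFlim} the key is to first establish, by induction on $n$, the two auxiliary sub-derivations $\REC\stepstar(\vec{d}_r\vunit_{\overline{r}})^n\,\REST\,(\NTF)^n$ and $\LIM_i\stepstar(\vec{d}_r\vunit_{\overline{r}}\vunit_{\kappa_{i-1}})^n\,\REST\,\NTF\,(\vec{d}_{\kappa_{i-1}})^n$. For the first, $n=0$ uses rule~\eqref{ruleRec0} ($\REC\step\REST$), while the step rewrites the leading $\REC$ by rule~\eqref{ruleRec1} into $\vec{d}_r\vunit_{\overline{r}}\,\REC\,\NTF$ and applies the hypothesis to the inner $\REC$. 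For the second, $n=0$ uses rule~\eqref{ruleLim0} ($\LIM_i\step\REST\,\NTF$), while the step rewrites the leading $\LIM_i$ by rule~\eqref{ruleLim1} into $\vec{d}_r\vunit_{\overline{r}}\vunit_{\kappa_{i-1}}\,\LIM_i\,\vec{d}_{\kappa_{i-1}}$ before applying the hypothesis. Prepending rule~\eqref{ruleF+1} to the $\REC$ derivation yields $\NTF\stepstar\vec{d}_{\kappa_0}\,(\vec{d}_r\vunit_{\overline{r}})^n\,\REST\,(\NTF)^{n+1}\,\vunit_{\kappa_0}$, which is exactly \eqref{derivF+1}; prepending rule~\eqref{ruleFlim} to the $\LIM_i$ derivation yields \eqref{derivFlim} in the same way.

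I do not expect any genuine obstacle here. Because a derivation step $w\step w'$ may rewrite an occurrence of a nonterminal inside an arbitrary context $w=vTv'$, each inductive step simply replaces one nonterminal and leaves the surrounding sentential form untouched, so concatenating the pieces is immediate. The only point deserving (mild) care is fixing, in each of the three inductions, the convention of always rewriting the \emph{leftmost} remaining copy of $\REST$, $\REC$, or $\LIM_i$, so that the induction hypothesis applies verbatim to what is left.
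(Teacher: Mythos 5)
Your proposal is correct and follows essentially the same route as the paper: the paper also obtains \eqref{derivF+1} and \eqref{derivFlim} by one application of \eqref{ruleF+1} (resp.\ \eqref{ruleFlim}) followed by $n$ applications of \eqref{ruleRec1} (resp.\ \eqref{ruleLim1}) and one of \eqref{ruleRec0} (resp.\ \eqref{ruleLim0}), merely writing the iteration inline rather than packaging it as a separate induction on an auxiliary derivation for $\REC$ and $\LIM_i$. The auxiliary sentential forms you isolate and the final concatenations are all verified correctly.
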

\begin{proof}
  Derivation \eqref{derivF0} is an immediate consequence of rule
  \eqref{ruleF0} and
  derivation \eqref{derivPop} similarly follows from rules \eqref{ruleRest0} and \eqref{ruleRest1}.
  To prove derivation \eqref{derivF+1},
  we use
\begin{align*}
\NTF \xLongrightarrow{\eqref{ruleF+1}} 
\vec{d}_{\kappa_0} \ \REC \ \NTF \ \vunit_{\kappa_0} 
\xLongrightarrow{\eqref{ruleRec1}} \cdots 
& \xLongrightarrow{\eqref{ruleRec1}}
\vec{d}_{\kappa_0} \ (\vec{d}_r\vunit_{\overline{r}})^n \ \REC
\ (\NTF)^n \ \NTF \ \vunit_{\kappa_0} 
\\
& \xLongrightarrow{\eqref{ruleRec0}}
\vec{d}_{\kappa_0} \ (\vec{d}_r\vunit_{\overline{r}})^n \ \REST
\ (\NTF)^{n+1} \ \vunit_{\kappa_0} 
\:.
\end{align*}
Finally, derivation \eqref{derivFlim} is obtained with
\begin{align*}
\NTF
&
\xLongrightarrow{\eqref{ruleFlim}}
\vec{d}_{\kappa_i} \vunit_{\kappa_{i-1}} \ \LIM_i
\ \vec{d}_{\kappa_{i-1}}\vunit_{\kappa_{i}}
\\
&
\xLongrightarrow{\eqref{ruleLim1}}\cdots\xLongrightarrow{\eqref{ruleLim1}}
\vec{d}_{\kappa_i} \vunit_{\kappa_{i-1}} \ 
(\vec{d}_r\vunit_{\overline{r}} \vunit_{\kappa_{i-1}})^n \ \LIM_i \ 
(\vec{d}_{\kappa_{i-1}})^n \ 
\vec{d}_{\kappa_{i-1}}\vunit_{\kappa_{i}}
\\
&
\xLongrightarrow{\eqref{ruleLim0}}
\vec{d}_{\kappa_i} \vunit_{\kappa_{i-1}} \ 
(\vec{d}_r\vunit_{\overline{r}} \vunit_{\kappa_{i-1}})^n \ 
\REST \ \NTF \ 
(\vec{d}_{\kappa_{i-1}})^{n+1} \ 
\vunit_{\kappa_{i}}
\:.
\qedhere
\end{align*}
\end{proof}

\begin{lemma}
\label{lem:completenessF}
For every ordinal $\alpha < \omega^{d}$ and every $n \in \Nat$, $G_d$
has a computation $\tuple{n,0,\alpha} \xrightarrow{\NTF}
\tuple{F_{\alpha}(n),0,\alpha}$.
\end{lemma}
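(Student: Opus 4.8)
\emph{Plan.} The plan is to proceed by well-founded induction on $\alpha<\omega^d$, following exactly the three clauses~\eqref{eq-F0}, \eqref{eq-F1}, \eqref{eq-FL} defining $F_\alpha$ and using, respectively, the three sentential forms~\eqref{derivF0}, \eqref{derivF+1}, \eqref{derivFlim} supplied by Lemma~\ref{lem:completeness-derivations}. I write a configuration as $\tuple{n,m,\alpha}$, with $n$ in $r$, $m$ in $\overline r$, and $\alpha=\omega^{d-1}c_{d-1}+\cdots+\omega^{0}c_0$ the ordinal encoded by $\kappa_0,\ldots,\kappa_{d-1}$; incrementing (resp.\ decrementing) $\kappa_j$ changes the encoded ordinal by $+\omega^{j}$ (resp.\ $-\omega^j$). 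Since $\alpha<\omega^d$, every exponent in the Cantor normal form of $\alpha$ is a natural number, so $\alpha$ is either $0$, a successor, or a limit whose least CNF exponent is some $i$ with $1\le i\le d-1$. For the base case $\alpha=0$ all $\kappa_j$ equal $0$, and~\eqref{derivF0} reduces $\NTF$ to the single action $\vunit_r$, sending $\tuple{n,0,0}$ to $\tuple{n+1,0,0}=\tuple{F_0(n),0,0}$.

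\emph{Successor case $\alpha=\beta+1$.} Then $c_0\ge 1$ and the encoding of $\beta$ is obtained from that of $\alpha$ by decrementing $\kappa_0$. I would instantiate~\eqref{derivF+1} with its free parameter set to the current value $n$ of $r$, choose~\eqref{derivPop} with parameter $n$ for the inner $\REST$, and run the resulting action word from $\tuple{n,0,\alpha}$: $\vec{d}_{\kappa_0}$ reaches $\tuple{n,0,\beta}$; the block $(\vec{d}_r\vunit_{\overline r})^n$ empties $r$ into $\overline r$, reaching $\tuple{0,n,\beta}$; the $\REST$ part, via~\eqref{derivPop}, moves everything back to $\tuple{n,0,\beta}$; the $n{+}1$ successive copies of $\NTF$ are each an instance of the induction hypothesis for $\beta<\alpha$ applied to the current value of $r$ (with $\beta$ preserved each time), so together they iterate $F_\beta$ exactly $n+1$ times starting from $n$, reaching $\tuple{F_\beta^{\,n+1}(n),0,\beta}$; finally $\vunit_{\kappa_0}$ restores the ordinal, yielding $\tuple{F_\beta^{\,n+1}(n),0,\alpha}=\tuple{F_\alpha(n),0,\alpha}$ by~\eqref{eq-F1}.

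\emph{Limit case $\alpha=\gamma+\omega^{i}$, $1\le i\le d-1$.} Here $\alpha(n)=\gamma+\omega^{i-1}\cdot(n+1)$ by~\eqref{eq-L1}, and since $i$ is the least nonzero exponent of $\alpha$ the encoding of $\alpha(n)$ is obtained from that of $\alpha$ by decrementing $\kappa_i$ and adding $n+1$ to $\kappa_{i-1}$ (which was $0$). I would instantiate~\eqref{derivFlim} with parameter $n$, use~\eqref{derivPop} with parameter $n$ for the inner $\REST$, and run it from $\tuple{n,0,\alpha}$: $\vec{d}_{\kappa_i}\vunit_{\kappa_{i-1}}$ reaches $\tuple{n,0,\gamma+\omega^{i-1}}$; the block $(\vec{d}_r\vunit_{\overline r}\vunit_{\kappa_{i-1}})^n$ empties $r$ into $\overline r$ while adding $n$ to $\kappa_{i-1}$, reaching $\tuple{0,n,\alpha(n)}$; the $\REST$ part restores $\tuple{n,0,\alpha(n)}$; the single $\NTF$ invokes the induction hypothesis for $\alpha(n)<\alpha$, reaching $\tuple{F_{\alpha(n)}(n),0,\alpha(n)}$; then $(\vec{d}_{\kappa_{i-1}})^{n+1}$ zeroes $\kappa_{i-1}$, which holds exactly $n+1$, reaching $\tuple{F_{\alpha(n)}(n),0,\gamma}$; and $\vunit_{\kappa_i}$ restores $\kappa_i$, yielding $\tuple{F_{\alpha(n)}(n),0,\alpha}=\tuple{F_\alpha(n),0,\alpha}$ by~\eqref{eq-FL}.

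\emph{Main difficulty.} The delicate points are pure bookkeeping: checking that each chosen parameter matches the current contents of the counter it consumes, so no decrement ever goes below zero, and that each macro-step leaves the inactive counters untouched so the induction hypothesis applies verbatim to the inner $\NTF$'s. Two structural facts make the induction well-founded: all recursive calls are to ordinals strictly below $\alpha$ and still below $\omega^d$; and within $\omega^d$ the least CNF exponent of a limit ordinal is always a successor, so only rule~\eqref{eq-L1} of the fundamental-sequence definition is ever used and the $\LIM_i$ rules are invoked solely for $i\in\{1,\ldots,d-1\}$. I expect the crux to be making fully explicit the exact correspondence between the counter surgery performed by~\eqref{derivFlim} and the fundamental-sequence equation~\eqref{eq-L1}.
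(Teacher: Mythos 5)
Your proposal is correct and follows essentially the same route as the paper's proof: induction on $\alpha$ with the three cases handled by derivations \eqref{derivF0}, \eqref{derivF+1}, \eqref{derivFlim}, the inner $\REST$ instantiated via \eqref{derivPop} to swap $\overline{r}$ back into $r$, and the counter surgery on $\kappa_i,\kappa_{i-1}$ matching \eqref{eq-L1} exactly as in the paper's limit case. The bookkeeping points you flag (parameters matching counter contents, $\kappa_{i-1}$ holding exactly $n+1$ before the final decrements) are indeed the only delicate steps, and you resolve them the same way.
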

\begin{proof}
  We first observe that $G_d$ has a computation
  $\tuple{0,n,\alpha} \xrightarrow{\REST} \tuple{n,0,\alpha}$
  for every ordinal $\alpha < \omega^{d}$ and every $n \in \Nat$.
  This computation exists because $G_d$ admits derivation \eqref{derivPop},
  i.e.,
  $\REST \stepstar (\vunit_r \vec{d}_{\overline{r}})^n$.
    We now prove the lemma by induction on $\alpha$.

  \smallskip

  For the base case $\alpha = 0$, we use derivation \eqref{derivF0},
  i.e., $\NTF \stepstar \vunit_{r}$, yielding the following computation:
  \[
    \T{n}{0}{0} \xrightarrow{\NTF}
    \T{n+1}{0}{0} = \Ts{F_0(n)}{0}{0}
    \:.
  \]

  In the case of a successor ordinal $\alpha=\beta + 1$, we use derivation
  \eqref{derivF+1}, i.e.,
  $\NTF \stepstar \vec{d}_{\kappa_0} \ (\vec{d}_r \vunit_{\overline{r}})^n \ \REST \ (\NTF)^{n+1} \  \vunit_{\kappa_0}$,
  leading to the following computation
  (recall that $F_{\alpha}(n) = F_{\beta}^{n+1}(n)$ by Eq.~\eqref{eq-F1}):
  \[
    \T{n}{0}{\alpha}
    \xrightarrow{\vec{d}_{\kappa_0}}
    \T{n}{0}{\beta}
    \xrightarrow{(\vec{d}_r \vunit_{\overline{r}})^n}
    \T{0}{n}{\beta}
    \xrightarrow{\REST}
    \T{n}{0}{\beta}
    \xrightarrow[\text{ind.\ hyp.}]{\NTF^{n+1}}
    \Ts{F_{\beta}^{n+1}(n)}{0}{\beta}
    \xrightarrow{\vunit_{\kappa_0}}
    \Ts{F_{\beta}^{n+1}(n)}{0}{\beta+1}
    =\Ts{F_{\alpha}(n)}{0}{\alpha}
    \:.
  \]

  Finally, in the case of a limit ordinal $\alpha = \lambda$, say of the form
  $\lambda = \gamma + \omega^{i}$ where $0<i<d$, we use derivation
  \eqref{derivFlim}, i.e.,
  $\NTF \stepstar \vec{d}_{\kappa_i} \vunit_{\kappa_{i-1}} \ (\vec{d}_r \vunit_{\overline{r}} \vunit_{\kappa_{i-1}})^n \ \REST \ \NTF \ (\vec{d}_{\kappa_{i-1}})^{n+1} \ \vunit_{\kappa_{i}}$.
  Before inspecting the computation below, note
  that if $\lambda=\gamma+\omega^i$ is represented by the values in
  $\kappa_0, \ldots, \kappa_{d-1}$, then one obtains a representation
  for $\gamma$ by decrementing $\kappa_i$.  Then, by incrementing $(n+1)$ times
  $\kappa_{i-1}$, one obtains $\gamma+\omega^{i-1}\cdot(n+1)$ which is
  $\lambda(n)$ by \eqref{eq-L1}.
  This leads to the following computation
  (recall that $F_{\lambda}(n) = F_{\lambda(n)}(n)$ by Eq.~\eqref{eq-FL}):
  \begin{align*}
    \T{n}{0}{\lambda}
    \xrightarrow{\vec{d}_{\kappa_{i}}}
    \T{n}{0}{\gamma}
    \xrightarrow{\vunit_{\kappa_{i-1}} (\vec{d}_r \vunit_{\overline{r}} \vunit_{\kappa_{i-1}})^n}
    \T{0}{n}{\lambda(n)}
    \xrightarrow{\REST}
    \T{n}{0}{\lambda(n)}
    \xrightarrow[\text{ind.\ hyp.}]{\NTF}
    \Ts{F_{\lambda(n)}(n)}{0}{\lambda(n)}
    \:,\text{ and}
    \\
    \Ts{F_{\lambda(n)}(n)}{0}{\lambda(n)}
    =
    \Ts{F_{\lambda}(n)}{0}{\lambda(n)}
    \xrightarrow{(\vec{d}_{\kappa_{i-1}})^{n+1}}
    \Ts{F_{\lambda}(n)}{0}{\gamma}
    \xrightarrow{\vunit_{\kappa_{i}}}
    \Ts{F_{\lambda}(n)}{0}{\lambda}
    \:.
  \end{align*}

  In all three cases,
  $G_d$ has a computation $\tuple{n,0,\alpha} \xrightarrow{\NTF}
  \tuple{F_{\alpha}(n),0,\alpha}$.
\end{proof}

We showed in Lemma~\ref{lem:completenessF}
that there are computations of $G_d$ that end in
$F_{\alpha}(n)$. This corresponds to the completeness of weak
computers.  We will now show the safety part, i.e., that no successful
computation of $G_d$ may reach a value greater than $F_{\alpha}(n)$.
\begin{lemma}
\label{lem:safetyGd}
For all $n,n',m,m'\in\Nat$, $\alpha,\alpha'<\omega^d$, and $0<i<d$, the
following hold:
\begin{align}
\label{safe-F}
\tuple{n,m,\alpha}\xrightarrow{\NTF}\tuple{n',m',\alpha'}
&\implies \alpha'=\alpha \land n'+m'\leq F_{\alpha}(n+m)
\:,
\\
\label{safe-Rec}
\tuple{n,m,\alpha}\xrightarrow{\REC}\tuple{n',m',\alpha'}
&\implies \alpha'=\alpha \land n'+m'\leq F^n_{\alpha}(n+m)
\:,
\\
\label{safe-Pop}
\tuple{n,m,\alpha} \xrightarrow{\REST} \tuple{n',m',\alpha'}
&\implies \alpha' = \alpha \land n'+m' = n+m
\:,
\\
\label{safe-Lim}
\tuple{n,m,\alpha}\xrightarrow{\LIM_i}\tuple{n',m',\alpha'}
&\implies \alpha'=\alpha \land n'+m'\leq F_{\alpha\oplus(\omega^{i-1}\cdot n)}(n+m)
\:.
\end{align}
\end{lemma}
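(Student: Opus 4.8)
The plan is to prove all four implications \eqref{safe-F}--\eqref{safe-Lim} simultaneously by a single induction, ordered primarily on the structure of the flow tree (equivalently, on the length of the derivation $\NTF \stepstar w$, or $\REC \stepstar w$, etc.) witnessing the run. For each non-terminal we do a case analysis on the first rule applied. The easy cases are \eqref{safe-Pop}: rules \eqref{ruleRest0} and \eqref{ruleRest1}. For \eqref{ruleRest0} the run is empty so $\tuple{n',m',\alpha'}=\tuple{n,m,\alpha}$; for \eqref{ruleRest1} the first step applies $\vunit_r\vec{d}_{\overline{r}}$, keeping $n+m$ and $\alpha$ invariant, and then we recurse on $\REST$. (Strictly, one also checks $m\ge 1$ so the step is enabled, but the equality $n'+m'=n+m$ holds regardless.) Note that $\vec{d}_{\overline{r}}$ requires $\overline{r}>0$, so a run of $\REST$ starting from $\overline{r}=0$ can only use \eqref{ruleRest0} once; but we do not need this refinement for the stated bounds.

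Next I would handle \eqref{safe-F}. Rule \eqref{ruleF0} gives $\NTF\step\vunit_r$, so $n'+m'=n+m+1=F_0(n+m)\le F_\alpha(n+m)$ by \eqref{eq-F-exp} and induction on $\alpha$ (or rather: the bound we need is $F_\alpha(n+m)$, and $n+m+1\le F_\alpha(n+m)$ since $F_\alpha$ is strictly expansive). For rule \eqref{ruleF+1}, the run decomposes as
\[
\tuple{n,m,\alpha}\xrightarrow{\vec{d}_{\kappa_0}}\tuple{n,m,\beta}\xrightarrow{\REC}\tuple{n_1,m_1,\beta}\xrightarrow{\NTF}\tuple{n_2,m_2,\beta}\xrightarrow{\vunit_{\kappa_0}}\tuple{n_2,m_2,\alpha},
\]
where $\alpha=\beta+1$ (so $\alpha<\omega^d$ forces $\beta<\omega^d$, and $\kappa_0>0$ is needed for the first step). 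By the $\REC$ part of the induction hypothesis, $n_1+m_1\le F_\beta^{\,n}(n+m)$; by the $\NTF$ part, $n_2+m_2\le F_\beta(n_1+m_1)$. Using monotonicity \eqref{eq-F-mono} of $F_\beta$ and then Eq.~\eqref{eq-F1}, $n_2+m_2\le F_\beta(F_\beta^{\,n}(n+m))=F_\beta^{\,n+1}(n+m)$, and since $F_{\alpha}(n+m)=F_{\beta}^{\,n+m+1}(n+m)\ge F_\beta^{\,n+1}(n+m)$ (as $F_\beta$ is expansive, iterating more times only increases the result, and $n+m+1\ge n+1$), we obtain $n_2+m_2\le F_\alpha(n+m)$. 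For rule \eqref{ruleFlim}, $\NTF\step\vec{d}_{\kappa_i}\vunit_{\kappa_{i-1}}\,\LIM_i\,\vec{d}_{\kappa_{i-1}}\vunit_{\kappa_i}$: writing $\alpha=\gamma+\omega^i$, the prefix brings the ordinal counter from $\alpha$ to $\gamma+\omega^{i-1}=:\alpha_1$ while leaving $r,\overline{r}$ untouched, so we enter $\LIM_i$ in configuration $\tuple{n,m,\alpha_1}$. The $\LIM_i$ clause of the induction hypothesis gives $n'+m'\le F_{\alpha_1\oplus(\omega^{i-1}\cdot n)}(n+m)=F_{\gamma+\omega^{i-1}\cdot(n+1)}(n+m)=F_{\alpha(n+m)}(n+m)$ — here I use that $\alpha_1\oplus(\omega^{i-1}\cdot n)$, for $\alpha_1=\gamma+\omega^{i-1}$, is exactly $\gamma+\omega^{i-1}\cdot(n+1)=\alpha(n+m)$ when $n+m\ge\ldots$; actually $\alpha(x)=\gamma+\omega^{i-1}\cdot(x+1)$ by \eqref{eq-L1}, so I need $F_{\gamma+\omega^{i-1}\cdot(n+1)}(n+m)\le F_{\gamma+\omega^{i-1}\cdot(n+m+1)}(n+m)=F_{\alpha(n+m)}(n+m)=F_\alpha(n+m)$, which follows from Lemma~\ref{lem-Fmono-alphaSeq} applied with $\lambda=\omega^i$, $m\mapsto n$, $n\mapsto n+m$ (valid since $n\le n+m$), after noting $\omega^i(k)=\omega^{i-1}\cdot(k+1)$. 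The trailing $\vec{d}_{\kappa_{i-1}}\vunit_{\kappa_i}$ restores the ordinal to $\alpha$ without touching $r,\overline{r}$, and requires $\kappa_{i-1}>0$ which is guaranteed because $\LIM_i$ never decrements $\kappa_{i-1}$ below its incoming value $+1$.

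It remains to treat \eqref{safe-Rec} and \eqref{safe-Lim}, which are the technically delicate cases and where I expect the main obstacle to lie. For \eqref{safe-Rec}: rule \eqref{ruleRec0} reduces $\REC$ to $\REST$, giving by \eqref{safe-Pop} that $n'+m'=n+m\le F_\alpha^{\,n}(n+m)$ (using $F_\alpha$ expansive, with the convention $F_\alpha^{\,0}=\mathrm{id}$ if $n=0$, else fold in expansiveness); rule \eqref{ruleRec1} gives $\tuple{n,m,\alpha}\xrightarrow{\vec{d}_r\vunit_{\overline{r}}}\tuple{n-1,m+1,\alpha}\xrightarrow{\REC}\tuple{n_1,m_1,\alpha}\xrightarrow{\NTF}\tuple{n',m',\alpha}$ (needing $n\ge 1$), so by induction $n_1+m_1\le F_\alpha^{\,n-1}((n-1)+(m+1))=F_\alpha^{\,n-1}(n+m)$ and then $n'+m'\le F_\alpha(n_1+m_1)\le F_\alpha^{\,n}(n+m)$ by monotonicity. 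For \eqref{safe-Lim}: rule \eqref{ruleLim0} reduces $\LIM_i$ to $\REST\;\NTF$, so $\tuple{n,m,\alpha}\xrightarrow{\REST}\tuple{n_1,m_1,\alpha}$ with $n_1+m_1=n+m$, then $\xrightarrow{\NTF}\tuple{n',m',\alpha}$ with $n'+m'\le F_\alpha(n_1+m_1)=F_\alpha(n+m)\le F_{\alpha\oplus(\omega^{i-1}\cdot n)}(n+m)$ by Lemma~\ref{lem-Fmono-alpha}; rule \eqref{ruleLim1} gives $\tuple{n,m,\alpha}\xrightarrow{\vec{d}_r\vunit_{\overline{r}}\vunit_{\kappa_{i-1}}}\tuple{n-1,m+1,\alpha+\omega^{i-1}}\xrightarrow{\LIM_i}\tuple{n_1,m_1,\alpha+\omega^{i-1}}\xrightarrow{\vec{d}_{\kappa_{i-1}}}\tuple{n_1,m_1,\alpha}$ (needing $n\ge1$). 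By the induction hypothesis applied to the inner $\LIM_i$ with incoming ordinal $\alpha+\omega^{i-1}$ and first argument $n-1$: $n_1+m_1\le F_{(\alpha+\omega^{i-1})\oplus(\omega^{i-1}\cdot(n-1))}((n-1)+(m+1))=F_{\alpha\oplus(\omega^{i-1}\cdot n)}(n+m)$, which is exactly the required bound (here $\oplus$ with $\omega^{i-1}$ just adds one more copy of $\omega^{i-1}$, and $\alpha+\omega^{i-1}=\alpha\oplus\omega^{i-1}$ because $\omega^{i-1}$ is $\le$ every exponent appearing in $\alpha<\omega^d$ with a summand of degree $\ge i-1$ — more carefully, in our data-structure encoding $\alpha$ has the form $\sum_{j}\omega^j c_j$ and $\alpha+\omega^{i-1}$ coincides with $\alpha\oplus\omega^{i-1}$ since the CNF exponents of $\alpha$ that are $<i-1$ get absorbed; this bookkeeping is the fiddly part). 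The main obstacle, then, is not any single deep step but the careful coordination of: (i) the precise relationship between ordinal addition $+$, natural sum $\oplus$, and the decrement/increment of counters $\kappa_j$ in the encoding; (ii) invoking Lemmas~\ref{lem-Fmono-alpha} and \ref{lem-Fmono-alphaSeq} with exactly the right instantiation of the $m\le n$ side condition (which here is always of the form "first argument $\le$ sum of first two arguments"); and (iii) keeping track of which steps are enabled (the $\vec{d}_x$ preconditions $x>0$), though for the stated inequalities enabledness is actually not needed since every run is by hypothesis a genuine run. I would organize the write-up as one lemma-internal induction with the four clauses bundled, doing the cases in the order \eqref{safe-Pop}, \eqref{safe-Rec}, \eqref{safe-Lim}, \eqref{safe-F} so that each case only appeals to clauses already available lower in the induction.
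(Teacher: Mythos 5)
Your proposal is correct and follows essentially the same route as the paper: a structural induction on the witnessing flow tree with the four clauses proved simultaneously, a case split on the top rule, and the same invocations of \eqref{eq-F-exp}, \eqref{eq-F-mono}, \eqref{eq-F1}, Lemma~\ref{lem-Fmono-alpha} (for \eqref{ruleLim0}) and Lemma~\ref{lem-Fmono-alphaSeq} (for \eqref{ruleFlim}) at exactly the points where the paper uses them. The only quibbles are presentational: in the \eqref{ruleFlim} case you briefly write $\alpha_1\oplus(\omega^{i-1}\cdot n)$ as $\alpha(n+m)$ before self-correcting to the right application of Lemma~\ref{lem-Fmono-alphaSeq}, and the proposed ordering of clauses is unnecessary since the structural induction already licenses mutual appeals between clauses on strictly smaller subtrees (as needed when \eqref{ruleRec1} and \eqref{ruleLim0} call the $\NTF$ clause).
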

\begin{proof}
By structural induction on the flow trees witnessing the transitions.
\begin{description}[leftmargin=*]

\item[Top rule is \eqref{ruleF0} $\NTF\rightarrow \vunit_{r}$] Then
  the flow tree has the following shape (in this and following
  illustrations, \emph{we only display the top node of each immediate
  subtree} of the flow tree under consideration):
\begin{center}
  \begin{tikzpicture}[auto, node distance=2em,>=stealth]
    \node (a) {$\T{n}{m}{\alpha} \xrightarrow{\NTF} \T{n'}{m'}{\alpha'}$};
    \node (a1) [below=of a] {$\T{n}{m}{\alpha} \xrightarrow{\vunit_{r}} \T{n'}{m'}{\alpha'}$};

    \draw[-,thick] (a) to (a1);
  \end{tikzpicture}
\end{center}
Using action $\vunit_{r}$ in the subtree implies $n'=n+1$, and also
$\alpha'=\alpha$ and $m'=m$.  With \eqref{eq-F-exp}, we deduce
$n'+m'=n+m+1\leq F_\alpha(n+m)$ as required by \eqref{safe-F}.

\item[Top rule is \eqref{ruleF+1} $\NTF\rightarrow \vec{d}_{\kappa_0}
\ \REC \ \NTF \ \vunit_{\kappa_0}$] We note that the first action,
  $\vec{d}_{\kappa_0}$, can only be fired if $\alpha$ is a successor
  ordinal $\beta+1$.  Then decrementing $\kappa_0$ transforms $\alpha$
  into $\beta$, and the flow tree has the following form.
\begin{center}
  \begin{tikzpicture}[auto, node distance=2em,>=stealth]
    \node (b) {$\T{n}{m}{\alpha} \xrightarrow{\NTF} \T{n'}{m'}{\beta'+1}$};

    \node (b1) [below=of b, xshift=-11em] {$\T{n}{m}{\alpha} \xrightarrow{\vec{d}_{\kappa_0}} \T{n}{m}{\beta}$};
    \node (b21) [below=of b, xshift=-3.5em, yshift=0em] {$\T{n}{m}{\beta} \xrightarrow{\REC} \T{n_{1}}{m_{1}}{\beta_{1}}$};
    \node (b22) [below=of b, xshift=3.5em, yshift=0em] {$\T{n_{1}}{m_{1}}{\beta_{1}} \xrightarrow{\NTF} \T{n'}{m'}{\beta'}$};
    \node (b3) [below=of b, xshift=11em] {$\T{n'}{m'}{\beta'} \xrightarrow{\vunit_{\kappa_0}} \T{n'}{m'}{\beta'+1}$};

    \draw[-,thick] (b) to (b1);
    \draw[-,thick] (b) to (b3);
    \draw[-,thick] (b) to (b21);
    \draw[-,thick] (b) to (b22);
  \end{tikzpicture}
\end{center}
Invoking the induction hypothesis on the second and third subtrees yields
\begin{xalignat*}{2}
\beta_1&=\beta\:,
&
n_1+m_1&\leq F^n_{\beta}(n+m)\:,
\\
\beta'&=\beta_1\:,
&
n'+m'&\leq F_{\beta_1}(n_1+m_1)\:.
\end{xalignat*}
Combining these results, we obtain $\beta'+1=\alpha$ as needed, and
\begin{align*}
n'+m' \leq F_{\beta_1}(n_1+m_1)
&\leq F_{\beta_1}(F^n_{\beta}(n+m))
\tag*{by \eqref{eq-F-mono}}
\\
&= F^{n+1}_{\beta}(n+m)
\\
&\leq F^{n+m+1}_{\beta}(n+m)
\tag*{by \eqref{eq-F-exp}}
\\
&= F_{\beta+1}(n+m)
= F_\alpha(n+m) \:.
\tag*{by \eqref{eq-F1}}
\end{align*}
Finally, $n'+m'\leq F_\alpha(n+m)$ as required by \eqref{safe-F}.

\item[Top rule is \eqref{ruleFlim} $\NTF \rightarrow
\vec{d}_{\kappa_{i}} \vunit_{\kappa_{i-1}} \ \LIM_{i}
\ \vec{d}_{\kappa_{i-1}} \vunit_{\kappa_{i}}$] The flow tree has the
  following form.
\begin{center}
\begin{tikzpicture}[>=stealth]
	    \node (a) {$\T{n}{m}{\alpha} \xrightarrow{\NTF} \T{n'}{m'}{\alpha'}$};
	    \node (a1) [below=of a, xshift=-10em, yshift=1em] {$\T{n}{m}{\alpha} \xrightarrow{\vec{d}_{\kappa_{i}}
\vunit_{\kappa_{i-1}}} \T{n_1}{m_1}{\alpha_1}$};

	    \node (a2) [below=of a, xshift=0em, yshift=1em] {$\T{n_1}{m_1}{\alpha_1} \xrightarrow{\LIM_i} \T{n_2}{m_2}{\alpha_2}$};

	    \node (a3) [below=of a, xshift=10em, yshift=1em] {$\T{n_2}{m_2}{\alpha_2} \xrightarrow{\vec{d}_{\kappa_{i-1}}\vunit_{\kappa_{i}}} \T{n'}{m'}{\alpha'}$};

	    \draw[thick, -] (a) to (a1);
	    \draw[thick, -] (a) to (a2);
	    \draw[thick, -] (a) to (a3);
\end{tikzpicture}
\end{center}
Firing the first two actions requires decrementing $\kappa_i$, hence
$\alpha$ is some $\alpha_0\oplus\omega^i$. After these actions, one
has $n_1=n$, $m_1=m$ and $\alpha_1=\alpha_0\oplus\omega^{i-1}$.
Similarly, the last two actions require decrementing $\kappa_{i-1}$,
hence $\alpha_2$ is some $\alpha_3\oplus\omega^{i-1}$ and one has
$n'=n_2$, $m'=m_2$ and $\alpha'=\alpha_3\oplus\omega^i$.
One obtains $\alpha_1=\alpha_2$, hence $\alpha'=\alpha$, with the
induction hypothesis, as well as
\begin{align*}
n'+m'=n_2+m_2
&\leq
F_{\alpha_1\oplus(\omega^{i-1}\cdot n_1)}(n_1+m_1)
\tag*{by ind.\ hyp.}
\\
&=
F_{(\alpha_0\oplus\omega^{i-1})\oplus(\omega^{i-1}\cdot n)}(n+m)
\\
&=
F_{\alpha_0\oplus(\omega^{i}(n))}(n+m)
\\
&\leq F_{\alpha_{0}\oplus\omega^{i}}(n+m)
\tag*{by Lemma~\ref{lem-Fmono-alphaSeq}}
\\
&= F_{\alpha}(n+m)
\:,
\end{align*}
as required by \eqref{safe-F}.

\item[Top rule is \eqref{ruleRec0} $\REC\rightarrow \REST$] Then the
  flow tree has the following form.
\begin{center}
  \begin{tikzpicture}[auto, node distance=2em,>=stealth]
    \node (b) {$\T{n}{m}{\alpha} \xrightarrow{\REC} \T{n'}{m'}{\alpha'}$};
    \node (b1) [below=of b, yshift=0em] {$\T{n}{m}{\alpha} \xrightarrow{\REST} {\T{n'}{m'}{\alpha'}}$};

    \draw[-,thick] (b) to (b1);
  \end{tikzpicture}
\end{center}
The induction hypothesis gives $\alpha'=\alpha$ and $n'+m'=n+m$.  We
deduce $n'+m'\leq F_\alpha^n(n+m)$, as required by \eqref{safe-Rec},
by invoking \eqref{eq-F-exp}.

\item[Top rule is \eqref{ruleRec1} $\REC\rightarrow
\vec{d}_r\vunit_{\overline{r}}\ \REC \ \NTF$] Then $n>0$ and the
  flow tree has the following form.
\begin{center}
  \begin{tikzpicture}[auto, node distance=2em,>=stealth]
    \node (b) {$\T{n}{m}{\alpha} \xrightarrow{\REC} \T{n'}{m'}{\beta}$};

    \node (b1) [below=of b, xshift=-9em] {$\T{n}{m}{\alpha} \xrightarrow{\vec{d}_{r} \vunit_{\overline{r}}} \T{n-1}{m+1}{\alpha}$};
    \node (b2) [below=of b, yshift=-0em] {$\T{n-1}{m+1}{\alpha} \xrightarrow{\REC} \T{n_{1}}{m_{1}}{\beta_{1}}$};
    \node (b3) [below=of b, xshift=9em] {$\T{n_{1}}{m_{1}}{\beta_{1}} \xrightarrow{\NTF} \T{n'}{m'}{\beta}$};

    \draw[-,thick] (b) to (b1);
    \draw[-,thick] (b) to (b2);
    \draw[-,thick] (b) to (b3);
  \end{tikzpicture}
\end{center}
Here we can use the induction hypothesis on the second subtree, yielding
\begin{xalignat*}{2}
\beta_1&=\alpha\:,
& n_1+m_1&\leq F_\alpha^{n-1}(n-1+m+1)=F_\alpha^{n-1}(n+m)
\:,
\end{xalignat*}
and on the third subtree, yielding
\begin{xalignat*}{2}
\beta&=\beta_1\:,
& n'+m'&\leq F_{\beta_1}(n_1+m_1) = F_{\alpha}(n_1+m_1) \:.
\end{xalignat*}
Combining these and invoking \eqref{eq-F-mono}, yields $\beta=\alpha$
and $n'+m'\leq F_\alpha(n_1+m_1)\leq F_\alpha(F_\alpha^{n-1}(n+m))=
F_\alpha^n(n+m)$ as required by \eqref{safe-Rec}.

\item[Top rule is \eqref{ruleRest0}
$\REST\rightarrow\varepsilon$] Then the flow tree 
$(n,m,\alpha)\xrightarrow{\REST}(n',m',\alpha')$
is a leaf, entailing  $\alpha'=\alpha$ and $n'+m'=n+m$ as required by
\eqref{safe-Pop}.

\item[Top rule is \eqref{ruleRest1} $\REST \rightarrow
\vunit_r \vec{d}_{\overline{r}} \ \REST$] Then the flow tree has the form.
\begin{center}
  \begin{tikzpicture}[auto, node distance=2em,>=stealth]
    \node (b)  {$\T{n}{m}{\alpha} \xrightarrow{\REST} \T{n'}{m'}{\alpha'}$};
    \node (b1) [below=of b, xshift=-5em] {$\T{n}{m}{\alpha} \xrightarrow{\vunit_{r} \vec{d}_{\overline{r}} }\T{n+1}{m-1}{\alpha}$};
    \node (b2) [below=of b, xshift=5em] {$\T{n+1}{m-1}{\alpha} \xrightarrow{\REST} \T{n'}{m'}{\alpha'}$};

    \draw[-,thick] (b) to (b1);
    \draw[-,thick] (b) to (b2);
  \end{tikzpicture}
\end{center}
The induction hypothesis on the second subtree
gives $\alpha'=\alpha$ and $n'+m'=n+1+m-1=n+m$ as required by
\eqref{safe-Pop}.

\item[Top rule is \eqref{ruleLim0} $\LIM_{i}\rightarrow \REST \ \NTF$] The flow tree has the following form.
\begin{center}
    \begin{tikzpicture}[>=stealth]
	\node (a) {$\T{n}{m}{\alpha} \xrightarrow{\LIM_{i}} \T{n'}{m'}{\alpha'}$};

	\node (a1) [below=of a, xshift=-7em, yshift=1em] {$\T{n}{m}{\alpha} \xrightarrow{\REST} \T{n_{1}}{m_{1}}{\alpha_{1}}$};
	\node (a2) [below=of a, xshift=7em, yshift=1em] {$\T{n_{1}}{m_{1}}{\alpha_{1}} \xrightarrow{\NTF} \T{n'}{m'}{\alpha'}$};

	\draw[thick, -] (a) to (a1);
	\draw[thick, -] (a) to (a2);
    \end{tikzpicture}
\end{center}
On these subtrees, the induction hypothesis yields $\alpha'=\alpha_{1}
= \alpha$ and $n_{1} + m_{1} = n + m$. Furthermore we have
\begin{align*}
n' + m' &\leq F_{\alpha_1}(n_{1} + m_{1})
\tag*{by ind.\ hyp.}
\\
& \leq F_{\alpha\oplus(\omega^{i-1}\cdot n)}(n+m)
\:,
\tag*{by Lemma~\ref{lem-Fmono-alpha}}
\end{align*}
as required by
\eqref{safe-Lim}.

\item[Top rule is \eqref{ruleLim1} $\LIM_{i} \rightarrow \vec{d}_{r}
\vunit_{\overline{r}} \vunit_{\kappa_{i-1}} \ \LIM_{i} \ \vec{d}_{\kappa_{i-1}}$] The flow tree has the following form.
\begin{center}
	\begin{tikzpicture}[>=stealth]
	    \node (a) {$\T{n}{m}{\alpha} \xrightarrow{\LIM_{i}} \T{n'}{m'}{\alpha'}$};
	    \node (a1) [below=of a, xshift=-10em, yshift=1em] {$\T{n}{m}{\alpha} \xrightarrow{\vec{d}_{r} \vunit_{\overline{r}} \vunit_{\kappa_{i-1}}} \T{n_1}{m_1}{\alpha_1}$};
	    \node (a2) [below=of a, xshift=0em, yshift=1em] {$\T{n_1}{m_1}{\alpha_1} \xrightarrow{\LIM_{i}} \T{n_2}{m_2}{\alpha_2}$};
	    \node (a3) [below=of a, xshift=10em, yshift=1em] {$\T{n_2}{m_2}{\alpha_2} \xrightarrow{\vec{d}_{\kappa_{i-1}}} \T{n'}{m'}{\alpha'}$};

	    \draw[thick, -] (a) to (a1);
	    \draw[thick, -] (a) to (a2);
	    \draw[thick, -] (a) to (a3);
	\end{tikzpicture}
\end{center}
With its three actions, the first subtree implies
\begin{xalignat*}{3}
  n_1&=n-1 \:,
& m_1&=m+1 \:,
&\alpha_1&=\alpha\oplus\omega^{i-1} \:.
\end{xalignat*}
Similarly, the last subtree yields
\begin{xalignat*}{3}
  n'&=n_2 \:,
& m'&=m_2 \:,
&\alpha_2&=\alpha'\oplus\omega^{i-1}\:.
\end{xalignat*}
With the second subtree, the induction hypothesis yields
$\alpha_2=\alpha_1$ and $n_2+m_2\leq
F_{\alpha_1\oplus(\omega^{i-1}\cdot n_1)}(n_1+m_1)$.  Combining these
gives $\alpha'=\alpha$ and $n'+m'=n_2+m_2\leq
F_{\alpha_1\oplus(\omega^{i-1}\cdot n_1)}(n_1+m_1) =
F_{(\alpha\oplus\omega^{i-1})\oplus(\omega^{i-1}\cdot (n-1))}(n+m) =
F_{\alpha\oplus(\omega^{i-1}\cdot n)}(n+m)$ as required by
\eqref{safe-Lim}.
\qedhere
\end{description}
\end{proof}

Now, for any ordinal $\alpha<\omega^d$, we may extend $G_d$ and obtain
a GVAS $G_{F_\alpha}$ that weakly computes $F_\alpha$.  This new GVAS
inherits the counters, actions, non-terminals and rules of $G_d$. It
furthermore includes two additional non-terminals, $\NTS$ and
$\REST'$, and associated rules. The start symbol will be $\NTS$ and,
if $\alpha$'s CNF is $\sum_{i=d-1}^{0}\omega^i\cdot c_i$, the extra
rules are:
\begin{xalignat}{1}
\NTS   & \rightarrow
                \vunit_{\kappa_0}^{c_0}
                \vunit_{\kappa_1}^{c_1}
                \cdots
                \vunit_{\kappa_{d-1}}^{c_{d-1}}
                \ \NTF
                \ \REST'        \:,
\label{ruleS}\tag{R0}
\\
\REST' & \rightarrow \varepsilon \:,
\label{ruleTrans0}\tag{R10}
\\
\REST' & \rightarrow \vec{d}_{r}\vunit_{\overline{r}} \ \REST' \:.
\label{ruleTrans1}\tag{R11}
\end{xalignat}
It is clear that, since there are no new rules for the non-terminals
inherited from $G_d$, the properties stated in
Lemmas~\ref{lem:completeness-derivations} to~\ref{lem:safetyGd} hold for
$G_{F_\alpha}$ as they hold for $G_d$. Note also that $\REST'$ behaves
as $\REST$ but exchanging the roles of $r$ and $\overline{r}$.
\begin{lemma}
\label{lem-GFa-WC}
$G_{F_\alpha}$ weakly computes $F_{\alpha}$.
\end{lemma}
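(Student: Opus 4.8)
The plan is to check the two conditions of Definition~\ref{def-wpn} for $G_{F_\alpha}$, reading its $d+2$ counters $r,\overline r,\kappa_0,\dots,\kappa_{d-1}$ so that $r$ is the input counter, $\overline r$ the output counter, and $\kappa_0,\dots,\kappa_{d-1}$ the $\ell=d$ auxiliary counters (so $G_{F_\alpha}$ indeed has dimension $2+\ell$ as required). The pivotal structural remark is that $\NTS$ occurs only as the left-hand side of rule~\eqref{ruleS} and never inside any right-hand side, and likewise $\REST'$ occurs on no old right-hand side while no old non-terminal occurs on the left of a new rule; hence $L_{G_{F_\alpha}}(\NTS) = \{u_0\}\cdot L_{G_d}(\NTF)\cdot L_{G_{F_\alpha}}(\REST')$, where $u_0 = \vunit_{\kappa_0}^{c_0}\cdots\vunit_{\kappa_{d-1}}^{c_{d-1}}$ is the fixed prefix of~\eqref{ruleS}. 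Since $u_0$ consists only of increments it is always firable, and firing it from the start configuration $\tuple{n,0,0}$ (i.e.\ $r=n$, $\overline r=0$, all $\kappa_i=0$) produces exactly $\tuple{n,0,\alpha}$, because $(c_{d-1},\dots,c_0)$ is by definition the $\kappa$-encoding of $\alpha<\omega^d$. It follows that a pair $\tuple{n,0,0}\xrightarrow{\NTS}\tuple{n',m',\gamma}$ is a run of $G_{F_\alpha}$ if and only if it decomposes as
\[
\tuple{n,0,0}\xrightarrow{u_0}\tuple{n,0,\alpha}\xrightarrow{\NTF}\tuple{n_1,m_1,\alpha_1}\xrightarrow{\REST'}\tuple{n',m',\gamma}
\]
for some intermediate values $n_1,m_1$ and ordinal $\alpha_1$.

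Next I record the two facts about $\REST'$ that are needed, both obtained by repeating verbatim the arguments given for $\REST$ in Lemmas~\ref{lem:completeness-derivations} and~\ref{lem:safetyGd} with the roles of $r$ and $\overline r$ swapped (rules~\eqref{ruleTrans0}--\eqref{ruleTrans1} are exactly rules~\eqref{ruleRest0}--\eqref{ruleRest1} with $r,\overline r$ exchanged): first, $\REST'\stepstar(\vec{d}_r\vunit_{\overline r})^k$ for every $k\in\Nat$, the mirror of~\eqref{derivPop}; second, $\tuple{n,m,\beta}\xrightarrow{\REST'}\tuple{n',m',\beta'}$ implies $\beta'=\beta$ and $n'+m'=n+m$, the mirror of~\eqref{safe-Pop}.

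For completeness~\eqref{eq-co}: fix $n\in\Nat$. Lemma~\ref{lem:completenessF} gives $\tuple{n,0,\alpha}\xrightarrow{\NTF}\tuple{F_\alpha(n),0,\alpha}$, and firing $(\vec{d}_r\vunit_{\overline r})^{F_\alpha(n)}\in L_{G_{F_\alpha}}(\REST')$ then yields $\tuple{F_\alpha(n),0,\alpha}\xrightarrow{\REST'}\tuple{0,F_\alpha(n),\alpha}$. Prepending the firing of $u_0$ produces $\tuple{n,0,0}\xrightarrow{\NTS}\tuple{0,F_\alpha(n),\alpha}$, a witness for~\eqref{eq-co} with $F_\alpha(n)$ in the output counter (the final auxiliary values, which encode $\alpha$, being irrelevant). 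For safety~\eqref{eq-sa}: take any run $\tuple{n,0,0}\xrightarrow{\NTS}\tuple{n',m',\gamma}$ and use the decomposition above. The $\REST'$-fact forces $\alpha_1<\omega^d$, and applying~\eqref{safe-F} to the $\NTF$-part gives $n_1+m_1\le F_\alpha(n+0)=F_\alpha(n)$, while the mirror of~\eqref{safe-Pop} applied to the $\REST'$-part gives $n'+m'=n_1+m_1$. Hence $m'\le n'+m'\le F_\alpha(n)$, which is~\eqref{eq-sa}.

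I expect no real obstacle: everything is glue around the already-established completeness (Lemma~\ref{lem:completenessF}) and safety (Lemma~\ref{lem:safetyGd}) of $G_d$, plus the observation that $G_{F_\alpha}$ merely wraps a deterministic initialization prefix and a $\REST$-like draining suffix around $\NTF$. The one point demanding care is the language factorization $L_{G_{F_\alpha}}(\NTS)=\{u_0\}\cdot L_{G_d}(\NTF)\cdot L_{G_{F_\alpha}}(\REST')$ — i.e.\ that $\NTS$ is non-recursive and that the new rules leave the languages of the old non-terminals untouched, which is immediate from the syntactic form of the rules — and the explicit (though routine) re-derivation of the two $\REST'$ facts, since Lemmas~\ref{lem:completeness-derivations} and~\ref{lem:safetyGd} are stated only for $\REST$.
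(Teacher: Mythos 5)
Your proof is correct and follows essentially the same route as the paper's: decompose any $\NTS$-run via the unique rule~\eqref{ruleS} into the initialization prefix, an $\NTF$-run, and a $\REST'$-run, then invoke Lemma~\ref{lem:completenessF} for completeness and \eqref{safe-F} together with the $\REST'$-analogue of \eqref{safe-Pop} for safety. You are merely more explicit than the paper about the language factorization at $\NTS$ and about re-deriving the mirror facts for $\REST'$, which the paper dispatches with ``since $\REST'$ behaves like $\REST$''.
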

\begin{proof}
We start with the completeness part of Definition~\ref{def-wpn}.  For
this it is needed to show that, for any $n\in\Nat$, $G_{F_\alpha}$ has
a computation of the form $(n,0,\vzero_d)\xrightarrow{\NTS}
(n',F_{\alpha}(n),\vec{e})$.  For this we use the rule \eqref{ruleS}
and exhibit the following computation:
\[
(n,0,\vzero_d)
\xrightarrow{\vunit_{\kappa_0}^{c_0}
                \vunit_{\kappa_1}^{c_1}
                \cdots
                \vunit_{\kappa_{d-1}}^{c_{d-1}}}
(n,0,\alpha)
\xrightarrow{\NTF}
(F_{\alpha}(n),0,\alpha)
\xrightarrow{\REST'}
(0,F_{\alpha}(n),\alpha)
\:.
\]
The first part of that computation just relies on our convention for
reading $\kappa_{d-1},\ldots,\kappa_0$ as the encoding of an ordinal,
the second (crucial) part is given by Lemma~\ref{lem:completenessF},
and the last part is by an analog of
derivation \eqref{derivPop} for $\REST'$.

For the safety part, we consider an arbitrary computation of the form
$(n,0,\vzero_d)\xrightarrow{\NTS}(n',r,\vec{e})$. The only rule for $\NTS$ is
\eqref{ruleS}, so there must exist some steps of the form
\[
(n,0,\vzero_d)
\xrightarrow{\vunit_{\kappa_0}^{c_0}
                \vunit_{\kappa_1}^{c_1}
                \cdots
                \vunit_{\kappa_{d-1}}^{c_{d-1}}}
(n,0,\alpha)
\xrightarrow{\NTF}
(n',m',\alpha')
\xrightarrow{\REST'}
(n'',r,\vec{e})
\:.
\]
Necessarily, this satisfies $n'+m'\leq F_{\alpha}(n)$ by
\eqref{safe-F}. And since $\REST'$ behaves like $\REST$, we have
$n''+r=n'+m'$, as in \eqref{safe-Pop}. All this entails $r\leq
F_{\alpha}(n)$ as required by \eqref{eq-sa}.
\end{proof}

 \section{Concluding Remarks}
\label{sec-concl}

We proved that Grammar-controlled VASes or Pushdown VASes cannot
weakly compute numerical functions that are sublinear. This was
recently shown for plain VASes~\cite{LerSch-rp2014}. We also proved
that GVASes can weakly compute the fast-growing functions $F_\alpha$ for
all $\alpha<\omega^\omega$ while VASes can only weakly compute
$F_\alpha$ for $\alpha<\omega$.

This research is motivated by verification questions for
well-structured systems, in particular VASes and their extensions. In this
area, weakly computable functions have traditionally been used to
prove hardness results. Recent hardness proofs for well-structured
systems crucially rely on the ability to weakly compute both
fast-growing and slow-growing functions.

This work raises some new questions that are left for future work,
including whether GVASes can weakly compute $F_{\omega^\omega}$ and whether
slow-growing functions can be weakly computed in other VAS extensions like the VASes with nested zero-tests of~\cite{Reinhardt08}.

Another open question is the decidability of the boundedness problem
for GVASes. Boundedness is decidable for PVASes~\cite{leroux2014} but
the two problems do not coincide: on the one hand, in GVASes we only
consider sequences of actions that are the yields of complete
derivation trees of a grammar, corresponding to configurations in PVAS
that have empty stack content; on the other hand unboundedness in
PVASes can come from unbounded counters or unbounded stack, while in
GVASes only counters are measured.  Indeed, the counter-boundedness
problem for PVASes reduces to the boundedness problem for GVASes and
is still open, while the stack-boundedness problem was shown decidable
in~\cite{leroux2015c}.

The reachability problem for GVASes is also a source of open
problems. Recently, the complexity of the reachability problem for plain VASes was proved
to be between $\mathbf{F}_3$ and
$\mathbf{F}_\omega$~\cite{leroux-stoc19,leroux-lics19} in the complexity
hierarchy set up by Schmitz~\cite{schmitz-toct2016}. Improving the
$\mathbf{F}_3$ lower bound in the case of GVASes is an open question,
as is the decidability status of the
reachability problem.

\bibliographystyle{plain}
\bibliography{wcca}

\appendix

\section{Well-quasi-ordering flow trees}
\label{app-wqo}

We now prove Lemma~\ref{lem-FG-wqo}, stating that $\leq_G$ is a
well-quasi-ordering of $F(G)$, this for any GVASS $G$.  A simple way
to prove this is to reformulate $\leq_G$ as an homeomorphic embedding
on flow trees labeled with enriched information equipped with further
labels.

With a flow tree
$t=\sigma[t_1,\ldots,t_k]$ we associate a tree $\child(t)$ made of a
root labeled with
$\langle\root(t),\root(t_1),\ldots,\root(t_k)\rangle$ and having
$\child(t_1),\ldots,\child(t_k)$ as immediate subtrees.  
The nodes of $\child(t)$ are labeled by tuples of the form
$\langle(\vc_0,X,\vc_k),(\vc_0,X_1,\vc_1),\ldots,
(\vc_{k-1},X_k,\vc_k)\rangle$ where $\vc_0,\ldots,\vc_k\in \Nat^d$
are configurations and where $X\step X_1\cdots X_k$ is a rule in $R$, or
$X\in \vec{A}$ is a terminal action and $k=0$.  Such a tuple is called an
\emph{instance} of the rule $X\step X_1\cdots X_k$ (or of the action
$X\in \vec{A}$).  Given two instances
$\lambda=\langle(\vc_0,X,\vc_k),(\vc_0,X_1,\vc_1),\ldots,
(\vc_{k-1},X_k,\vc_k)\rangle$ and
$\lambda'=\langle(\vd_0,Y,\vd_\ell),(\vd_0,Y_1,\vd_1),\ldots,
(\vd_{\ell-1},Y_\ell,\vd_\ell)\rangle$, we write $\lambda\leq\lambda'$
when $(X\step X_1\cdots X_k)$ and $(Y\step Y_1\cdots Y_\ell)$ are the same rule or
action ---entailing $k=\ell$--- and when $\vc_j\leq \vd_j$ for all
$1\leq j\leq k$. Suppose $s$ is a flow tree with immediate subtrees $s_1, \ldots,
s_k$. We write $\child(s) \sqsubseteq \child(t)$ if there is a
subtree $t'$ of $t$ with immediate subtrees $t_1, \ldots, t_l$ such that
$\root(\child(s)) \le \root(\child(t'))$ (entailing $k = l$) and inductively
$\child(s_j) \sqsubseteq \child(t_j)$ for all $1 \le j \le k$. It is easy to see
that on derived trees of flow trees, $\sqsubseteq$ coincides with the standard homeomorphic
embedding of labeled trees.
\begin{lemma}
\label{lem-FG-kruskal}
$s\leq_G t$ if, and only if, $\root(s)\leq\root(t) \land
\child(s)\sqsubseteq\child(t)$.
\end{lemma}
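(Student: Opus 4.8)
The plan is to prove the equivalence by induction on the structure of $s$ (for all $t$ simultaneously), peeling off one level of each of the two recursive definitions involved: Definition~\ref{def-leq_G} for $\leq_G$, and the recursive definition of $\sqsubseteq$ on $\child$-trees stated just above. All the real content is concentrated in two elementary observations about the translation $\child$. First, a tree $u$ is a subtree of $\child(t)$ if and only if $u=\child(t')$ for a subtree $t'$ of $t$, and the immediate subtrees of $\child(\theta'[t'_1,\ldots,t'_\ell])$ are exactly $\child(t'_1),\ldots,\child(t'_\ell)$; both facts are immediate from the definition of $\child$. Second --- and this is where the enriched labels earn their keep --- for flow trees $s=\sigma[s_1,\ldots,s_k]$ and $t'=\theta'[t'_1,\ldots,t'_\ell]$ one has $\root(\child(s))\leq\root(\child(t'))$ in the instance ordering if and only if $k=\ell$, $\sigma\leq\theta'$, and $\root(s_j)\leq\root(t'_j)$ for all $j$. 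Indeed, with $\sigma=\vc_0\xrightarrow{X}\vc_k$ and $\root(s_j)=\vc_{j-1}\xrightarrow{X_j}\vc_j$ (and symmetrically for $t'$), the label $\langle\root(s),\root(s_1),\ldots,\root(s_k)\rangle$ of $\root(\child(s))$ is exactly the instance $\langle(\vc_0,X,\vc_k),(\vc_0,X_1,\vc_1),\ldots,(\vc_{k-1},X_k,\vc_k)\rangle$ of the production $X\to X_1\cdots X_k$ (or of the action $X$ when $k=0$); comparing two such labels therefore amounts to requiring the same underlying rule or action --- which forces all the $X_j$ to agree and hence $k=\ell$ --- together with the pointwise ordering of the recorded configurations $\vc_0,\ldots,\vc_k$, and the latter is precisely $\sigma\leq\theta'$ conjoined with $\root(s_j)\leq\root(t'_j)$ for all $j$.

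For the forward direction, suppose $s\leq_G t$ with $s=\sigma[s_1,\ldots,s_k]$. Definition~\ref{def-leq_G} gives $\sigma\leq\root(t)$, i.e.\ $\root(s)\leq\root(t)$, together with a subtree $t'=\theta'[t'_1,\ldots,t'_k]$ of $t$ such that $\sigma\leq\theta'$ and $s_j\leq_G t'_j$ for all $j$. Applying the induction hypothesis to each $s_j\leq_G t'_j$ yields $\root(s_j)\leq\root(t'_j)$ and $\child(s_j)\sqsubseteq\child(t'_j)$. By the second observation, $\sigma\leq\theta'$ and the orderings $\root(s_j)\leq\root(t'_j)$ give $\root(\child(s))\leq\root(\child(t'))$; by the first observation, $\child(t')$ is a subtree of $\child(t)$ whose immediate subtrees are $\child(t'_1),\ldots,\child(t'_k)$. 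The recursive clause defining $\sqsubseteq$ then delivers $\child(s)\sqsubseteq\child(t)$.

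For the backward direction, suppose $\root(s)\leq\root(t)$ and $\child(s)\sqsubseteq\child(t)$, again writing $s=\sigma[s_1,\ldots,s_k]$. Unfolding $\sqsubseteq$ once produces a subtree of $\child(t)$, which by the first observation is $\child(t')$ for some subtree $t'=\theta'[t'_1,\ldots,t'_\ell]$ of $t$, with $\root(\child(s))\leq\root(\child(t'))$ --- so in particular $\ell=k$ --- and $\child(s_j)\sqsubseteq\child(t'_j)$ for all $j$. The second observation turns $\root(\child(s))\leq\root(\child(t'))$ into $\sigma\leq\theta'$ and $\root(s_j)\leq\root(t'_j)$ for all $j$, whereupon the induction hypothesis gives $s_j\leq_G t'_j$. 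Together with $\root(s)\leq\root(t)$, i.e.\ $\sigma\leq\root(t)$, the subtree $t'$ witnesses $s\leq_G t$ via Definition~\ref{def-leq_G}.

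The only step that needs genuine care is the second observation: checking that the instance ordering on the enriched labels faithfully records the pointwise order of \emph{all} the transition endpoints and the coincidence of the underlying production. This is routine index-chasing once one lines up $\vc_0,\ldots,\vc_k$ with their counterparts in $t'$, but one should handle the degenerate leaf case $k=0$ explicitly --- there the enriched label is a single action (or $\varepsilon$-rule) instance and the ordering collapses to equality of the action together with the order of the two endpoints --- so that the vacuous ``$\root(s_j)\leq\root(t'_j)$ for all $j$'' conjunct is harmless. Everything else is a direct unwinding of the two definitions.
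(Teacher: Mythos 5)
Your proof is correct and follows essentially the same route as the paper: structural induction on $s$, unfolding one level of Definition~\ref{def-leq_G} on one side and one level of the recursive characterization of $\sqsubseteq$ on $\child$-trees on the other, with the bridge being that the instance ordering on the enriched root labels is equivalent to $\sigma\leq\theta'$ together with $\root(s_j)\leq\root(t'_j)$ for all $j$ (and $k=\ell$). You spell out this bridge and the leaf case more explicitly than the paper does, but the argument is the same.
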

\begin{proof} We assume
$s=\sigma[s_1,\ldots,s_k]$, $t=\theta[t_1,\ldots,t_\ell]$ and prove
the claim by structural induction.\\
\noindent
${\implies}$: Assume $s\leq_G t$. Thus $\root(s)\leq\root(t)$ and, by
definition of $\leq_G$, $t$ contains a subtree
$t'=\theta'[t'_1,\ldots,t'_k]$ with
\[
\sigma\leq\theta' \land s_1\leq_G t'_1 \land \cdots \land s_k\leq_G
t'_k \:.
\]
Now this entails $\root(s)\leq\root(t')$ and
$\root(s_i)\leq\root(t'_i)$ for all $i$, as well as (by ind.\ hyp.)
$\child(s_i)\sqsubseteq\child(t'_i)$ for all $i$.  Hence
$\child(s)\sqsubseteq\child(t')$, entailing $\child(s)\sqsubseteq\child(t)$.

\noindent
${\impliedby}$: We assume $\root(s)\leq\root(t)$ and
$\child(s)\sqsubseteq\child(t)$. Hence there is a subtree $t'$ of $t$ with
immediate subtrees $t'_1, \ldots, t'_k$ such that
$\root(\child(s))\leq\root(\child(t'))$ and
$\child(s_i)\sqsubseteq \child(t'_i)$ for all $i=1,\ldots,k$. From
$\root(\child(s))\leq\root(\child(t'))$, we infer that $\root(s_i) \leq
\root(t'_i)$ for all $i=1, \ldots,k$.
Now one witnesses $s\leq_G t$ by observing that
$s_i\leq_G t_i$ by ind.\ hyp.
\end{proof}
Since the instances of rules are well-quasi-ordered by $\leq$ (there
are only finitely many rules), $\sqsubseteq$ is a well-quasi-ordering
by Kruskal's Tree Theorem~\cite{kruskal60}. 
With Lemma~\ref{lem-FG-kruskal} we immediately
infer that $\leq_G$ is a well-quasi-ordering.

\section{Monotonicity for Fast-Growing functions}
\label{app-mon-fg}

We give detailed proofs for the two monotonicity lemmas stated after the
definitions of the $F_\alpha$ functions in section~\ref{sec-hypack}.

\gdef\thesection{\ref*{sec-hypack}}
\setcounter{thm}{1}

\begin{lemma}
For any ordinals $\alpha,\alpha'<\epsilon_0$ and any $n\in\Nat$,
$F_{\alpha}(n) \leq F_{\alpha\oplus\alpha'}(n)$.
\end{lemma}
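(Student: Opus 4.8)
The plan is to prove $F_{\alpha}(n) \leq F_{\alpha \oplus \alpha'}(n)$ by a nested induction, the outer induction being on $\alpha'$ (with respect to the ordinal ordering $<$) and the inner one on $\alpha$. The base case $\alpha' = 0$ is trivial since $\alpha \oplus 0 = \alpha$. For the inductive step I would split on whether $\alpha'$ is a successor or a limit ordinal, and in each case relate $\alpha \oplus \alpha'$ to a slightly smaller instance covered by the induction hypothesis.

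First I would handle the successor case $\alpha' = \beta' + 1$. Writing $\gamma = \alpha \oplus \beta'$, we have $\alpha \oplus \alpha' = \gamma + 1$ (the new summand $\omega^0$ goes at the very end of the CNF). By the inductive definition \eqref{eq-F1}, $F_{\gamma+1}(n) = F_{\gamma}^{\,n+1}(n)$, so using strict expansiveness \eqref{eq-F-exp} repeatedly gives $F_{\gamma}(n) \leq F_{\gamma}^{\,n+1}(n) = F_{\gamma+1}(n)$. Combining with the outer induction hypothesis $F_{\alpha}(n) \leq F_{\alpha \oplus \beta'}(n) = F_{\gamma}(n)$ yields the claim. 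For the limit case $\alpha' = \lambda'$, write $\lambda' = \gamma' + \omega^{\delta'}$ with $\delta' > 0$; then $\alpha \oplus \lambda'$ is itself a limit ordinal, and one checks from the definitions \eqref{eq-L1}--\eqref{eq-LL} that $(\alpha \oplus \lambda')(n) = \alpha \oplus (\lambda'(n))$ for all $n$ --- this is the one genuinely computational point, requiring a short case analysis on whether $\delta'$ is a successor or a limit, and on where the smallest summand of $\alpha$ sits relative to $\omega^{\delta'}$. Granting this identity, \eqref{eq-FL} gives $F_{\alpha \oplus \lambda'}(n) = F_{(\alpha \oplus \lambda')(n)}(n) = F_{\alpha \oplus \lambda'(n)}(n)$, and since $\lambda'(n) < \lambda'$ the outer induction hypothesis (applied with the smaller ordinal parameter $\lambda'(n)$) delivers $F_{\alpha}(n) \leq F_{\alpha \oplus \lambda'(n)}(n)$, finishing this case.

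The inner induction on $\alpha$ is needed to make the outer argument go through smoothly in the successor subcase: when invoking the outer hypothesis at parameter $\beta'$ we need $F_\alpha(n) \le F_{\alpha \oplus \beta'}(n)$ for the \emph{same} $\alpha$, which is fine, but a cleaner route is to simply note we only ever decrease $\alpha'$ and never touch $\alpha$, so a single induction on $\alpha'$ actually suffices; I would present it that way. One subtlety worth spelling out is that $\oplus$ is associative and commutative on CNFs, so $\alpha \oplus (\beta' + 1) = (\alpha \oplus \beta') + 1$ really does hold as claimed, and similarly $\alpha \oplus (\gamma' + \omega^{\delta'})$ rearranges correctly; these are routine facts about natural sum that I would state without proof.

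The main obstacle is the identity $(\alpha \oplus \lambda)(n) = \alpha \oplus (\lambda(n))$ for limit ordinals $\lambda < \epsilon_0$. It is intuitively clear because both the fundamental-sequence operation and the natural sum act on the CNF by manipulating the tail/low-order terms, and $\alpha \oplus \lambda$ keeps $\lambda$'s lowest exponent as the overall lowest exponent (unless $\alpha$ contributes something even smaller, in which case one must check the operation \eqref{eq-L1} still commutes past the extra summand coming from $\alpha$). The careful bookkeeping --- distinguishing $\delta'$ successor vs.\ limit, and tracking whether the CNF of $\alpha$ has a summand below $\omega^{\delta'}$ --- is the part that needs to be written out in full; everything else is a short application of \eqref{eq-F0}--\eqref{eq-FL}, \eqref{eq-F-exp} and \eqref{eq-F-mono}.
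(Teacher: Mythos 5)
Your overall strategy (induction on $\alpha'$, splitting on successor vs.\ limit) matches the paper's up to the successor case, but the limit case contains a genuine error, and it is precisely the point where the paper's proof needs its \emph{inner} induction on $\alpha$ that you propose to drop. The identity $(\alpha\oplus\lambda')(n)=\alpha\oplus(\lambda'(n))$ that you ``grant'' is false whenever the CNF of $\alpha$ contains a summand strictly smaller than the smallest summand of $\lambda'$. Two concrete failures: (i) if $\alpha=1$ and $\lambda'=\omega$ then $\alpha\oplus\lambda'=\omega+1$ is a successor, so $(\alpha\oplus\lambda')(n)$ is not even defined; (ii) if $\alpha=\omega$ and $\lambda'=\omega^2$ then $(\alpha\oplus\lambda')(n)=(\omega^2+\omega)(n)=\omega^2+n+1$ by \eqref{eq-L1}, whereas $\alpha\oplus\lambda'(n)=\omega\oplus\omega\cdot(n+1)=\omega\cdot(n+2)$. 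The fundamental-sequence operation always rewrites the \emph{last} (smallest) summand of the CNF, so when that summand comes from $\alpha$ rather than from $\lambda'$, the operation does not ``commute past'' $\alpha$'s contribution --- it acts on it.

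This is not a presentational issue but a structural one: when the smallest summand of $\alpha\oplus\alpha'$ originates in $\alpha$, unfolding $F_{\alpha\oplus\alpha'}$ via \eqref{eq-F1} or \eqref{eq-FL} decrements the $\alpha$-part while leaving $\alpha'$ untouched, so you need an induction hypothesis at a strictly smaller $\alpha$ with the \emph{same} $\alpha'$; a single induction on $\alpha'$ cannot supply it. The paper therefore keeps the nested induction (on $\alpha'$, then on $\alpha$) and handles four nontrivial cases: $\alpha=0$; $\alpha'$ a successor; $\alpha$ a successor (using $\alpha\oplus\alpha'=(\beta\oplus\alpha')+1$ and the inner hypothesis); and both ordinals limits, where it branches on \emph{which} of the two has the smaller last summand, using $(\alpha\oplus\alpha')(n)=\alpha\oplus\alpha'(n)$ or $(\alpha\oplus\alpha')(n)=\alpha(n)\oplus\alpha'$ accordingly and invoking the outer or the inner hypothesis respectively. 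Restoring the inner induction and these two missing cases repairs your argument; the successor-$\alpha'$ case you wrote is fine as it stands.
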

\begin{proof}
By induction on $\alpha'$, then on $\alpha$.  We first observe that, if
the claim holds for some given $\alpha$ and $\alpha'$, then it entails
$F^m_{\alpha}(n) \leq F^m_{\alpha\oplus\alpha'}(n)$ for any $m>0$
as a consequence of monotonicity, i.e., \eqref{eq-F-mono}.

We now consider several cases for $\alpha$ and $\alpha'$:
\begin{itemize}
\item
If $\alpha'=0$, then $\alpha\oplus\alpha'=\alpha$ and the claim holds
trivially.
\item
If $\alpha=0$ then the claim becomes $F_0(n)\leq F_{\alpha'}(n)$, which
holds since $F_0(n)=n+1$ by \eqref{eq-F0} and $n+1\leq F_{\alpha'}(n)$ by
\eqref{eq-F-exp}.
\item
If $\alpha'=\beta'+1$ is a successor then $\alpha\oplus\alpha'$ is
$(\alpha\oplus\beta')+1$ and we have $F_{\alpha}(n)\leq
F_{\alpha\oplus\beta'}(n)$ by ind.\ hyp., $\leq
F_{\alpha\oplus\beta'}^{n+1}(n)$ by \eqref{eq-F-exp}, $=
F_{\alpha\oplus\beta'+1}(n)=F_{\alpha\oplus\alpha'}(n)$, and we are
done.
\item
If $\alpha=\beta+1$ is a successor then $\alpha\oplus\alpha'$ is
$(\beta\oplus\alpha')+1$ and we have $F_\alpha(n)
=F_{\beta}^{n+1}(n) \leq F_{\beta\oplus\alpha'}^{n+1}(n)$ by
ind.\ hyp., $=F_{\alpha\oplus\alpha'}(n)$.
\item
The only remaining possibility is that both $\alpha$ and $\alpha'$ are
limit ordinals. Then $(\alpha\oplus\alpha')(n)$ is
$\alpha\oplus\alpha'(n)$ or $\alpha(n)\oplus\alpha'$, depending on which
limit has the CNF with smallest last summand.  In the first case we
have $F_\alpha(n)\leq F_{\alpha\oplus\alpha'(n)}(n)$ by
ind.\ hyp.\ since $\alpha'(n)<\alpha'$,
$=F_{(\alpha\oplus\alpha')(n)}(n)=F_{\alpha\oplus\alpha'}(n)$.  In the
second case we have $F_\alpha(n)=F_{\alpha(n)}(n)\leq
F_{\alpha(n)\oplus\alpha'}(n)$ by ind.\ hyp.\ since $\alpha(n)<\alpha$,
$=F_{(\alpha\oplus\alpha')(n)}(n)=F_{\alpha\oplus\alpha'}(n)$.
\qedhere
\end{itemize}
\end{proof}

\begin{lemma}
For any ordinal $\alpha<\epsilon_0$ and limit ordinal $\lambda<\omega^\omega$,
for any $m, n\in\Nat$, if $m \leq n$ then
$F_{\alpha\oplus\lambda(m)}(n)
\leq
F_{\alpha\oplus\lambda}(n)$.
\end{lemma}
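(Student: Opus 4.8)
The plan is to prove Lemma~\ref{lem-Fmono-alphaSeq} by induction on the limit ordinal $\lambda<\omega^\omega$, using the structure of its Cantor normal form, and in each case reducing to Lemma~\ref{lem-Fmono-alpha} and the monotonicity~\eqref{eq-F-mono} and expansiveness~\eqref{eq-F-exp} of the $F_\beta$ functions. Since $\lambda<\omega^\omega$, we may write $\lambda=\gamma+\omega^{k}$ for some finite $k\geq 1$ and some $\gamma$ whose CNF has all exponents $\geq k$. There are two cases according to whether $k=1$ (so $\lambda$ is of the form $\gamma+\omega$) or $k>1$ (so $\omega^{k}$ is itself a limit).

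First I would treat the base-type case $\lambda=\gamma+\omega$. By~\eqref{eq-L1}, $\lambda(m)=\gamma+\omega^{0}\cdot(m+1)=\gamma+(m+1)$. Then $\alpha\oplus\lambda(m)=\alpha\oplus\gamma\oplus(m+1)$ and $\alpha\oplus\lambda=\alpha\oplus\gamma\oplus\omega$. Writing $\delta=\alpha\oplus\gamma$, I must show $F_{\delta\oplus(m+1)}(n)\leq F_{\delta\oplus\omega}(n)$ when $m\leq n$. Now $\delta\oplus(m+1)$ is the $(m+1)$-fold successor of $\delta$, so $F_{\delta\oplus(m+1)}(n)=F_{\delta\oplus m+1}(n)=F_{\delta\oplus m}^{\,n+1}(n)$ by~\eqref{eq-F1}. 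On the other hand $(\delta\oplus\omega)(n)=\delta\oplus\omega(n)=\delta\oplus(n+1)=\delta\oplus n+1$, so $F_{\delta\oplus\omega}(n)=F_{(\delta\oplus\omega)(n)}(n)=F_{\delta\oplus n}^{\,n+1}(n)$ by~\eqref{eq-FL} and~\eqref{eq-F1}. Since $m\leq n$, Lemma~\ref{lem-Fmono-alpha} gives $F_{\delta\oplus m}(x)\leq F_{\delta\oplus m\oplus(n-m)}(x)=F_{\delta\oplus n}(x)$ for all $x$; iterating this $n+1$ times (using~\eqref{eq-F-mono} to compose, exactly as in the opening remark of the proof of Lemma~\ref{lem-Fmono-alpha}) yields $F_{\delta\oplus m}^{\,n+1}(n)\leq F_{\delta\oplus n}^{\,n+1}(n)$, which is the desired inequality.

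For the inductive case $k>1$, write $\lambda=\gamma+\omega^{k}$ with $\omega^{k}$ a limit ordinal. By~\eqref{eq-LL}, $\lambda(m)=\gamma+\omega^{k}(m)=\gamma+\omega^{\omega^{k-1}(m)}$ hmm — more carefully, within $\omega^\omega$ the relevant instance of~\eqref{eq-LL} is $(\gamma+\omega^{k})(m)=\gamma+\omega^{k}$ where one applies the fundamental sequence to the exponent; concretely $\omega^{k}(m)=\omega^{k-1}\cdot(m+1)$ by~\eqref{eq-L1} since $k-1\geq 1$, so $\lambda(m)=\gamma+\omega^{k-1}\cdot(m+1)$. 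Thus $\lambda(m)=\mu(m)$ where $\mu=\gamma+\omega^{k}$ — no: the cleaner route is to observe $\lambda=\gamma+\omega^{k}$ and that $\lambda(m)=(\gamma+\omega^{k-1}\cdot m)+\omega^{k-1}$ hmm. Rather than chase these rewritings, I would instead do the induction directly on $\lambda$ using the two defining clauses: if $\lambda=\gamma+\omega^{\beta+1}$ with $\beta\geq 0$, then $\lambda(m)=\gamma+\omega^{\beta}\cdot(m+1)$; and if $\lambda=\gamma+\omega^{\mu}$ with $\mu$ a limit, then $\lambda(m)=\gamma+\omega^{\mu(m)}$. The $\beta=0$ subcase is exactly the base-type case above. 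For $\beta>0$, set $\nu=\gamma+\omega^{\beta}\cdot m$, a limit ordinal $<\omega^\omega$ with $\nu(n)=\gamma+\omega^{\beta}\cdot m+\omega^{\beta-1}\cdot(n+1)$ if $\beta>1$, and I would relate $F_{\alpha\oplus\lambda(m)}(n)=F_{\alpha\oplus\nu\oplus\omega^{\beta}}(n)$ — wait, $\lambda(m)=\nu+\omega^{\beta}$ only when the exponents line up, which they do since all exponents in $\gamma$ are $\geq\beta+1>\beta$. So $\alpha\oplus\lambda(m)=\alpha\oplus\nu\oplus\omega^{\beta}$ where $\nu=\gamma\oplus(\omega^{\beta}\cdot m)$. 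Then by~\eqref{eq-FL}, $F_{\alpha\oplus\lambda(m)}(n)=F_{(\alpha\oplus\nu\oplus\omega^{\beta})(n)}(n)=F_{\alpha\oplus\nu\oplus(\omega^{\beta-1}\cdot(n+1))}(n)$, while $F_{\alpha\oplus\lambda}(n)=F_{(\alpha\oplus\gamma\oplus\omega^{\beta+1})(n)}(n)=F_{\alpha\oplus\gamma\oplus(\omega^{\beta}\cdot(n+1))}(n)$. Since $m\leq n$ we have $\omega^{\beta}\cdot m\leq\omega^{\beta}\cdot n$ as summands, so $\alpha\oplus\nu\oplus(\omega^{\beta-1}\cdot(n+1))$ is "$\oplus$-below" $\alpha\oplus\gamma\oplus(\omega^{\beta}\cdot(n+1))$ in the sense that the latter is the former $\oplus$'d with $\omega^{\beta}\cdot(n-m)$ minus $\omega^{\beta-1}\cdot(n+1)$ — this is where I would need to be careful, rewriting $\omega^{\beta}\cdot(n+1) = \omega^{\beta}\cdot m \oplus \omega^{\beta}\cdot(n-m+1)$ and $\omega^{\beta}\cdot(n-m+1)\geq\omega^{\beta-1}\cdot(n+1)$, and then applying Lemma~\ref{lem-Fmono-alpha} once. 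The main obstacle is precisely this bookkeeping of natural sums of CNFs: making sure that $\alpha\oplus\lambda(m)$, once expanded via the fundamental-sequence clause at argument $n$, is literally of the form $\big(\alpha\oplus\lambda\big)$-expanded-at-$n$ with some extra ordinal $\oplus$'d in, so that a single invocation of Lemma~\ref{lem-Fmono-alpha} finishes each case. I expect no difficulty from the $F$-recursion itself; everything reduces to Lemma~\ref{lem-Fmono-alpha}, \eqref{eq-F-mono}, \eqref{eq-F-exp}, and careful ordinal arithmetic restricted to $\omega^\omega$.
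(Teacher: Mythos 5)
There is a genuine gap, and it is located exactly where the real difficulty of this lemma lives: the interaction between the CNF of $\alpha$ and the fundamental sequence of $\lambda$. Throughout your argument you compute fundamental sequences of natural sums as if the smallest summand always came from the $\lambda$-part: in the base case you write $(\delta\oplus\omega)(n)=\delta\oplus(n+1)$ with $\delta=\alpha\oplus\gamma$, and in the inductive case $(\alpha\oplus\nu\oplus\omega^{\beta})(n)=\alpha\oplus\nu\oplus\omega^{\beta-1}\cdot(n+1)$. These identities hold only if $\alpha$ contributes no summands with exponent below that of $\lambda$'s last summand. If it does, the combined ordinal is either a successor (so \eqref{eq-FL} does not apply at all) or a limit whose fundamental sequence acts on $\alpha$'s smallest summand instead. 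Concretely, take $\alpha=1$, $\gamma=0$, $\lambda=\omega$: then $\delta\oplus\omega=\omega+1$ is a successor, $F_{\omega+1}(n)=F_{\omega}^{n+1}(n)$ (indices growing at each iteration), which is not $F_{n+1}^{n+1}(n)$ as your computation asserts. This is precisely why the paper's proof proceeds by induction on $\alpha$ rather than on $\lambda$: the successor case of $\alpha$ peels off the finite part via \eqref{eq-F1}, and the limit case of $\alpha=\gamma'+\omega^{\beta}$ splits on whether $\beta\leq k$ (the fundamental sequence acts on $\alpha$, and one recurses on $\alpha(n)<\alpha$) or $\beta\geq k+1$ (it acts on $\lambda$, and one concludes directly). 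Your induction on $\lambda$ only covers the case where $\alpha$ does not interfere, which is the easy part of the statement.

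There is a second problem in your inductive case: the final "single invocation of Lemma~\ref{lem-Fmono-alpha}" compares $F_{\mu\oplus\omega^{\beta-1}\cdot(n+1)}(n)$ with $F_{\mu\oplus\omega^{\beta}\cdot(n-m+1)}(n)$, on the grounds that $\omega^{\beta}\cdot(n-m+1)\geq\omega^{\beta-1}\cdot(n+1)$. But Lemma~\ref{lem-Fmono-alpha} only allows \emph{adding} summands by natural sum; it does not allow \emph{replacing} a summand by an ordinal-larger one, and the paper explicitly notes that $F$ is not monotone in the ordinal index in general, so this step is not justified. The way to avoid it (used in the paper's $\alpha=0$ case and its $k+1\leq\beta$ subcase) is to exploit the identity $\lambda(n)=\lambda(m)\oplus\omega^{k}\cdot(n-m)$ \emph{before} expanding anything at level $\beta-1$: both $\lambda(m)$ and $\lambda(n)$ have their last summand at the same exponent $k$, so passing from one to the other really is a natural-sum extension, Lemma~\ref{lem-Fmono-alpha} applies, and then \eqref{eq-FL} closes the case with $F_{\lambda(n)}(n)=F_{\lambda}(n)$.
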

\begin{proof}
Let us decompose $\lambda$ under the form
$\lambda=\delta+\omega^{k+1}$ so that $\lambda(m)=\delta
+\omega^{k}\cdot (m+1)$.  We prove the lemma by induction on $\alpha$.
We first observe that,
if the claim holds for some given $\alpha$, then
$F^p_{\alpha\oplus\lambda(m)}(n)
\leq
F^p_{\alpha\oplus\lambda}(n)$
for every $p > 0$ and every $m, n\in\Nat$ such that $m \leq n$.
This observation, which is easily proved by induction on $p$,
is a consequence of strict expansivity and monotonicity, i.e.,
\eqref{eq-F-exp} and \eqref{eq-F-mono}, respectively.
\begin{itemize}
\item
If $\alpha=0$ we have
$F_{\alpha\oplus\lambda(m)}(n)
= F_{\lambda(m)}(n)
\leq F_{\lambda(n)}(n)$ by
Lemma~\ref{lem-Fmono-alpha} since $\lambda(n)=\lambda(m)\oplus
\omega^k\cdot (n-m)$,
$= F_{\lambda}(n)
= F_{\alpha\oplus\lambda}(n)$ by \eqref{eq-FL}.

\item
If $\alpha=\beta+1$ is a successor, we have
$F_{\alpha\oplus\lambda(m)}(n)
= F_{(\beta\oplus\lambda(m))+1}(n)
= F^{n+1}_{\beta\oplus\lambda(m)}(n)$ by \eqref{eq-F1},
$\leq F^{n+1}_{\beta\oplus\lambda}(n)$ by ind.\ hyp.,
$= F_{(\beta\oplus\lambda)+1}(n) = F_{\alpha\oplus\lambda}(n)$ again by
\eqref{eq-F1}.

\item
If $\alpha=\gamma+\omega^\beta$ is a limit, then $\alpha\oplus\lambda$ is
a limit too and we can compare $\omega^\beta$ and $\omega^{k+1}$, the last summands of $\alpha$ and
$\lambda$.
There are two subcases:
\begin{itemize}
\item
If $0 < \beta \leq k$ then $(\alpha\oplus \lambda)(n) = \alpha(n)\oplus \lambda$.
Moreover, $\alpha\oplus\lambda(m)$ is a limit since $0 < k$,
and $(\alpha\oplus\lambda(m))(n)=\alpha(n)\oplus\lambda(m)$.
We deduce
$F_{\alpha\oplus\lambda(m)}(n) = F_{\alpha(n)\oplus\lambda(m)}(n)$ by \eqref{eq-FL},
$\leq F_{\alpha(n)\oplus\lambda}(n)$ by ind.\ hyp.,
$= F_{(\alpha\oplus\lambda)(n)}(n) = F_{\alpha\oplus\lambda}(n)$ again by \eqref{eq-FL}.
\item
If $k+1 \leq \beta$ then $(\alpha\oplus \lambda)(n) = \alpha\oplus
\lambda(n)$.
We deduce
$F_{\alpha\oplus\lambda(m)}(n)
\leq F_{\alpha\oplus\lambda(n)}(n)$ by Lemma~\ref{lem-Fmono-alpha}
since $m\leq n$,
$= F_{(\alpha\oplus\lambda)(n)}(n)
= F_{\alpha\oplus\lambda}(n)$ by \eqref{eq-FL} and we are done.
\qedhere
\end{itemize}
\end{itemize}
\end{proof}

\end{document}